\newcommand{\mo}[1]{}
\newcommand{\benda}{\tag*{$\Box$}}
\newcommand{\fancya}{\mathcal{A}}
\newtheorem{theorem}{Theorem}[section]
\newtheorem{corollary}[theorem]{Corollary}
\newtheorem{proposition}[theorem]{Proposition}
\newtheorem{prop}[theorem]{Proposition}
\newtheorem{lemma}[theorem]{Lemma}
\theoremstyle{definition}
\newtheorem{example}[theorem]{Example}
\newtheorem{definition}[theorem]{Definition}
\newtheorem{remark}[theorem]{Remark}
\newcommand{\HOM}[1]{\textsc{Hom}(#1)}
\newcommand{\HOMP}[1]{p\textsc{-Hom}(#1)}
\newcommand{\F}{\mathcal{F}}
\newcommand{\T}{\mathcal{T}}
\renewcommand{\P}{\mathcal{P}}
\renewcommand{\L}{\mathcal{L}}
\newcommand{\R}{\mathcal{R}}
\newcommand{\fG}{\mathcal{G}}
\newcommand{\fH}{\mathcal{H}}
\newcommand{\fC}{\mathcal{C}}
\newcommand{\fM}{\mathcal{M}}
\newcommand{\core}{\mathsf{core}}
\newcommand{\minors}{\mathsf{minors}}
\newcommand{\rela}{\mathbf{A}}
\newcommand{\relb}{\mathbf{B}}
\newcommand{\relc}{\mathbf{C}}
\newcommand{\reld}{\mathbf{D}}
\newcommand{\relt}{\mathbf{T}}
\newcommand{\relf}{\mathbf{F}}
\newcommand{\relg}{\mathbf{G}}
\newcommand{\relh}{\mathbf{H}}
\newcommand{\reli}{\mathbf{I}}
\newcommand{\relm}{\mathbf{M}}
\newcommand{\reln}{\mathbf{N}}
\newcommand{\refl}{\mathsf{refl}}
\newcommand{\nats}{\mathbb{N}}
\newcommand{\rt}{\mathrm{root}}
\newcommand{\ar}{\mathrm{ar}}
\newcommand{\gr}{\mathsf{graph}}
\newcommand{\dom}{\mathsf{dom}}
\newcommand{\im}{\mathsf{im}}
\newcommand{\str}{\textup{STR}}
\newcommand{\res}{\upharpoonright}
\newcommand{\vto}{\stackrel{v}{\to}}
\newcommand{\hto}{\stackrel{h}{\to}}
\newcommand{\qr}{\mathsf{qr}}
\newcommand{\hi}{\mathsf{hi}}
\newcommand{\qfred}{\leq^{\textup{qf}}}
\newcommand{\qfeq}{\equiv^{\textup{qf}}}
\renewcommand{\qfred}{\le_{\mathit{qfap}}}
\renewcommand{\qfeq}{\equiv_{\mathit{qfap}}}
\newcommand{\In}{\mathrm{in}}
\newcommand{\pl}{\textit{pl}}
\newcommand{\numb}[1]{\ulcorner #1\urcorner}
\newcommand{\MC}[1]{p\textsc{-MC}(#1)}
\newcommand{\PP}[2]{\textsc{PP}^{#1}[#2]}
\newcommand{\DPP}[2]{\textsc{DPP}^{#1}[#2]}
\begin{document}

\title{\bf One Hierarchy Spawns Another:\\
Graph Deconstructions and 
the Complexity Classification of Conjunctive Queries\footnote{An extended abstract of this work has appeared in \cite{lics}.}}

\author{
Hubie Chen\\
\small Universidad del Pa\'{i}s Vasco\\[-0.5ex]
\small San Sebasti\'an, Spain\\[-0.5ex]
\small IKERBASQUE, Bilbao, Spain \\[-0.5ex]
\small \texttt{hubie.chen@ehu.es}
\and
Moritz M\"uller\\
\small Kurt G\"odel Research Center\\[-0.5ex]
\small Universit\"at Wien, Austria\\[-0.5ex]
\small \texttt{moritz.mueller@univie.ac.at}
}
\date{}
\maketitle

\begin{abstract}
 We study the problem 
of conjunctive query evaluation relative to a class of queries; 
this problem is formulated here
as the relational homomorphism problem relative to a class of structures~$\mathcal{A}$,
wherein each instance must be a pair of structures such that the first
structure is an element of~$\mathcal{A}$.
We present a comprehensive complexity classification of these problems,
which strongly links graph-theoretic properties of~$\mathcal{A}$ to the 
complexity of the corresponding homomorphism problem.
In particular, we define a binary relation on graph classes,
which is a preorder,
and completely describe the resulting hierarchy given by this relation.
This relation is defined in terms of a notion which we call
\emph{graph deconstruction} and which is a variant of the well-known 
notion of tree decomposition.
We then use this  hierarchy of graph classes to infer a complexity hierarchy 
of homomorphism problems which is comprehensive up to
a computationally very weak notion of reduction, namely,
a parameterized version of quantifier-free first-order reduction.
In doing so, we obtain a significantly refined complexity classification
of homomorphism problems,
as well as a unifying, modular, and conceptually clean
treatment of existing complexity classifications.
We then present and develop the theory of 
Ehrenfeucht-Fra\"{i}ss\'{e}-style 
pebble games 
which solve the homomorphism problems where
the cores of the structures in $\mathcal{A}$ have bounded tree depth.
Finally, we use our framework to classify the complexity of 
model checking existential sentences having bounded quantifier rank.

\medskip

\noindent{\bf Keywords:} Conjunctive queries, Homomorphisms, Graph decompositions, Parame-
terized complexity

\medskip

\noindent{\bf AMS subject classifications:} 
05C75, 
05C83, 
03B70, 
68Q17, 
68Q19. 
\end{abstract}

\section{Introduction}

\emph{Conjunctive queries} are basic and heavily studied
database queries, and can be viewed logically as formulas consisting
of a sequence of existentially quantified variables,
followed by a conjunction of atomic formulas on those variables.
Since the 1977 article of Chandra and Merlin~\cite{ChandraMerlin77-optimal},
complexity-theoretic aspects of conjunctive queries
have been a research subject of persistent and enduring
interest which continues to the present day
(as discussed and evidenced, for example, by the 
works~\cite{AbiteboulHullVianu95-foundationsdatabases,KolaitisVardi00-containment,PapadimitriouYannakakis99-database,GottlobLeoneScarcello01-complexityacyclic,GottlobLeoneScarcello02-hypertreedecomposisions,Grohe07-otherside,SchweikardtSchwentickSegoufin09-querylanguages,Marx10-tractablehypergraph}).
In this article, we study
\emph{conjunctive query evaluation}, 
which is 
the problem of evaluating a conjunctive query on a relational structure.
Conjunctive query evaluation is indeed equivalent to a number of 
well-known and well-studied problems, including
the homomorphism problem on relational structures,
the constraint satisfaction problem,
and conjunctive query containment~\cite{ChandraMerlin77-optimal,KolaitisVardi00-containment}.
That this problem appears in many equivalent guises
attests to its fundamental, primal nature, and
this problem has correspondingly been approached and studied
from a wide variety of perspectives and motivations.

As has been eloquently articulated 
in the literature~\cite{PapadimitriouYannakakis99-database},
the employment of classical complexity notions such as
polynomial-time tractability to grade the complexity of
conjunctive query evaluation is not totally satisfactory:
 a typical scenario---for example, in the database context---is 
the evaluation of a relatively short query
on a relatively large structure.
This asymmetry between the two parts of the input 
suggests 
a notion of complexity
wherein one relaxes the dependence on the query.
For example, in measuring time complexity, one might allow
a non-polynomial dependence on the query while enforcing
a polynomial dependence on the structure.
{\em Parameterized complexity theory}~\cite{FlumGrohe06-parameterizedcomplexity,DF} is a comprehensive framework
for studying problems where each instance has an associated parameter,
and arbitrary dependence in the parameter is permitted;
in the query evaluation setting, the query 
(or the query length)
is naturally 
taken to be the parameter.  We use and focus on this viewpoint
in the present article, and hence in this discussion.

Conjunctive query evaluation is known to be computationally intractable
in general, and consequently a recurring theme in the study of 
this problem
is the identification of structural properties of
conjunctive queries that
provide tractability or other computationally desirable behaviors.
A well-studied framework in which to seek such properties is
the family of parameterized homomorphism problems
$\HOMP{\fancya}$, for classes of relational structures $\fancya$:
the problem $\HOMP{\fancya}$
is to decide,
given a relational structure $\rela$ from $\fancya$ 
and another relational structure $\relb$, whether 
there is a homomorphism from $\rela$ to $\relb$;
the parameter here is the first structure $\rela$.
Studying this problem family amounts to studying conjunctive query evaluation
on various classes of conjunctive queries, as
it is a classical fact that each boolean conjunctive query $\phi$ can
be bijectively represented as a structure $\rela$ in such a way that,
for any structure $\relb$, it holds that $\phi$ is true on $\relb$ if
and only if $\rela$ admits a homomorphism to
$\relb$~\cite{ChandraMerlin77-optimal}. 
We will focus on the case where $\fancya$ has bounded arity, and assume this property throughout this discussion.

\subsubsection*{Known classifications.}
An exact description of the tractable problems 
of the form
$\HOMP{\fancya}$ was presented by Grohe~\cite{Grohe07-otherside}.
In particular, a known sufficient condition for fixed-parameter
tractability of $\HOMP{\fancya}$ was that
the \emph{cores}
of $\fancya$ have 
\emph{bounded treewidth}~\cite{DalmauKolaitisVardi02-treewidth};
Grohe completed the picture by showing
that for any class $\fancya$ not satisfying this condition,
the problem $\HOMP{\fancya}$ is W[1]-complete.
Fixed-parameter tractability 
is a parameterized relaxation of polynomial-time tractability,
and W[1]-hardness can be conceived of as a parameterized analog
of NP-hardness.
Intuitively, the \emph{core} of a structure is an 
equivalent structure
of minimal size.

Later, a classification of the tractable $\HOMP{\fancya}$ problems
was presented~\cite{ChenMueller13-fineclass-arxiv}.
This classification is exhaustive 
 up to parameterized logarithmic space
reductions;
parameterized logarithmic space 
(para-L)
relaxes logarithmic space
in a way analogous to that in which
fixed-parameter tractability relaxes polynomial time.
Let $\T$ denote the class of all trees,
let $\P$ denote the class of all paths,
and for a class of structures $\fancya$, let $\fancya^*$ be
the class of structures obtainable by taking a structure
$\rela$ in $\fancya$ and giving each element its own color.
The classification states that each tractable
$\HOMP{\fancya}$ problem is 
para-L equivalent to $\HOMP{\T^*}$, para-L equivalent to $\HOMP{\P^*}$,
or decidable in para-L. 
The properties determining which of the three behaviors occurs
are \emph{bounded pathwidth} and \emph{bounded tree depth},
established graph-theoretical properties.

\subsection{Contributions}

In this article, we present a significantly refined  complexity
classification of the homomorphism problems $\HOMP{\fancya}$ which is
exhaustive up to an extremely simple and computationally weak notion of
reduction based on quantifier-free interpretations  from first-order
logic. Our classification generalizes the just-described known
classifications, and indeed 
our present study provides a uniform, modular, and self-contained
treatment thereof. 

After presenting the classification, we introduce and 
develop the theory of
Ehrenfeucht-Fra\"{i}ss\'{e}-style 
pebble games for solving
the problems lying at the
lower end of our hierarchy; in doing so,
we obtain a characterization of the homomorphism problems
$\HOM{\fancya}$ solvable in classical logarithmic 
space.  As a further technical contribution, we utilize our 
framework to analyze 
the complexity of model checking existential sentences.

We now give an overview of and further details on the contributions and obtained results;  
we refer the reader to the technical sections for precise statements.

\subsubsection*{A graph-theoretic hierarchy.}
Previous work~\cite{ChenMueller13-fineclass-arxiv} 
related the complexity of conjunctive queries
to the named graph-theoretic properties by 
showing that certain relationships on graph classes implied
reductions for the corresponding homomorphism problems, 
for example:
\begin{itemize}\itemsep=0pt

\item If $\fG$ is a graph class and $\fM$ is the class of minors of graphs in $\fG$,
then $\HOMP{\fM^*}$ reduces to $\HOMP{\fG^*}$.

\item 
If the members of a graph class $\fG$ have bounded width tree decompositions whose trees lie in a 
graph class $\fH$, then $\HOMP{\fG^*}$ reduces to $\HOMP{\fH^*}$.

\end{itemize}
Another known and important
reduction~\cite{GroheSchwentickSegoufin01-conjunctivequeries} 
is as follows:
\begin{itemize}\itemsep=0pt

\item When $\R$ is the class of all grids and $\fG$ is any graph
class, then $\HOMP{\fG^*}$ reduces to $\HOMP{\R^*}$.

\end{itemize}

We give a unified explanation for all of these reductions by defining
a binary relation $\preceq$ on graph classes.
This relation has the key property:
\begin{equation}\label{eq:introreduction}
\textup{If $\fG \preceq \fH$, then $\HOMP{\fG^*}$ reduces to $\HOMP{\fH^*}$.}
\end{equation}
This key property is shown to imply the three just-named results.

The definition of the relation $\preceq$ is simple and is based on 
a notion which we call \emph{graph deconstruction} 
and which 
is strongly related to and inspired by the notion of tree decomposition.
When $\relg$ and $\relh$ are graphs,
we define an $\relh$-deconstruction of $\relg$
to be a family $(B_h)_{h \in H}$ of subsets of the vertex set of $\relg$
which is indexed by the vertex set $H$ of $\relh$ and which
satisfies properties similar to those in the
definition of tree decomposition
(Definition~\ref{def:deconstruction});
note that here, the graph $\relh$ is not restricted to be a tree,
as it is in the definition of tree decomposition.
Each such deconstruction $(B_h)_{h \in H}$ 
has associated with it a measure called
its \emph{width} which is based on the sizes of the subsets $B_h$.
For graph classes~$\fG$ and~$\fH$, we define
$\fG \preceq \fH$ if and only if there is a constant $w$
where for each graph~$\relg$ in $\fG$, there is a graph $\relh$ in $\fH$
such that $\relg$ has an $\relh$-deconstruction of width 
at most $w$.

We describe completely the hierarchy that this relation yields
on graph classes.
Define~$\T_n$ to be the class of trees of height at most~$n$;
$\F_n$ to be the class of forests of height at most $n$; and
$\L$ to be the class of all graphs.
We present the following hierarchy: 
\begin{center}
$\T_0 \preceq \F_0 \preceq \T_1 \preceq \F_1 \preceq \cdots
\preceq \P \preceq \T \preceq \L.$
\end{center}
We prove that this hierarchy is strict and \emph{comprehensive} in
that every graph class is equivalent to exactly one of the classes in the hierarchy (Theorem~\ref{thm:graph-hierarchy}).
To understand the upper levels of the hierarchy, we 
use 
known excluded minor theorems.
To determine the lower part of the hierarchy (below $\P$), 
we introduce a new complexity measure on graph classes $\fG$
which we call \emph{stack depth}: 
the maximum $d$ such that all depth $d$ trees are minors of $\fG$; this equals
the minimum depth of forests that allow for
bounded width decompositions of the graphs in $\fG$
(Lemma~\ref{lemma:stack-depth-placement}).
%

\subsubsection*{A complexity-theoretic hierarchy.}
Having understood the 
relation $\preceq$ on graph classes,
we then turn to study homomorphism problems.
As mentioned, we prove that
$\fG \preceq \fH$ implies $\HOMP{\fG^*}$ reduces to $\HOMP{\fH^*}$ (Theorem~\ref{theo:dechom});
this property was indeed a primary motivation for the definition of
$\preceq$.  
We prove this with respect to a computationally weak notion of reduction
that we call \emph{quantifier-free after a precomputation (qfap)}.
This notion of reduction naturally unifies and
incorporates two modes of computation that have long been studied.
For each parameter, this reduction provides 
a quantifier-free, first-order \emph{interpretation}
that defines the output instance
in the input instance;
interpretations as reductions have a tradition in descriptive 
complexity theory~\cite{immerman,EbbinghausFlum95-finitemodeltheory}, and Dawar and He studied 
a type of quantifier-free interpretation
in the parameterized setting~\cite{DawarHe09-logicalreductions}.
In addition, our reduction allows for \emph{precomputation}
 on the parameter of an input instance.
This follows an established schema in the definition of 
parameterized modes of computation: 
fixed-parameter tractability can be defined as polynomial time
after a precomputation, and para-L can be defined as logarithmic space
after a precomputation~\cite{FlumGrohe03-describing}.  

Let $\fancya$ be an arbitrary class of bounded-arity structures. We prove that
$\HOMP{\fancya}$ is equivalent under qfap-reductions
to $\HOMP{\fG^*}$, where $\fG$ is the class of graphs of the cores of
the structures in $\fancya$ (Theorem~\ref{theo:coregraphhom}).
By this theorem, property~\eqref{eq:introreduction} and 
our description of the graph hierarchy, we obtain:
\begin{center}
Every problem of the form $\HOMP{\fancya}$
is equivalent, under qfap-reduction, \\
to a problem $\HOMP{\fH^*}$,
where $\fH$ is a graph class from the graph hierarchy.
\end{center}
This interestingly implies that, with respect to qfap-reduction, the 
complexity degrees attained by problems $\HOMP{\fancya}$ are
linearly ordered according to the classes in our graph hierarchy,
in particular, linearly ordered in a sequence of order type~$\omega +
3$.

In brief, our approach for understanding the family of problems $\HOMP{\fancya}$ is to present a graph hierarchy and then show that this hierarchy induces a complexity hierarchy for these problems. 
\emph{We wish to emphasize the unifying nature, the modularity, and the conceptual cleanliness of this approach.}
Our definition and presentation of the graph hierarchy cleanly and neatly encapsulates the graph-theoretic content needed to present the complexity hierarchy. We obtain a uniform and self-contained derivation of the mentioned known classifications~\cite{GroheSchwentickSegoufin01-conjunctivequeries,Grohe07-otherside,ChenMueller13-fineclass-arxiv}, which derivation we find to be clearer and simpler than those of the original works.
This uniform derivation is a testament to 
the utility of the graph hierarchy,
and we view the simple definitions of 
\emph{graph deconstruction} and
the relation $\preceq$,
as well as the development of their basic theory, 
as conceptual contributions.

Our treatment strengthens the known classifications, since the problems classified as being computationally equivalent are here shown to be so under qfap-reductions; in particular, it follows from our treatment that all of the  W[1]-complete problems from Grohe's theorem~\cite{Grohe07-otherside} are pairwise equivalent under qfap-reductions and hence in  a very strong sense.

\subsubsection*{A proof of Grohe's theorem.}
We present a modular, conceptually concise, and relatively transparent
proof of Grohe's celebrated hardness result
(Section~\ref{sect:grohes-theorem}).
In particular, we make simple use of our notion of
graph deconstruction, and we obtain this hardness result
essentially as the composition of three readily comprehensible
polynomial-time reductions.

\subsubsection*{Consistency, pebble games, and logarithmic space.} 
The study of so-called consistency algorithms has a long tradition
in research on constraint satisfaction and homomorphism problems.
Such algorithms are typically efficient and simple heuristics
that can detect inconsistency (that is, that an instance is a \emph{no}
instance) and are based on local reasoning.
Identifying cases of these problems where
such algorithms provide a sound and complete decision procedure
has been a central theme in the tractability theory of these problems
(see for example~\cite{KolaitisVardi00-gametheoretic,DalmauKolaitisVardi02-treewidth,BartoKozik09-boundedwidth,ChenDalmau05-hypertree,BulatovKrokhinLarose08-dualities,AtseriasWeyer-Consistency,ChenDalmauGrussien13-ACandfriends}).

Seminal work of Koalitits and Vardi~\cite{KolaitisVardi00-gametheoretic} 
showed that certain natural consistency algorithms 
could be viewed as
determining the winner in 
certain Ehrenfeucht-Fra\"{i}ss\'{e} type pebble games~\cite{EbbinghausFlum95-finitemodeltheory}. 
Since this work, there has been sustained effort devoted
to presenting pebble games that solve cases 
of the homomorphism problem.
For example, 
there is a study of
pebble games that solve $\HOMP{\fancya}$ 
when the class $\fG$ of graphs of structures from $\fancya$ has bounded treewidth 
\cite{DalmauKolaitisVardi02-treewidth,AtseriasBulatovDalmau-kconsistency},
and also when $\fG$ has bounded pathwidth~\cite{Dalmau05-linear}.

We complete the picture by presenting natural pebble games
that are shown to solve
$\HOMP{\fancya}$ 
when $\fG$ has bounded tree depth
(Section~\ref{sect:pebble-games}).
Our pebble games are finite-round games that can be thought of as
homomorphism variants of the classical Ehrenfeucht-Fra\"{i}ss\'{e} game.
We develop the theory of our games, showing for example that 
it is decidable, given a structure $\rela$,
whether or not a particular game solves the homomorphism problem
on $\rela$ (Theorem~\ref{thm:char-v-game-solves}).
We also show equality $\pm 1$
 between the number of pebbles needed 
to solve $\HOMP{\fancya}$ 
and the tree depth of $\fG$;
and, 
between the number of rounds needed
to solve $\HOMP{\fancya}$ 
and the stack depth of $\fG$
(Section~\ref{sect:pebble-games}).
We believe that the latter result reinforces the suggestion
that stack depth is a natural graph-theoretic measure.

We obtain a characterization of the classical homomorphism problems
$\HOM{\fancya}$ decidable in classical logarithmic space: these are
precisely those  where the cores of structures from $\fancya$ have
bounded tree depth
(Section~\ref{sect:logspace}).
This characterization is established
under a natural hypothesis from 
parameterized space complexity.

\subsubsection*{Model checking existential sentences.}
The given hierarchies, along with the notion of qfap reduction,
provide a clean and comprehensive understanding of the complexity
degrees of parameterized homomorphism problems.
 We expect that the given hierarchies can be meaningfully used to
 obtain a fine-grained understanding of the  complexity of other
 problems of independent interest.  
We show
that the hierarchy can be used to classify the complexity of model-checking existential sentences having bounded quantifier rank
(Section~\ref{sect:existential-sentences}).

\section{Preliminaries}

For $n\in\nats$ we let $[n]$ denote $\{1,\ldots,n\}$ and understand $[0]=\emptyset$.

\subsection{Structures}

A  \emph{vocabulary} is a finite set of relation symbols,
where each symbol $R$ has an associated arity $\ar(R) \in \nats$.
A {\em structure $\relb$ with vocabulary $\sigma$}, for short, a {\em $\sigma$-structure} 
is given by a non-empty set $B$ called its \emph{universe} together with 
an \emph{interpretation} $R^{\relb}\subseteq B^{\ar(R)}$ of $R$ for every $R \in \sigma$. We only consider finite structures, i.e. structures with finite universe.
When $\relb$ is a $\sigma$-structure and $S$ a non-empty subset of $B$,
we let $\langle S \rangle^{\relb}$ 
denote the $\sigma$-structure {\em induced} in $\relb$ on $S$: it has universe $S$ and
interprets every $R \in \sigma$ by $S^{\ar(R)} \cap R^{\relb}$. The class of all $\sigma$-structures is denoted by $\str[\sigma]$, and the class of all structures 
by~$\str$.
The {\em  product} $\rela\times\relb$ of two $\sigma$-structures $\rela$ and~$\relb$ has
universe $A\times B$ and interprets $R\in\sigma$ by 
$$
R^{\rela\times\relb}:=\big\{((a_1,b_1),\ldots,(a_{\ar(R)},b_{\ar(R)}))\mid \bar a\in R^\rela,
\bar b\in R^\relb\big\}.
$$

For a vocabulary $\sigma$, a (first-order) $\sigma$-formula $\varphi$ is built from {\em atoms} $Rx_1\cdots x_{\ar(R)}$ and $x=y$ where $x,y$ and the $x_i$ are 
variables and $R\in \sigma$, by means of Boolean combinations $\vee,\wedge,\neg$ and existential and universal quantification $\exists x,\forall x$. 
We write $\varphi(\bar x)$ for $\varphi$ to indicate that the free variables of $\varphi$ are among the components of $\bar x=x_1\cdots x_r$;
for a $\sigma$-structure, and $\bar a=a_1\cdots a_r\in A^r$ we write $\rela\models\varphi(\bar a)$ to indicate that $\bar a$ satisfies $\varphi(\bar x)$ in $\rela$, further 
we write $\varphi(\rela):=\{\bar a\in A^r\mid \rela\models\varphi(\bar a)\}$. Formulas without free variables are {\em sentences}.

Two structures $\rela$ and $\relb$ interpreting the same vocabulary are called {\em similar}. In this case, a {\em homomorphism} from $\rela$ to $\relb$
is a function $h: A \to B$ such that for every $R \in \sigma$
and every tuple $(a_1, \ldots, a_{\ar(R)}) \in R^{\rela}$, it holds that $(h(a_1), \ldots, h(a_{\ar(R)})) \in R^{\relb}$. We write $\rela\hto\relb$ to 
indicate that such a homomorphism exists. A {\em partial homomorphism} $h$ from $\rela$ to $\relb$ is 
either $\emptyset$ or a homomorphism from $\langle\dom(h)\rangle^{\rela}$ to $\relb$; by $\dom(h)$ we denote the domain of $h$ and by $\im(h)$ its image.

A structure $\rela$ is a \emph{core} if all homomorphisms from
$\rela$ to itself are injective.
The \emph{core of a structure} $\rela$ is a structure $\relb$
such that there is a homomorphism from $\rela$ to $\relb$, $\relb$ is a core, $B \subseteq A$, and $R^{\relb} \subseteq R^{\rela}$
for each symbol $R$.
It is well-known that each finite structure has at least one core,
and that all cores of a finite structure are isomorphic; for this reason,
one often speaks of \emph{the} core of a finite structure $\rela$,
which we denote here by $\core(\rela)$.

For example, all structures of the form $\rela^*$ are cores. Here, $\rela^*$ is the expansion 
obtained from $\rela$ by interpreting for each $a\in A$ a new unary relation symbol $C_a$ by $C^\rela_a:=\{a\}$. For a class of structures $\fancya$ we let
$$
\fancya^*:=\{\rela^*\mid\rela\in\fancya\}.
$$

\subsection{Graphs}
\label{subsect:graphs}

In this article, a \emph{graph} is a $\{E\}$-structure 
$\relg$ for a binary relation symbol $E$ such that $E^{\relg}$ is irreflexive and symmetric. A graph $\relg$ is a \emph{subgraph} of another graph $\relh$
if $G \subseteq H$ and $E^{\relg} \subseteq E^{\relh}$. 
Given a graph $\relg$ we write
$$
\refl(E^{\relg}):=E^{\relg}\cup\{(g,g)\mid g\in G \}.
$$
When $\rela$ is a $\sigma$-structure, we let $\gr(\rela)$ 
be the ``Gaifman'' graph with universe $A$ and an edge between $a,a'\in A$ if
$a\neq a'$ and there are $R\in\sigma$ and a tuple $\bar a\in R^{\rela}$ such that $a,a'$ are components of $\bar a$. 
We call $\rela$ {\em connected} if  $\gr(\rela)$ is connected. A {\em (connected) component} of $\rela$ is
a structure induced in $\rela$ on a (connected) component of $\gr(\rela)$. 

By a \emph{rooted tree}, we mean a tree that interprets
a unary relation symbol 
$\rt$ by a set containing a single element, called the \emph{root} of
the tree.
By a \emph{rooted forest}, we mean a graph $\relg$ where,
for each component $C$, the graph $\langle C \rangle^{\relg}$ is a
rooted tree.
When $a$ and $d$ are elements of a rooted tree,
we say that $a$ is an \emph{ancestor} of $d$ 
and that $d$ is a \emph{descendent} of $a$
if $a$ lies on the unique path
from the root to $d$; 
if in addition $a \neq d$, we say that
$a$ is a \emph{proper ancestor} of $d$
and that $d$ is a \emph{proper descendent} of $a$.
The \emph{height} of a rooted tree is the maximum length (number of edges)
of a  path from the root
to some vertex.
The height of a tree $\relt$ is the minimum height of all rootings of $\relt$.
The height of a forest $\relf$
is the maximum height of a (connected) component of $\relf$.

%

The \emph{tree depth} \cite{nes} of a connected graph $\relg$ is 
the minimum $h \in \nats$ such that there exists a rooted tree $\relt$ with universe $T=G$
of height $\leq h$ such that $E^\relg$ is contained in the closure of $\relt$.
The {\em closure} of $\relt$
is the set of edges $(g,g')$ such that
either $g$ is an ancestor of $g'$ in $\relt$ or vice versa.
For an arbitrary graph, its 
tree depth is defined to be the maximum tree depth taken over all components of $\relg$.

A \emph{tree decomposition} of a graph $\relg$ is a tree $\relh$ along with
an $H$-indexed family $(B_h)_{h \in H}$ of subsets of $G$ satisfying the following
conditions:
\begin{itemize}\itemsep=0pt

\item (Coverage) For every pair $(g, g') \in \refl(E^{\relg})$, there exists $h  \in H$
such that $\{ g, g' \} \subseteq B_h$.

\item 
(Connectivity) For each element $g \in G$, the set
$\{ h \mid g \in B_h \}$ is connected in $\relh$. 

\end{itemize}
The \emph{width} of a tree decomposition is $\max_{h \in H} |B_h| - 1$.
A \emph{path decomposition} of a graph is a tree decomposition
where the tree $\relh$ is a path.
The \emph{treewidth} of a graph $\relg$ is the minimum width over all 
tree decompositions of $\relg$;
likewise, the \emph{pathwidth} of $\relg$ is the minimum width over
all
path decompositions of $\relg$.
A class $\fG$ of graphs has \emph{bounded treewidth} if
there exists a constant $w$ such that each graph $\relg \in \fG$
has treewidth $\leq w$; 
the properties of \emph{bounded pathwidth} 
and \emph{bounded tree depth}
are defined similarly.

A graph $\relm$ is a \emph{minor} of a graph $\relg$
if there exists a \emph{minor map} from $\relm$ to $\relg$,
which is a map $\mu$ defined on $M$ where
\begin{itemize}\itemsep=0pt
\item for each $m \in M$,
it holds that $\mu(m)$ is a non-empty, connected subset of $G$;
\item the sets $\mu(m)$ are pairwise disjoint; and,
\item for each $(m, m') \in E^{\relm}$ there exist 
$g \in \mu(m)$ and $g' \in \mu(m')$ such that
$(g, g') \in E^{\relg}$.
\end{itemize}
When $\relg$ is a graph, we use $\minors(\relg)$ to denote the 
class of all minors of $\relg$, and we extend this notation
to a class $\fG$ setting $\minors(\fG) = \bigcup_{\relg  \in \fG} \minors(\relg)$.

The following theorem is known;
the first two parts are due to 
Robertson and Seymour's graph minor series~\cite{RobertsonV,RobertsonI}
and the third is due to 
Blumensath and Courcelle~\cite{BlumensathCourcelle10-transduction}.

\begin{theorem} 
\label{thm:excluded}
Let $\fG$ be a class of graphs.

\begin{enumerate}\itemsep=0pt

\item (Excluded grid theorem) $\fG$ has bounded treewidth
if and only if $\minors(\fG)$ does not contain all grids.

\item (Excluded tree theorem) $\fG$ has bounded pathwidth
if and only if $\minors(\fG)$ does not contain all trees.

\item (Excluded path theorem) $\fG$ has bounded tree depth
if and only if $\minors(\fG)$ does not contain all paths.

\end{enumerate}
\end{theorem}


\begin{proposition}
\label{prop:depth-gives-decomposition}
Let $\relg$ be a connected graph, and suppose that $\relt$ is a rooted tree with height $h$
witnessing that $\relg$ has tree depth $\leq h$, i.e. $\relt$  has 
height $\leq h$ and its closure contains $E^{\relg}$.
Then the tree $(T,E^{\relt})$ together with  $(B_t)_{t \in T}$ is a tree decomposition of $\relg$ of width $h$ where
$B_t = \{ a \mid a \textup{ is an ancestor of } t \}$.
\end{proposition}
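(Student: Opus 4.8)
The plan is to verify directly the two defining conditions of a tree decomposition for the pair consisting of the tree $(T, E^{\relt})$ and the family $(B_t)_{t \in T}$, and to bound the width; everything reduces to unwinding the definitions of tree depth and of ancestor/descendant. Throughout I use the paper's convention that a vertex is an ancestor (and a descendant) of itself, so that $t \in B_t$ for every $t \in T$, and hence $g \in B_g$ for every $g \in G$; this takes care of the implicit requirement that every vertex of $\relg$ be covered by some bag. Note also that each $B_t$ is a subset of $T = G$, as required, and in fact $B_t$ is exactly the vertex set of the unique path in $\relt$ from the root to $t$.

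For the width, since $\relt$ has height at most $h$, the root-to-$t$ path has at most $h+1$ vertices, so $|B_t| \leq h+1$; therefore the width $\max_{t \in T} |B_t| - 1$ is at most $h$, with equality when $\relt$ has height exactly $h$ (witnessed by a leaf at depth $h$). For the edge condition, fix $(g, g') \in \refl(E^{\relg})$. If $g = g'$ then $\{g, g'\} = \{g\} \subseteq B_g$. Otherwise $(g, g') \in E^{\relg}$, and since by hypothesis the closure of $\relt$ contains $E^{\relg}$, one of $g, g'$ is an ancestor of the other in $\relt$; say $g$ is an ancestor of $g'$. Then $g \in B_{g'}$ by definition of $B_{g'}$, and $g' \in B_{g'}$ since $g'$ is an ancestor of itself, so $\{g, g'\} \subseteq B_{g'}$.

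For the connectivity condition, fix $g \in G$ and consider $I_g := \{ t \in T \mid g \in B_t \}$. By definition of the bags, $t \in I_g$ precisely when $g$ is an ancestor of $t$, i.e., $I_g$ is the set of descendants of $g$ in $\relt$. It then suffices to observe that in any rooted tree the descendants of a fixed vertex induce a connected subtree: if $t$ is a descendant of $g$, every vertex on the unique $\relt$-path from $g$ to $t$ is again a descendant of $g$, so $I_g$ is closed under taking paths between its members and hence is connected in $(T, E^{\relt})$. Combining the three points establishes that $(T, E^{\relt})$ with $(B_t)_{t \in T}$ is a tree decomposition of $\relg$ of width at most $h$ (and of width $h$ when $\relt$ has height $h$).

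I do not expect a genuine obstacle: the statement is essentially a bookkeeping exercise aligning the definition of tree depth with that of tree decomposition. The only points needing a little care are the reflexive conventions for ``ancestor''/``descendant'' (which trivialize vertex coverage and give $t \in B_t$) and the elementary fact that descendant-sets in a rooted tree are connected, which is the one substantive ingredient, used for the connectivity condition.
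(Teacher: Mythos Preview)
Your proof is correct and follows the same approach as the paper. The paper in fact only sketches the edge-coverage step (noting that for an edge $(g,g')$ one vertex is an ancestor of the other, so both lie in the bag of the descendant), leaving the width bound and connectivity as routine; your write-up spells out all three conditions explicitly but uses precisely this key observation.
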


To prove Proposition~\ref{prop:depth-gives-decomposition},
the key observation is the following.
For any two vertices $g, g' \in G$ connected by an edge in $\relg$,
one is an ancestor of the other in $\relt$, and if,
say, $g'$ is an ancestor of $g$, then $g, g' \in B_g$.

\section{Graph deconstructions}\label{sec:hierarchy}

In this section, 
we present the notion of \emph{graph deconstruction}
and develop its basic theory
(Section~\ref{subsect:decomp-basic});
we introduce the measure of \emph{stack depth}
(Section~\ref{subsect:stack-depth});
and,
we state and prove our theorem
presenting the graph hierarchy
(Section~\ref{subsect:hierarchy}).

\subsection{Definitions and basic properties}
\label{subsect:decomp-basic}

\begin{definition}
\label{def:deconstruction}
When $\relg$ and $\relh$ are graphs,
an \emph{$\relh$-deconstruction of $\relg$} is
an $H$-indexed family $(B_h)_{h \in H}$ 
of subsets of $G$ that satisfies the following two conditions:
\begin{enumerate}\itemsep=0pt

\item[--] (Coverage) For each pair $(g, g') \in \refl(E^{\relg})$,
there exists a pair $(h, h') \in \refl(E^{\relh})$ such that
$\{ g, g' \} \subseteq B_h \cup B_{h'}$.

\item[--] 
(Connectivity) For each element $g \in G$, the set
$\{ h \mid g \in B_h \}$ is connected in $\relh$.

\end{enumerate}
The \emph{width} of an $\relh$-deconstruction $(B_h)_{h \in H}$ 
is defined as 
$$\max_{(h, h') \in \refl(E^{\relh})} |B_h \cup B_{h'}|.$$
We will refer to the subsets $B_h$ as \emph{bags}.
\end{definition}
Note that the definition of an $\relh$-deconstruction
is similar to that of a tree decomposition, but one important
difference
is that, in the definition of $\relh$-deconstruction,
it is not required that an edge $(g, g') \in E^{\relg}$ lie inside
a single bag $B_h$, but rather, may lie inside the union $B_h \cup
B_{h'}$
of two bags where $(h, h') \in E^{\relh}$.

\begin{remark}
Another natural way to define the width of an $\relh$-deconstruction
$(B_h)_{h \in H}$
is simply as $\max_{h \in H} |B_h|$.
The theory that we develop is essentially unchanged
if one adopts this alternative definition.
\end{remark}

\begin{example}
\label{ex:grid}
Let $n \geq 1$. Let $\relh$ be the $n$-by-$n$ grid: 
it has vertices $[n]^2$ and an edge between $(i,j)$ and $(i',j')$ if 
$|i-i'|+|j-j'|=1$.
Any graph $\relg$ on $n$ vertices
has an $\relh$-decon\-struc\-tion of width $\leq
3$.
Assume without loss of generality that $G = [n]$.
The desired deconstruction is $(B_{(i,j)})_{(i,j) \in H}$ defined by
$B_{(i,j)} = \{ i, j \}$.
Coverage holds, since each pair $(i, j) \in [n]^2$ has $\{ i, j \} \subseteq B_{(i,j)}$.
Connectivity holds, since for each $i \in [n]$, 
the set $\{ h \mid i \in B_h \}$ forms a cross in the grid.
\end{example}

\begin{example}
\label{ex:self-deconstruction}
For any graph $\relg$, 
the family $(B_g)_{g \in G}$ defined by $B_g = \{ g \}$ is a
$\relg$-decon\-struc\-tion of $\relg$ of width $\le 2$.
\end{example}

\begin{proposition}
\label{prop:inheriting-connectivity}
Let $\relg,\relh$ be graphs, $S \subseteq G$  connected in $\relg$ and $(B_h)_{h\in H}$ an $\relh$-deconstruction of $\relg$.
Then $\{ h \mid S \cap B_h \neq \emptyset \}$ is connected in $\relh$.
\end{proposition}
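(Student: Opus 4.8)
The plan is to reduce everything to the single-vertex Connectivity condition by ``gluing connected pieces along a path''. For $g \in G$ write $H_g := \{ h \in H \mid g \in B_h \}$, so that the set in question is $X := \{ h \mid S \cap B_h \neq \emptyset \} = \bigcup_{g \in S} H_g$. By the Connectivity condition each $H_g$ induces a connected subgraph of $\relh$; and applying the Coverage condition to the pair $(g,g) \in \refl(E^{\relg})$ shows that $H_g \neq \emptyset$ for every $g \in G$. If $S = \emptyset$ there is nothing to prove, so assume $S \neq \emptyset$.

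The key step is the claim that whenever $(g, g') \in E^{\relg}$, the set $H_g \cup H_{g'}$ is connected in $\relh$. To see this, apply Coverage to $(g, g')$ to obtain $(h, h') \in \refl(E^{\relh})$ with $\{ g, g' \} \subseteq B_h \cup B_{h'}$. Then there are $h_1, h_2 \in \{ h, h' \}$ with $h_1 \in H_g$ and $h_2 \in H_{g'}$; since $\{ h, h' \}$ induces a connected subgraph of $\relh$, either $h_1 = h_2$ (so $H_g$ and $H_{g'}$ share a vertex) or $(h_1, h_2) \in E^{\relh}$ (so $\relh$ has an edge with one endpoint in $H_g$ and one in $H_{g'}$). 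In either case the union of the two connected, nonempty sets $H_g$ and $H_{g'}$ is connected. This is the one place where the argument genuinely uses the structure of $\relh$ rather than just that of a tree: Coverage places an edge's endpoints only into a union $B_h \cup B_{h'}$ of two bags, and the edge $(h,h')$ is exactly what bridges the corresponding connected pieces.

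From here, fix arbitrary $h_1, h_2 \in X$, choose $g_1, g_2 \in S$ with $h_i \in H_{g_i}$, and, using that $\langle S \rangle^{\relg}$ is connected, pick a path $g_1 = a_0, a_1, \ldots, a_k = g_2$ inside $\langle S \rangle^{\relg}$ (so each $a_i \in S$ and each $(a_{i-1},a_i) \in E^{\relg}$). A straightforward induction on $j$ shows $\bigcup_{i \leq j} H_{a_i}$ is connected in $\relh$: in the step one adds $H_{a_j}$ to the connected set $\bigcup_{i < j} H_{a_i}$, and the set $H_{a_{j-1}} \cup H_{a_j}$ — connected by the claim — meets both $H_{a_{j-1}} \subseteq \bigcup_{i < j} H_{a_i}$ and $H_{a_j}$, which suffices to keep the union connected. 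Hence $\bigcup_{i \leq k} H_{a_i}$ is a connected subset of $X$ containing $h_1$ and $h_2$, so $h_1, h_2$ lie in one connected component of $\langle X \rangle^{\relh}$; as they were arbitrary, $X$ is connected in $\relh$.

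I do not anticipate a genuine obstacle here. The only care required is in the case analysis inside the claim, and in correctly invoking the elementary facts that a union of nonempty connected vertex sets that share a vertex, or that are joined by an edge, is connected.
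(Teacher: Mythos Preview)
Your proof is correct and follows essentially the same approach as the paper: the paper's proof reduces the claim to showing that $T_g \cup T_{g'}$ is connected whenever $(g,g') \in E^{\relg}$ (your ``key step''), using Coverage in the same way, and leaves the induction along a path in $S$ implicit where you spell it out. If anything, your case analysis in the claim is slightly more careful than the paper's.
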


\begin{proof}\mo{write shortened (old in file)}
 Suppose $(g, g') \in E^{\relg}$,
and define $T_g = \{ h \mid g \in B_h \}$ and
$T_{g'} = \{ h \mid  g' \in B_h \}$.
It suffices to show 
$T_g \cup T_{g'}$ is connected in $\relh$.
By coverage, there is 
$(h, h') \in \refl(E^{\relh})$
such that $\{ g, g' \} \subseteq B_h \cup B_{h'}$.
If 
$h = h'$, then 
$T_g$ and $T_{g'}$ share the vertex $h$; 
otherwise 
$(h, h') \in E^{\relh}$ with
$g \in B_h$ and $g' \in B_{h'}$, so
$h \in T_g$ and $h' \in T_{g'}$, and hence $T_g \cup T_{g'}$ is connected in $\relh$.
\end{proof}

Let $\fG$ and $\fH$ be classes of graphs.

\begin{enumerate}\itemsep=0pt

\item[--] We say that $\fG$ has $\fH$-deconstructions of width $\leq k$
if for each graph $\relg \in \fG$,
there exists a graph $\relh \in \fH$ such that
$\relg$ has an $\relh$-deconstruction of width $\leq k$.

\item[--] We say that $\fG$ has $\fH$-deconstructions of bounded width
if there exists $k \geq 1$ such that $\fG$ has $\fH$-deconstructions
of width $\leq k$.

\end{enumerate}
We will employ analogous terminology to discuss, for example,
\emph{nice deconstructions} which will be defined later.

\begin{definition}
\label{def:binary-relation}
We define the binary relation $\preceq$ on classes of graphs as follows:
$\fG \preceq \fH$
if and only if
$\fG$ has $\fH$-deconstructions of bounded width.
We write $\fG \equiv \fH$ if $\fG \preceq \fH$ and $\fH \preceq \fG$,
and we write $\fG \precneq \fH$ if $\fG \preceq \fH$ and $\fG
\not\equiv \fH$.
\end{definition}

We now present some basic properties of the relation $\preceq$.

\begin{proposition}
\label{prop:preorder}
The relation $\preceq$ is reflexive and transitive.
\end{proposition}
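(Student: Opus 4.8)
The plan is to check the two clauses directly against the definition of $\preceq$, reusing Example~\ref{ex:self-deconstruction} for reflexivity and Proposition~\ref{prop:inheriting-connectivity} for transitivity. For reflexivity, Example~\ref{ex:self-deconstruction} already gives, for every graph $\relg$, a $\relg$-deconstruction of $\relg$ of width at most $2$; hence $\fG$ has $\fG$-deconstructions of width $\le 2$, so $\fG \preceq \fG$.

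For transitivity, suppose $\fG \preceq \fH$ and $\fH \preceq \fI$, witnessed by width bounds $k$ and $\ell$ respectively. Fix $\relg \in \fG$. First pick $\relh \in \fH$ together with an $\relh$-deconstruction $(B_h)_{h \in H}$ of $\relg$ of width $\le k$; then pick $\reli \in \fI$ together with an $\reli$-deconstruction $(C_i)_{i \in I}$ of $\relh$ of width $\le \ell$. I would then \emph{compose} these by setting $D_i := \bigcup_{h \in C_i} B_h$ for each $i \in I$, and claim that $(D_i)_{i \in I}$ is an $\reli$-deconstruction of $\relg$ of width at most $k\ell$; this witnesses $\fG \preceq \fI$.

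The verification has three routine parts. For the width bound, note that $|B_h| \le k$ for every $h$ (take $(h,h) \in \refl(E^{\relh})$), so for $(i,i') \in \refl(E^{\reli})$ the set $D_i \cup D_{i'} = \bigcup_{h \in C_i \cup C_{i'}} B_h$ is a union of at most $|C_i \cup C_{i'}| \le \ell$ sets each of size $\le k$, hence has size $\le k\ell$. For coverage, given $(g,g') \in \refl(E^{\relg})$, coverage of $(B_h)_{h\in H}$ yields $(h,h') \in \refl(E^{\relh})$ with $\{g,g'\} \subseteq B_h \cup B_{h'}$, and coverage of $(C_i)_{i\in I}$ applied to $(h,h')$ yields $(i,i') \in \refl(E^{\reli})$ with $\{h,h'\} \subseteq C_i \cup C_{i'}$; therefore $B_h \cup B_{h'} \subseteq D_i \cup D_{i'}$ and $\{g,g'\} \subseteq D_i \cup D_{i'}$. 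For connectivity, fix $g \in G$ and let $S := \{h \in H \mid g \in B_h\}$; by connectivity of $(B_h)_{h\in H}$ this $S$ is connected in $\relh$, and since $\{i \in I \mid g \in D_i\} = \{i \in I \mid C_i \cap S \neq \emptyset\}$, Proposition~\ref{prop:inheriting-connectivity} applied to the $\reli$-deconstruction $(C_i)_{i\in I}$ of $\relh$ and the connected set $S$ shows this set is connected in $\reli$.

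There is no substantial obstacle here: once the composition $D_i := \bigcup_{h \in C_i} B_h$ is written down, everything is bookkeeping, and the only mildly delicate point---propagating the connectivity condition through the composition---is precisely what Proposition~\ref{prop:inheriting-connectivity} was formulated to handle.
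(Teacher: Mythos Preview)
Your proof is correct and follows essentially the same approach as the paper: reflexivity via Example~\ref{ex:self-deconstruction}, and transitivity by composing deconstructions via $D_i := \bigcup_{h \in C_i} B_h$, with connectivity handled by Proposition~\ref{prop:inheriting-connectivity}. The paper packages the composition step as a separate lemma (Lemma~\ref{lemma:compose-deconstructions}) but the construction and verification are identical to yours.
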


This implies 
that $\equiv$ is an equivalence relation. Throughout,
we will tacitly use the fact that, if $\fG \subseteq \fH$, then
$\fG \preceq \fH$.\mo{2nd sentence moved from rewritten proof (see file)}

\begin{proof}[Proof of Proposition~\ref{prop:preorder}]
Reflexivity follows from Example~\ref{ex:self-deconstruction} and transitivity from the following lemma.
%
\end{proof}

\begin{lemma}
\label{lemma:compose-deconstructions}
Let $\relg$, $\relh$, and $\reli$ be graphs;
suppose that $\relg$ has an $\relh$-deconstruction of width $\leq v$,
and that $\relh$ has an $\reli$-deconstruction of width $\leq w$.
Then, $\relg$ has an $\reli$-deconstruction of width $\leq 2vw$.
\end{lemma}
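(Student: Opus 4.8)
The plan is to exhibit the composite deconstruction explicitly and then check the three defining conditions directly. Fix an $\relh$-deconstruction $(B_h)_{h \in H}$ of $\relg$ of width $\leq v$ and an $\reli$-deconstruction $(C_i)_{i \in I}$ of $\relh$ of width $\leq w$, and define the candidate $\reli$-deconstruction $(D_i)_{i \in I}$ of $\relg$ by $D_i := \bigcup_{h \in C_i} B_h$. For the width bound, first note that since $(h,h) \in \refl(E^{\relh})$ for every $h \in H$, the width hypothesis on $(B_h)_{h\in H}$ gives $|B_h| \leq v$; then for any $(i,i') \in \refl(E^{\reli})$ we have $D_i \cup D_{i'} = \bigcup_{h \in C_i \cup C_{i'}} B_h$, so $|D_i \cup D_{i'}| \leq |C_i \cup C_{i'}| \cdot v \leq wv$, using the width hypothesis on $(C_i)_{i\in I}$.

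For coverage, take $(g,g') \in \refl(E^{\relg})$. Coverage of the $\relh$-deconstruction yields $(h,h') \in \refl(E^{\relh})$ with $\{g,g'\} \subseteq B_h \cup B_{h'}$. Applying coverage of the $\reli$-deconstruction of $\relh$ to this pair $(h,h')$ produces $(i,i') \in \refl(E^{\reli})$ with $\{h,h'\} \subseteq C_i \cup C_{i'}$ (in the degenerate case $h = h'$ this simply says $h \in C_i \cup C_{i'}$). In either case $B_h \cup B_{h'} \subseteq D_i \cup D_{i'}$, hence $\{g,g'\} \subseteq D_i \cup D_{i'}$, as required. For connectivity, fix $g \in G$ and put $S_g := \{ h \in H \mid g \in B_h \}$; this set is nonempty by coverage applied to the pair $(g,g)$, and it is connected in $\relh$ by the connectivity condition on $(B_h)_{h\in H}$. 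Since $g \in D_i$ if and only if $C_i \cap S_g \neq \emptyset$, the set $\{ i \mid g \in D_i \}$ equals $\{ i \mid C_i \cap S_g \neq \emptyset \}$, which is connected in $\reli$ by Proposition~\ref{prop:inheriting-connectivity}, applied with $\relh$ playing the role of the ambient graph, $\reli$ that of the index graph, and $(C_i)_{i \in I}$ as the given $\reli$-deconstruction of $\relh$.

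The argument is essentially bookkeeping once the composite $(D_i)_i$ is chosen; the only points needing a little care are the reflexive-edge conventions ($\refl$) in the coverage condition — notably the degenerate case $h = h'$ — and recognizing that connectivity comes for free from Proposition~\ref{prop:inheriting-connectivity} rather than needing a fresh argument. I do not anticipate a genuine obstacle here.
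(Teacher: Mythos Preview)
Your proof is correct and follows essentially the same approach as the paper: the same composite bags $D_i = \bigcup_{h \in C_i} B_h$, the same two-step coverage argument, and the same appeal to Proposition~\ref{prop:inheriting-connectivity} for connectivity. You additionally spell out the width bound explicitly, which the paper leaves implicit.
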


\textsc{Proof.}
Suppose that $\relg$ has the $\relh$-deconstruction
$(B_h)_{h \in H}$ of width $\leq v$,
and that $\relh$ has the $\reli$-deconstruction
$(C_i)_{i \in I}$ of width $\leq w$.
Define the family 
$(C^{+}_i)_{i \in I}$ 
by
$C^{+}_i = \bigcup_{h \in C_i} B_h$.
We claim that 
$(C^{+}_i)_{i \in I}$ 
is an $\reli$-deconstruction of $\relg$, which suffices.

To verify coverage,
let $(g, g') \in \refl(E^{\relg})$.
By coverage of $(B_h)$,
there exists $(h, h') \in \refl(E^{\relh})$ such that
$\{ g, g' \} \subseteq B_h \cup B_{h'}$.
By coverage of $(C_i)$,
there exists $(i, i') \in \refl(E^{\reli})$ such that
$\{ h, h' \} \subseteq C_i \cup C_{i'}$.
By definition of $(C^{+}_i)$, we have that
$B_h \cup B_{h'} \subseteq C_i^+ \cup C_{i'}^+$,
from which it follows that $\{ g, g' \} \subseteq C_i^+ \cup
C_{i'}^+$.

To verify connectivity, 
let $g \in G$ be an arbitrary element.
By connectivity of $(B_h)$, it holds that
$T = \{ h \mid  g \in B_h \}$ is connected in $\relh$.
It follows from
Proposition~\ref{prop:inheriting-connectivity}
that 
$U = \{ i \mid  T \cap C_i \neq \emptyset \}$ is connected in $\reli$.
We claim that $U = \{ i \mid  g \in C_i^+ \}$.
This is because 
\begin{align*}
T \cap C_i \neq \emptyset \quad
  &\Longleftrightarrow\quad \exists h \in C_i: h \in T\\
  &\Longleftrightarrow\quad \exists h \in C_i: g \in B_h\\
  &\Longleftrightarrow\quad g \in C_i^{+}.    \benda
\end{align*}

\begin{proposition}
\label{prop:equivalence-with-minors}
For any class of graphs $\fG$,
it holds that $\fG \equiv \minors(\fG)$.
\end{proposition}

\begin{proof}
It is clear that $\fG \preceq \minors(\fG)$, since 
$\fG \subseteq \minors(\fG)$.
To show that $\minors(\fG) \preceq \fG$,
we prove that when a graph $\relm$ is a minor of a graph $\relg$,
it holds that $\relm$ has a $\relg$-deconstruction of width $\leq 2$.
Let $\mu$ be a minor map from $\relm$ to $\relg$,
and define, for all $g \in G$,
the set $B_g$ to be $\{ m \mid  g \in \mu(m) \}$.
Clearly, for each $g \in G$, it holds that $|B_g| \leq 1$.
We claim that $(B_g)_{g \in G}$ is a $\relg$-deconstruction of
$\relm$.
For each $m \in M$, since $\mu(m)$ is non-empty,
there exists $g \in G$ such that $m \in B_g$.
For each $(m, m') \in E^{\relm}$, by definition of minor,
there exists $(g, g') \in E^{\relg}$ with $g \in \mu(m)$ and $g' \in
\mu(m')$;
it thus holds that $\{ m, m' \} \subseteq B_g \cup B_{g'}$.
For each $m \in M$, $\{ g \mid  m \in B_g \}$ is equal to
$\mu(m)$,
and is hence connected as $\mu$ is a minor map.
\end{proof}

We now generalize the notion of tree decomposition to arbitrary
graphs,
and then compare the resulting notion with the presented notion of
$\relh$-deconstruction.

When $\relg$ and $\relh$ are graphs, define an
\emph{$\relh$-decomposition} of $\relg$ to be an $H$-indexed family 
$(B_h)_{h \in H}$ of subsets of $G$ satisfying:
\begin{enumerate}\itemsep=0pt

\item[--] For each pair $(g, g') \in \refl(E^{\relg})$, there exists $h
  \in H$
such that $\{ g, g' \} \subseteq B_h$.

\item[--] Connectivity (as defined in Definition~\ref{def:deconstruction})

\end{enumerate}
This is a natural generalization of the definition of 
\emph{tree decomposition}:
a tree decomposition is precisely a $\relh$-decomposition
where $\relh$ is required to be a tree.

\begin{proposition}
\label{prop:decomposition-vs-deconstruction}
Let $\relg$ and $\relh$ be graphs, and let $w \geq 1$.
\begin{enumerate}\itemsep=0pt

\item $\relh$-decompositions are $\relh$-deconstructions.

\item 
If $\relh$ is a tree 
and $\relg$ has an $\relh$-deconstruction of 
width $\leq w$,
then $\relg$ has an $\relh$-decomposition of width $< 2w$.

\end{enumerate}
Consequently,
when $\fH$ is a class of trees,
it holds that $\fG \preceq \fH$ if and only if $\fG$ has
$\fH$-decompositions
of bounded width.
\end{proposition}

\begin{proof}
 (1) is straightforwardly verified.
For (2),
suppose that $(B_h)_{h \in H}$ is an $\relh$-decon\-struc\-tion of
$\relg$ where each bag has size $\leq w$.
Root the tree $\relh$, and define $p: H \to H$ as follows:
if $h$ is the root of $\relh$, define $p(h) = h$, and otherwise
define $p(h)$ to be the parent of $h$.
Define $(C_h)_{h \in H}$ by 
$C_h = B_h \cup B_{p(h)}$;
it is straightforward to verify that $(C_h)_{h \in H}$ is an
$\relh$-decomposition
of $\relg$, and each of its bags has size $\leq 2w$.
\end{proof}

\begin{remark}
We find that it is cleaner to work with the notion of $\relh$-deconstruction
than to work with the notion of $\relh$-decomposition;
this is a primary reason for our focus on the notion of
$\relh$-deconstruction.
Indeed, note that while reflexivity of the $\preceq$ relation is
straightforward
to prove, we do not know of a simple proof of reflexivity of the analogous
relation
defined via $\relh$-decomposition.
Note that while there exists a constant $w$ such that
each graph $\relg$ has a $\relg$-deconstruction of width $\leq w$
(in particular, one can take $w = 2$),
there does not exist a constant $w$ such that
each graph $\relg$ has a $\relg$-decomposition of width $\leq w$:
the bags of a $\relg$-decomposition of width $\leq w$
can cover at most a number 
of edges that is linear in the number of vertices ($|G|{w+1 \choose 2}$ many), but graphs in general may have quadratically many edges.
\end{remark}

\subsection{Stack depth}
\label{subsect:stack-depth}

Recall that $\T_d$ denotes the class of all trees of height $\leq d$.
For $h \geq 0$, $k \geq 1$, 
define $\relt_{h, k}$ to be the tree with universe $[k]^{\leq h}$
and with $E^{\relt_{h,k}} = \{ (t, ti), (ti, t) \mid t \in
[k]^{<h}, i \in [k] \}$.
Here, $[k]^{\leq h}$ and $[k]^{<h}$ 
denote the sets of strings over alphabet $[k]$
of length $\leq h$ and of length $< h$, respectively.
Define the \emph{stack depth} of a class $\fG$ of graphs to be
$$
\max \{ h \mid \forall k \geq 1, \relt_{h,k} \in \minors(\fG) \};
$$
let it be understood that this maximum is $\infty$
if the set is infinite.

\begin{proposition}\label{prop:stackdepth-pw}
A class of graphs has bounded stack depth if and only if it has bounded pathwidth.
\end{proposition}

\begin{proof}
 Let $\fG$ be a class of graphs. By Theorem~\ref{thm:excluded}, $\fG$ has unbounded pathwidth if and only if 
$\T\subseteq\minors(\fG)$. This obviously implies infinite stack depth. Conversely, infinite stack depth implies 
$\T\subseteq\minors(\fG)$:
each tree
is a subgraph of a tree of the form $\relt_{m,m}$,
and infinite stack depth gives that 
$\relt_{m,m} \in \minors(\fG)$
(for each $m \geq 1$).
\end{proof}

\begin{theorem}
\label{thm:stack-depth-decomps}
Suppose that $d, e \geq 0$ are constants and
that $\fG$ is a class of trees having stack depth $d$ and 
where each tree has height $\leq e$.
Then $\fG \preceq \T_d$.
\end{theorem}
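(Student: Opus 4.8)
The plan is to reason about a single tree $\relg \in \fG$ of height $\leq e$ and construct, from a suitable rooting of $\relg$, a tree $\relh \in \T_d$ together with an $\relh$-deconstruction of $\relg$ whose width is bounded by a constant depending only on $d$ and $e$ (not on $\relg$). Since $d$ is the stack depth of $\fG$, for every $k \geq 1$ the tree $\relt_{d,k}$ is a minor of some graph in $\fG$, which is the hypothesis we will pit against the structure of individual trees in $\fG$. The key tension is: $\relg$ may be a tall, wide tree once rooted, but we only have trees of height $\leq d$ available in $\T_d$ as ``shapes'' to deconstruct onto. We must therefore compress the ``vertical'' structure of $\relg$ into at most $d$ levels, at the cost of enlarging bags.

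First I would fix an optimal rooting of $\relg$ realizing height $\leq e$, so that $\relg$ is a rooted tree of height $\leq e$; in particular its closure has width $\leq e$, and by Proposition~\ref{prop:depth-gives-decomposition} the natural ``ancestor'' decomposition has width $\leq e$. The real work is to show that the rooted tree $\relg$ does \emph{not} contain an arbitrarily deep ``caterpillar-of-branchings'' structure: more precisely, I would argue that there is a constant $N = N(d,e)$ such that every root-to-leaf path in $\relg$ meets at most $d$ vertices that have ``large'' subtrees hanging off to the side (where ``large'' means the subtree is not among finitely many small exceptional trees), OR more robustly, that $\relg$ admits a decomposition of its vertex set into $\leq d$ ``layers'' each of which is a disjoint union of subtrees, such that edges only go between consecutive layers. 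The bridge to the stack-depth hypothesis is the contrapositive: if $\relg$ had $d+1$ nested levels of genuine branching, then for every $k$ one could (by choosing $\relg$ appropriately, using that $\relt_{d+1,k}$ is \emph{not} guaranteed to be a minor — wait, we must be careful) build $\relt_{d,k}$ inside it with $k$ large; I would instead phrase it as: the stack depth being $d$ means $\relt_{d+1,k_0}$ fails to be a minor of some... no. Let me restate: stack depth $\geq d$ gives us minors $\relt_{d,k}$ for all $k$; stack depth being \emph{exactly} $d$ (so $< d+1$) means there is a fixed $k_0$ with $\relt_{d+1,k_0} \notin \minors(\fG)$, hence $\relt_{d+1,k_0}$ is not a minor of \emph{any} $\relg \in \fG$. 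This is the usable bound: no tree in $\fG$ contains $\relt_{d+1,k_0}$ as a minor, equivalently as a topological minor (for trees these coincide up to constants).

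Second, from ``$\relt_{d+1,k_0}$ is not a topological minor of $\relg$'' together with ``$\relg$ has height $\leq e$'', I would derive a concrete structural decomposition: root $\relg$, and call a vertex \emph{branching} if at least $k_0$ of its children have subtrees of height $\geq$ some threshold; a standard argument (similar to the one showing bounded tree depth plus excluded deep branching forces bounded pathwidth-like structure) shows every root-to-leaf path meets $\leq d$ branching vertices, because $d+1$ of them would let us embed $\relt_{d+1,k_0}$. Contracting the non-branching ``segments'' between consecutive branching vertices (each segment has bounded length? no — only bounded height contribution, but a path can be long) is the subtle point. I expect the main obstacle to be exactly this: controlling the \emph{width} of the resulting deconstruction when segments between branching vertices can be long paths. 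The resolution: use the height bound $\leq e$ to see that each such segment, as a rooted subtree ignoring the heavy branches, has height $\leq e$ and hence (again by tree depth) pathwidth absorbable into the bag; and build $\relh$ by taking the ``branching skeleton'' tree (which has height $\leq d$, so $\relh \in \T_d$ after checking $\relh$ is a tree and, if necessary, that small-height forests/trees embed suitably), assigning to each skeleton node $h$ a bag $B_h$ consisting of the $O(e)$ ancestors in $\relg$ on the current path plus bookkeeping for the bounded-height local subtree, so that every edge of $\relg$ lies in $B_h \cup B_{h'}$ for a skeleton edge $(h,h')$. Verifying Coverage and Connectivity for this explicit assignment is then routine, using Proposition~\ref{prop:inheriting-connectivity} and the ancestor-closure property as in Proposition~\ref{prop:depth-gives-decomposition}. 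Finally, I would note the width bound is a function of $d, e, k_0$ alone, and $k_0$ depends only on $d$ (via $\fG$'s stack depth), so the width is uniformly bounded, giving $\fG \preceq \T_d$.
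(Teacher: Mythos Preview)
Your setup is right: you correctly extract the usable hypothesis (a fixed $k_0$ with $\relt_{d+1,k_0}\notin\minors(\relg)$ for every $\relg\in\fG$) and the intuition that any root-to-leaf path can meet at most $d$ ``genuinely branching'' vertices. The gap is in the construction of the deconstruction. You propose to take the branching skeleton as $\relh\in\T_d$ and absorb the non-branching regions into bags of size $O(e)$, writing that ``each such segment \ldots\ has height $\leq e$ and hence pathwidth absorbable into the bag.'' But bounded height does not give bounded \emph{size}: a star on $n+1$ vertices has height~$1$, no internal branching along any root-to-leaf path beyond the root, and yet cannot be placed in a single bag of bounded cardinality. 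More generally, a non-branching vertex (fewer than $k_0$ ``large'' children) may still have arbitrarily many children, so the region you want to absorb into a bag can be arbitrarily large. Your single, non-recursive notion of ``branching'' cannot see this.

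The paper avoids this by a double induction on $(d,e)$ together with a \emph{recursive} branching predicate $P(d,k)$ (a node has $P(d,k)$ if it has $k$ pairwise incomparable descendants each with $P(d-1,k)$). At the root one separates the children into \emph{heavy} ones (those with $P(d,K)$), of which there are fewer than $K$ since the root lacks $P(d+1,K)$, and \emph{light} ones (those without $P(d,K)$). The light subtrees are not absorbed into a bag at all; instead, by induction on $d$, each has a nice $\T_{d-1}$-deconstruction of bounded width, and these are hung as children of the root in the target tree. The heavy children are contracted into the root, which reduces the height by one and lets the induction on $e$ proceed; a lemma shows $P(d+1,K)$ does not reappear after this contraction. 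Composing these nice deconstructions (via a niceness-preserving transitivity lemma) yields the bound. The missing idea in your plan is precisely this: the non-branching parts must be handled recursively (producing their own subtrees of the target $\T_d$-tree), not compressed into bags.
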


To prove this we shall need some preparations.
Let $\relm$, $\relg$ be rooted trees.
Let us say that an $\relm$-deconstruction $(\mu(m))_{m \in M}$  of $\relg$ is
\emph{nice} if 
the following hold:
\begin{enumerate}\itemsep=0pt

\item[--] $\mu$ is a minor map from $\relm$ to $\relg$.

\item[--] $g_0 \in \mu(m_0)$, where $g_0$ and $m_0$ denote
the roots of $\relg$ and $\relm$, respectively.

\item[--] If $m'$ is a child of $m$ in $\relm$, 
$g \in \mu(m)$, $g' \in \mu(m')$, and
$(g, g') \in E^{\relg}$, then $g'$ is a child of~$g$ in $\relg$.

\end{enumerate}

\begin{lemma}
\label{lemma:transitivity-niceness}
Suppose that $\reln$, $\relm$, and $\relg$ are rooted trees,
that $(\nu(n))_{n \in N}$ is a nice $\reln$-deconstruction of $\relm$,
and
that $(\mu(m))_{m \in M}$ is a nice $\relm$-deconstruction of $\relg$.
Then $(\mu(\nu(n)))_{n \in N}$ is a nice $\reln$-deconstruction of
$\relg$,
where here $\mu(\nu(n))$ denotes $\bigcup_{m \in \nu(n)} \mu(m)$.
\end{lemma}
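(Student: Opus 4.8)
The plan is to set $\rho(n) := \mu(\nu(n)) = \bigcup_{m \in \nu(n)} \mu(m)$ and observe that this is exactly the family constructed in the proof of Lemma~\ref{lemma:compose-deconstructions}. Since $(\nu(n))_{n \in N}$ is in particular an $\reln$-deconstruction of $\relm$ and $(\mu(m))_{m \in M}$ an $\relm$-deconstruction of $\relg$, that lemma already gives that $(\rho(n))_{n \in N}$ is an $\reln$-deconstruction of $\relg$ (we do not need to track the width). What remains is to verify the three clauses defining niceness.

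First I would check that $\rho$ is a minor map from $N$ to $G$, which is the standard fact that a composition of minor maps is again a minor map: non-emptiness and pairwise disjointness of the sets $\rho(n)$ are immediate from those of $\nu$, $\mu$ and the disjointness of the $\mu(m)$; connectedness of $\rho(n) = \bigcup_{m \in \nu(n)} \mu(m)$ follows from connectedness of each $\mu(m)$, connectedness of $\nu(n)$ in $\relm$, and the edge condition of the minor map $\mu$ used to glue consecutive bags along a path inside $\nu(n)$; and the edge condition for $\rho$ follows by first applying the edge condition of $\nu$ and then that of $\mu$. For the root clause, write $n_0, m_0, g_0$ for the roots of $\reln, \relm, \relg$; niceness of $(\nu(n))$ gives $m_0 \in \nu(n_0)$ and niceness of $(\mu(m))$ gives $g_0 \in \mu(m_0)$, hence $g_0 \in \mu(m_0) \subseteq \rho(n_0)$.

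The substantive part is the child clause. The auxiliary observation I would first isolate is: in any nice $\relm$-deconstruction $(\mu(m))_{m \in M}$ of $\relg$, if $(g, g') \in E^{\relg}$ with $g \in \mu(m)$, $g' \in \mu(m')$ and $m \neq m'$, then $(m, m') \in E^{\relm}$. This follows directly from the Coverage clause of Definition~\ref{def:deconstruction}: applied to $(g, g') \in \refl(E^{\relg})$ it yields $(h, h') \in \refl(E^{\relm})$ with $\{g, g'\} \subseteq \mu(h) \cup \mu(h')$; since the bags $\mu(\cdot)$ are pairwise disjoint, $g$ lies only in $\mu(m)$ and $g'$ only in $\mu(m')$, which forces $\{m, m'\} \subseteq \{h, h'\}$, so $\{m, m'\} = \{h, h'\}$ with $h \neq h'$, giving $(m, m') \in E^{\relm}$. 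Granting this, suppose $n'$ is a child of $n$, $g \in \rho(n)$, $g' \in \rho(n')$, and $(g, g') \in E^{\relg}$; let $m$ be the unique index with $g \in \mu(m)$ and $m'$ the unique index with $g' \in \mu(m')$, so $m \in \nu(n)$, $m' \in \nu(n')$, and $m \neq m'$ by disjointness of the $\nu(\cdot)$. The observation gives $(m, m') \in E^{\relm}$; the child clause for the nice deconstruction $(\nu(n))$ then yields that $m'$ is a child of $m$; and finally the child clause for the nice deconstruction $(\mu(m))$ yields that $g'$ is a child of $g$, as required.

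I expect the main obstacle to be precisely this child clause, and within it the step that lifts a $\relg$-edge running between two distinct branch sets $\mu(m)$, $\mu(m')$ to a genuine edge of $\relm$; without that lifting the two given child clauses cannot be chained. Everything else is routine bookkeeping. Fortunately the Coverage axiom together with disjointness of branch sets supplies the lifting cleanly, so once that observation is extracted the argument falls into place.
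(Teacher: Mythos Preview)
Your proposal is correct and follows essentially the same route as the paper: invoke Lemma~\ref{lemma:compose-deconstructions} for the deconstruction property, check the minor-map and root clauses routinely, and for the child clause pick $m,m'$ with $g\in\mu(m)$, $g'\in\mu(m')$, argue $m\neq m'$ and $(m,m')\in E^{\relm}$, then chain the two given child clauses. The paper asserts the implication $(g,g')\in E^{\relg}\Rightarrow(m,m')\in E^{\relm}$ without comment, whereas you spell out the Coverage-plus-disjointness justification; that extra detail is welcome and exactly right.
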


\begin{proof}
 It is straightforward to verify that $n \mapsto \mu(\nu(n))$ is a
minor map
from~$\reln$ to~$\relg$.
It follows from the proof of Lemma~\ref{lemma:compose-deconstructions}
that $(\mu(\nu(n)))_{n \in N}$ is an $\reln$-deconstruction of $\relg$.
We verify that this deconstruction is nice, as follows.
By the niceness of $(\mu(m))_{m \in M}$ and $(\nu(n))_{n \in N}$,
we have that $g_0 \in \mu(m_0)$ and $m_0 \in \nu(n_0)$,
so $g_0 \in \mu(\nu(n_0))$
(here,  $g_0$, $m_0, n_0$ denote the roots of
$\relg$, $\relm$, $\reln$, respectively).
Next, suppose that $n'$ is a child of $n$ in $\reln$,
that $g \in \mu(\nu(n))$, that $g' \in \mu(\nu(n'))$,
and that $(g, g') \in E^{\relg}$.
There are $m, m' \in M$
such that
$g \in \mu(m)$, $m \in \nu(n)$,
$g' \in \mu(m')$, and $m' \in \nu(n')$.
Since $n' \neq n$, we have $m \neq m'$; 
as $(g, g') \in E^{\relg}$, we then have $(m, m') \in E^{\relm}$, 
and
from the niceness of $(\nu(n))_{n \in N}$, we have that $m'$ is a child of $m$ 
in $\relm$.
By the niceness of $(\mu(m))_{m \in M}$ then $g'$ is a child
of~$g$.
\end{proof}

For $d \geq 0$ and $k \geq 1$, let us say that a node $u$ of a rooted tree 
has property $P(d,k)$ if either $d = 0$ or
 $d > 0$ and $u$ has $k$ pairwise incomparable descendents
each having property $P(d-1, k)$.
(Here, we consider two nodes $v, v'$ to be incomparable if neither is
an ancestor of the other.)
Let us say that a rooted tree has property $P(d, k)$ if its root has
property $P(d,k)$.
Observe that 
such a tree 
contains $\relt_{d,k}$ as a minor.

\begin{lemma}
\label{lemma:property-transfer}
Suppose that $\relm$ and $\relg$ are rooted trees and that
$(\mu(m))_{m \in M}$ is a nice $\relm$-deconstruction of $\relg$.
For $d \geq 0$ and $k \geq 1$,
if $\relm$ has property $P(d,k)$, then so does $\relg$.
\end{lemma}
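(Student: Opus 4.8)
The plan is to induct on $d$. The base case $d = 0$ is immediate, since every rooted tree has property $P(0,k)$. For the inductive step, assume the statement for $d-1$ and suppose $\relm$ has property $P(d,k)$; we must exhibit, in $\relg$, a node with $k$ pairwise incomparable descendents each having $P(d-1,k)$. The natural candidate for this node is some element of $\mu(m_0)$, where $m_0$ is the root of $\relm$; niceness gives us $g_0 \in \mu(m_0)$, and I would aim to show $g_0$ itself works.

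The key structural fact I would establish first is a \emph{descendent-preservation} property of nice deconstructions: if $m'$ is a proper descendent of $m$ in $\relm$, then for every $g \in \mu(m)$ and every $g' \in \mu(m')$, $g'$ is a proper descendent of $g$ in $\relg$. This follows by applying the third niceness condition along the unique $\relm$-path from $m$ to $m'$: at each step $m_i \to m_{i+1}$ (child step), one picks the (unique) elements of $\mu(m_i)$ and $\mu(m_{i+1})$ joined by an edge of $\relg$ — these exist because $(\mu(m))$ is a minor map and connectivity of the bags $\mu(m_i)$ forces the chain to link up — and concludes a child step in $\relg$; concatenating gives a descendent relation, and disjointness of the $\mu$-images keeps it proper. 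A companion observation: if $m, m'$ are incomparable in $\relm$, then no element of $\mu(m)$ is comparable to an element of $\mu(m')$ in $\relg$ — otherwise the edge-path between the two bags would force $(m,m') \in E^{\relm}$ or a descendent relation between $m$ and $m'$, contradicting incomparability (here one uses that $\relg$ is a tree, so comparability of $g \in \mu(m)$ and $g' \in \mu(m')$ together with the $\relg$-path through the intervening bags propagates back to $\relm$).

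With these two facts in hand the induction closes cleanly. Since $\relm$ has $P(d,k)$, its root $m_0$ has $k$ pairwise incomparable descendents $m^{(1)}, \dots, m^{(k)}$, each with $P(d-1,k)$. For each $j$, restrict $\mu$ to the subtree of $\relg$ rooted at any chosen $g^{(j)} \in \mu(m^{(j)})$: the family $(\mu(m))_{m \text{ a descendent of } m^{(j)}}$ is a nice $\relm^{(j)}$-deconstruction of the subtree of $\relg$ at $g^{(j)}$ (niceness is inherited since all three conditions are local), so by the inductive hypothesis $g^{(j)}$ has property $P(d-1,k)$ in $\relg$. By the descendent-preservation fact each $g^{(j)}$ is a proper descendent of $g_0$, and by the incomparability fact the $g^{(j)}$ are pairwise incomparable in $\relg$. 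Hence $g_0$ witnesses $P(d,k)$ for $\relg$.

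The main obstacle is the descendent-preservation lemma: making precise that, along a path in $\relm$, the corresponding bags in $\relg$ chain together into an ancestor-descendent sequence. The subtlety is that a bag $\mu(m)$ may have more than one element, so one must argue that \emph{any} $g \in \mu(m)$ lies above \emph{any} $g' \in \mu(m')$, not merely the particular edge-endpoints handed to us by the minor-map condition; this is where connectivity of the bags $\mu(m)$ (each is a connected subtree of the tree $\relg$) together with the acyclicity of $\relg$ does the real work, forcing every element of a bag to be comparable to the "entry point" of the next bag.
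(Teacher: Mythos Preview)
Your overall strategy---induct on $d$, push the $P(d-1,k)$ witnesses through the minor map, and check incomparability---is the right one and matches the paper's. But the ``descendent-preservation'' lemma you isolate as the key step is stated too strongly and is in fact false. You claim that whenever $m'$ is a proper descendent of $m$, \emph{every} $g\in\mu(m)$ lies above \emph{every} $g'\in\mu(m')$. Counterexample: let $\relg$ be the rooted tree with root $g_0$, child $a$, and $a$ having children $b,c$; let $\relm$ have root $m_0$ with child $m_1$; set $\mu(m_0)=\{g_0,a,b\}$ and $\mu(m_1)=\{c\}$. All three niceness conditions hold (the only cross-bag edge is $(a,c)$ and $c$ is a child of $a$), yet $b\in\mu(m_0)$ and $c\in\mu(m_1)$ are siblings, not in an ancestor--descendent relation. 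Consequently your step~4 (``restrict to the subtree of $\relg$ rooted at any chosen $g^{(j)}\in\mu(m^{(j)})$'') fails: the restricted family need not even land inside that subtree.

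The repair is to stop choosing $g^{(j)}$ arbitrarily and instead take the \emph{highest} element of each bag, $\hi(\mu(m))$ (the unique vertex of $\mu(m)$ closest to the root of $\relg$; this exists because $\mu(m)$ is a connected subtree). What one can prove is: if $m'$ is a child of $m$, then $\hi(\mu(m'))$ is a proper descendent of $\hi(\mu(m))$---indeed the parent of $\hi(\mu(m'))$ must lie in some adjacent bag, and niceness forces that bag to be $\mu(m)$. From this, every element of $\mu(m')$ (being a descendent of $\hi(\mu(m'))$) lies below $\hi(\mu(m))$, and the ancestor chain $m\mapsto\hi(\mu(m))$ also preserves incomparability (your Claim~B is actually correct). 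With this choice the restriction to the subtree rooted at $\hi(\mu(m^{(j)}))$ really is a nice $\relm^{(j)}$-deconstruction, and your induction goes through. This is exactly the device the paper uses; the missing idea in your proposal is singling out $\hi(\mu(m))$ rather than an arbitrary bag element.
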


\begin{proof}
 For each $m \in M$, since $\mu(m)$ is connected,
it has a unique highest element, by which we mean the element with
shortest
distance to the root; denote this element by $\hi(\mu(m))$.
Suppose that
 $m'$ is a child of $m$ in $\relm$;
the parent of $\hi(\mu(m'))$ must, by the definition of
deconstruction,
lie in $\mu(m'')$ where $m''$ is adjacent to $m'$; but it must be that
$m'' = m$ by the niceness of $(\mu(m))_{m \in M}$.
Thus, if $m'$ is a child of $m$ in $\relm$,
then $\hi(\mu(m'))$ is a descendent of $\hi(\mu(m))$.
It follows by induction on $d$ that,
if a node $m \in M$ has property $P(d,k)$ in $\relm$, then
$\hi(\mu(m))$ has property $P(d,k)$ in $\relg$.
\end{proof}



\begin{proof}[Proof of Theorem~\ref{thm:stack-depth-decomps}.]
Let $K \geq 1$ be a constant.
Suppose that $\fG$ is a class of rooted trees
of height $\leq e$
which
do not have property $P(d+1, K)$;
we prove that
$\fG$ has nice $\T_d$-deconstructions of bounded width.
(This suffices, since the assumption that $\fG$ has stack depth $d$
implies that there is a constant $K \geq 1$ such that 
$\relt_{d+1, K} \notin \minors(\fG)$, which in turn implies that
the trees in $\fG$ do not have property $P(d+1, K)$.)

We proceed by induction on $d$.

\medskip

{\em Case $d = 0$: }  We have that
the trees in $\fG$ do not have property $P(1, K)$.
Consider a rooted tree from $\fG$.
The number of leaves is bounded above by $K$;
since each node is the ancestor of a leaf, the total number of nodes
is bounded above by $K (e + 1)$.
Thus $\fG$ has nice $\T_0$-deconstructions of width $\leq K(e + 1)$.

\medskip

{\em Case $d > 0$: }  
We argue by induction on $e$.
If $e \leq d$, then we have that $\fG \subseteq \T_d$ and we are done
(for each $\relg \in \fG$, 
use the $\relg$-deconstruction of $\relg$
discussed in conjunction with
reflexivity in Proposition~\ref{prop:preorder}).
So suppose that $e > d$.

Define a class of trees $\fG'$ as follows:
for each tree $\relg$ in~$\fG$, and for each child $c$ of the root of
$\relg$,
if $c$ does not have property $P(d, K)$, then place the subtree of
$\relg$
rooted at $c$ in $\fG'$.
The trees in $\fG'$ have bounded height and do not have property 
$P(d, K)$;
so, by induction,
there is a constant $w$ such that
$\fG'$ has nice $\T_{d-1}$-deconstructions of width $\leq w$.

Let $\relg$ be a tree in $\fG$.
Let $b^1, \ldots, b^L$ denote the children of the root $g_0$ that have
property
$P(d, K)$, and let $c^1, \ldots, c^Q$ denote the remaining children of
the root.
Since the root of $\relg$ does not have property $P(d + 1, K)$,
we have that $L < K$.
Let $\relg^1, \ldots, \relg^Q$ denote the subtrees of $\relg$
rooted at $c^1, \ldots, c^Q$, respectively.
Each $\relg^i$ is in $\fG'$, so 
for each $i \in [Q]$, there is a tree $\relt^i \in \T_{d-1}$
such that $\relg^i$ has a nice $\relt^i$-deconstruction
$(\mu^i(t))_{t \in T^i}$ of width $\leq w$.

Now define the tree $\relh$ to be the minor of $\relg$ obtained from
$\relg$
by contracting together the vertices $\{ g_0, b^1, \ldots, b^L \}$
to obtain $h_0$, 
and by replacing each $\relg^i$ with $\relt^i$.
Observe that the height of $\relh$ is $\leq e-1$.
The following map $\mu$ is a minor map from $\relh$ to $\relg$:
$\mu(h_0) = \{ g_0, b^1, \ldots, b^L \}$,
$\mu(t)$ is equal to $\mu^i(t)$ if $t \in T^i$,
and $\mu(h) = \{ h \}$ for all other vertices $h \in H$.
It is straightforward to verify that $(\mu(h))_{h \in H}$ gives a nice
$\relh$-deconstruction of $\relg$ having width $\leq \max(K, w)$.

Let $\fH$ denote the class of all trees $\relh$ obtained from $\relg \in
\fG$
in this way. 
We just saw that $\fG$ has nice $\fH$-deconstructions of bounded width.
Since $\fH$ has height $\leq e-1$ and does not have property 
$P(d+1, K)$
by Lemma~\ref{lemma:property-transfer}, 
by induction, $\fH$ has nice $\T_{d}$-deconstructions of bounded
width.
As a consequence of Lemma~\ref{lemma:transitivity-niceness},
we obtain that $\fG$ has nice $\T_d$-deconstructions of bounded width.
\end{proof}

\subsection{Hierarchy}
\label{subsect:hierarchy}

Recall that $\L,\T,\P$ denote the classes of graphs, trees and paths respectively, and $\T_d,\F_d$ denote 
the classes of trees respectively forests of height at most $d$.

\begin{theorem}[Graph hierarchy theorem]
\label{thm:graph-hierarchy}
The hierarchy
\begin{equation}\label{eq:hierarchy}\tag{$*$}
\T_0 \precneq \F_0 \precneq \T_1 \precneq \F_1 \precneq \cdots
\precneq \P \precneq \T \precneq \L
\end{equation}
presents correct relationships, and 
is comprehensive in that each class of graphs is equivalent 
(under $\equiv$) to one of the classes therein.
\end{theorem}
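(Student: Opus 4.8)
The plan is to establish, in order, (i) that (\ref{eq:hierarchy}) is a chain of $\preceq$-inequalities, (ii) that each inequality is strict, and (iii) comprehensiveness; throughout, the essential external inputs are Theorem~\ref{thm:excluded} and Theorem~\ref{thm:stack-depth-decomps}. The unifying device is a dictionary between $\preceq$ and the classical width parameters. By the two halves of Proposition~\ref{prop:decomposition-vs-deconstruction} with $\fH=\T$, respectively $\fH=\P$, a class has bounded treewidth iff it is $\preceq\T$, and bounded pathwidth iff it is $\preceq\P$. For tree depth: if every component of every $\relg\in\fG$ has tree depth $\le e$, then Proposition~\ref{prop:depth-gives-decomposition} gives each component a tree decomposition over a rooted tree of height $\le e$, and the disjoint union of these trees is a forest of height $\le e$, so $\fG\preceq\F_e$; conversely, an $\relh$-deconstruction of $\relg$ of width $\le w$ with $\relh$ a forest of height $\le d$ forces tree depth $\le(d+1)w$, by ordering the vertices of $\relg$ according to the $\relh$-depth of the (unique, by connectivity) top node of their bag-region, observing that the endpoints of every edge of $\relg$ have comparable top nodes and that each fibre of the top-node map lies inside a single bag of size $\le w$, and stacking the fibres along $\relh$. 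Thus bounded tree depth coincides both with the existence of some $d$ with $\fG\preceq\F_d$ and with the existence of some $d$ with $\fG\preceq\T_d$.

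Granting this dictionary, the $\preceq$-inequalities of (\ref{eq:hierarchy}) are quick: $\T_d\subseteq\F_d$ and $\P\subseteq\T\subseteq\L$ give the corresponding $\preceq$'s outright; $\F_d\preceq\T_{d+1}$ by adjoining a common root to a forest of height $\le d$; and $\F_d\preceq\P$ because forests of height $\le d$ have pathwidth $\le d$ (an easy induction: concatenate path decompositions of the root's subtrees, adding the root to every bag). Strictness of the three topmost steps follows from the dictionary, Theorem~\ref{thm:excluded}, and Proposition~\ref{prop:equivalence-with-minors}: $\L$, $\T$, $\P$ have among their minors every grid, every tree, every path, respectively, hence fail to have bounded treewidth, pathwidth, tree depth; so $\L\preceq\T$, $\T\preceq\P$, or $\P\preceq\F_d$ would transport that boundedness down one level via $\minors(\fG)\equiv\fG$, a contradiction.

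The strict steps inside the initial $\omega$-segment are the delicate ones, and I would derive them all from the single lemma $\{\relt_{d,k}:k\ge1\}\not\preceq\F_{d-1}$ (all $d\ge1$), proved by induction on $d$. Given an $\relh$-deconstruction of $\relt_{d,k}$ of width $\le w$ with $\relh\in\F_{d-1}$, connectivity of $\relt_{d,k}$ together with Proposition~\ref{prop:inheriting-connectivity} lets us restrict $\relh$ to the connected subgraph carrying the nonempty bags, so we may assume $\relh$ is a tree of height $\le d-1$; Proposition~\ref{prop:decomposition-vs-deconstruction} then converts this to a tree decomposition of $\relt_{d,k}$ over a tree of height $\le d-1$ with bags of size $\le 2w$. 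Rooting that tree, the root bag meets at most $2w$ of the $k$ pairwise disjoint depth-one subtrees of $\relt_{d,k}$ (each $\cong\relt_{d-1,k}$), so for $k>2w$ some such subtree has its bag-region missing the root, and restricting to it exhibits $\relt_{d-1,k}$ with a bounded-width tree decomposition over a tree of height $\le d-2$; ranging over $k$ this gives $\{\relt_{d-1,k}:k\}\preceq\T_{d-2}\subseteq\F_{d-2}$, contradicting the induction hypothesis (the base case $d=1$ being where the height-$0$ host forces the connected graph $K_{1,k}$ into one bag). Since $\{\relt_{d,k}:k\}\subseteq\T_d$, the lemma yields $\T_d\not\preceq\F_{d-1}$, hence $\F_{d-1}\precneq\T_d$ for all $d\ge1$. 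Running the same root-bag argument on the \emph{disconnected} family $\{k\cdot\relt_{d,k}:k\}\subseteq\F_d$ (whose bounded-width tree decompositions over height-$\le d$ trees, again via Proposition~\ref{prop:decomposition-vs-deconstruction}, meet only $\le 2w$ of the $k$ components at the root, the rest then decomposing over height-$\le d-1$ trees and contradicting the lemma) gives $\F_d\not\preceq\T_d$, hence $\T_d\precneq\F_d$ for $d\ge1$; the case $\T_0\precneq\F_0$ is immediate, since unboundedly many isolated vertices cannot be packed into bounded bags indexed by a single point.

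For comprehensiveness, take an arbitrary class $\fG$ and split on the coarsest width bound it satisfies. If $\fG$ has unbounded treewidth, Theorem~\ref{thm:excluded}(1) puts every grid in $\minors(\fG)$, so $\L\preceq\fG$ by Proposition~\ref{prop:equivalence-with-minors} and Example~\ref{ex:grid}, and $\fG\equiv\L$. If $\fG$ has bounded treewidth but unbounded pathwidth, Theorem~\ref{thm:excluded}(2) puts every tree in $\minors(\fG)$, giving $\T\preceq\fG$, while the dictionary gives $\fG\preceq\T$, so $\fG\equiv\T$; the analogue with Theorem~\ref{thm:excluded}(3) handles bounded pathwidth with unbounded tree depth and gives $\fG\equiv\P$. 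The remaining case, $\fG$ of bounded tree depth, is where I expect the genuine difficulty. Put $d$ equal to the (necessarily finite) stack depth of $\fG$. The lower bound $\T_d\preceq\fG$ is free: $\relt_{d,k}\in\minors(\fG)$ for every $k$, so $\{\relt_{d,k}:k\}\subseteq\minors(\fG)$, and every tree of height $\le d$ embeds as a subgraph of some $\relt_{d,k}$, whence $\T_d\equiv\{\relt_{d,k}:k\}\preceq\minors(\fG)\equiv\fG$. The upper bound $\fG\preceq\F_d$ is the crux, and amounts to upgrading Theorem~\ref{thm:stack-depth-decomps} from classes of trees to classes of bounded tree depth: one would reduce $\fG$ to a class of bounded-height forests via tree-depth decompositions, argue that this reduction does not raise the $\relt_{d,k}$-minor structure (so the forest class still has stack depth $d$), and then invoke Theorem~\ref{thm:stack-depth-decomps} component-wise. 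Once $\fG\preceq\F_d$ is in hand one concludes by cases: if $\fG\preceq\T_d$ then $\fG\equiv\T_d$, and otherwise a reprise of the family argument of the previous paragraph, carried out inside $\fG$, forces $\F_d\preceq\fG$ and hence $\fG\equiv\F_d$. I expect essentially all the real work of the theorem to be concentrated in this last case—above all in controlling stack depth under the reduction to bounded-height forests so that Theorem~\ref{thm:stack-depth-decomps} applies with the parameter $d$.
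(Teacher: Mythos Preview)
Your proposal follows the paper's architecture closely: the dictionary of paragraph~1 is Proposition~\ref{prop:bounded-characterizations}; strictness at the top via excluded minors is Lemma~\ref{lemma:top-hierarchy}; your root-bag pigeonhole on the families $\{\relt_{d,k}\}$ and $\{k\cdot\relt_{d,k}\}$ is a concrete instance of the ``$w{+}1$ copies'' induction in Lemma~\ref{lemma:bounded-td-hierarchy}; and the comprehensiveness case split tracks Lemmas~\ref{lemma:placing-G}--\ref{lemma:stack-depth-placement}. Two remarks are worth recording.

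First, you are right to supply explicit arguments for $\F_d\preceq\T_{d+1}$ (adjoin a root) and $\F_d\preceq\P$ (bounded pathwidth of bounded-height forests). The paper asserts the set inclusions $\F_d\subseteq\T_{d+1}$ and $\F_d\subseteq\P$, which are false as written (a two-vertex edgeless graph lies in $\F_0$ but not in $\T_1$); the $\preceq$-relations, however, hold for the reasons you give.

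Second, you slightly misplace the difficulty in the bounded-tree-depth case. Controlling stack depth under the passage to forests is cheap: take $\fH$ to be the forests that are minors of $\fG$; then $\minors(\fH)\subseteq\minors(\fG)$ gives stack depth $\le d$ for free, while $\fG\preceq\fH$ comes from depth-first spanning forests plus Proposition~\ref{prop:depth-gives-decomposition} (this is the paper's Lemma~\ref{lemma:graphs-to-forests}). The real content is the dichotomy $\fH\equiv\T_d$ versus $\fH\equiv\F_d$ (Lemma~\ref{lemma:stack-depth-in-hierarchy}), and it is not a reprise of the strictness pigeonhole. One introduces, for each $\relg\in\fH$ and each $k$, the count $\relg(k)$ of components of $\relg$ having $\relt_{d,k}$ as a minor. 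If these counts are unbounded for every $k$, then $k\cdot\relt_{d,k}\in\minors(\fH)$ for all $k$ and $\F_d\preceq\fH$ follows. In the complementary case some $K$ uniformly bounds $\relg(K)$; each $\relg$ then has at most $K$ ``big'' components (superposed into a single $\T_d$-deconstruction at the cost of multiplying width by $K$) and ``small'' components of stack depth $<d$, which by Theorem~\ref{thm:stack-depth-decomps} get $\T_{d-1}$-deconstructions of bounded width and are hung as fresh children of the root. This combinatorial combination, not the root-bag argument, is what yields $\fH\preceq\T_d$; your ``reprise'' would need to expand into exactly this.
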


We break the proof into several lemmas.
We begin by observing that the established conditions of
bounded treewidth, bounded pathwidth, and bounded tree depth
can be formulated using the $\preceq$ relation and the defined graph classes.

\begin{proposition}
\label{prop:bounded-characterizations}
Let $\fG$ be a class of graphs.
\begin{enumerate}\itemsep=0pt

\item $\fG$ has bounded treewidth if and only if $\fG \preceq \T$.

\item $\fG$ has bounded pathwidth if and only if $\fG \preceq \P$.

\item $\fG$ has bounded tree depth if and only if 
there exists $d \geq 0$
such that $\fG \preceq \F_d$.

\end{enumerate}
\end{proposition}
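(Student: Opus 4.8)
The plan is to prove each of the three equivalences by relating the notion of $\relh$-deconstruction to the standard notions of tree, path, and forest decomposition, using the auxiliary results already established (Proposition~\ref{prop:decomposition-vs-deconstruction}, Proposition~\ref{prop:depth-gives-decomposition}, and Proposition~\ref{prop:equivalence-with-minors}), together with the excluded-minor theorems of Theorem~\ref{thm:excluded}.

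For part (1): recall $\fG$ has bounded treewidth iff every $\relg \in \fG$ has a tree decomposition of width bounded by some fixed $w$. A tree decomposition is precisely an $\relh$-decomposition with $\relh$ a tree, so by Proposition~\ref{prop:decomposition-vs-deconstruction} (last sentence, with $\fH = \T$) this is equivalent to $\fG \preceq \T$. This direction is essentially immediate from what has been set up. For part (2) the argument is identical, replacing "tree" by "path" throughout: a path decomposition is an $\relh$-decomposition with $\relh$ a path, and bounded pathwidth is exactly the existence of a uniform width bound on such decompositions, hence equivalent to $\fG \preceq \P$ again via the last sentence of Proposition~\ref{prop:decomposition-vs-deconstruction} applied with $\fH = \P$ (noting $\P$ is a set of trees).

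For part (3), the forward direction: if $\fG$ has tree depth $\leq d$ then by definition each connected component of each $\relg \in \fG$ admits a rooted tree of height $\leq d$ whose closure contains its edge set; by Proposition~\ref{prop:depth-gives-decomposition} this yields, per component, an $\F_d$-decomposition (indeed a $\T_d$-decomposition on each component, assembling into an $\F_d$-decomposition of $\relg$) of width $d$, hence $\fG$ has $\F_d$-decompositions of bounded width, so by Proposition~\ref{prop:decomposition-vs-deconstruction} (with $\fH = \F_d$, a set of forests — here one should note the last sentence of that proposition is stated for sets of trees, so I would either observe that forests decompose componentwise into trees, or argue directly that an $\F_d$-deconstruction of bounded width yields an $\F_d$-decomposition of bounded width by the same parent-bag trick) we get $\fG \preceq \F_d$. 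The reverse direction is where I expect the only real work: suppose $\fG \preceq \F_d$. I would argue via excluded minors. Since $\F_d$ excludes some path $\relp$ as a minor (a forest of height $\leq d$ contains no path on more than, say, $2^{d+1}$ vertices as a minor), and since by Lemma~\ref{lemma:compose-deconstructions} $\fG \preceq \F_d$ implies every $\relg \in \fG$ has an $\relh$-deconstruction of bounded width with $\relh \in \F_d$, I want to conclude $\minors(\fG)$ also omits a fixed path. The cleanest route: combine $\fG \equiv \minors(\fG)$ (Proposition~\ref{prop:equivalence-with-minors}) with the fact that $\preceq$ is transitive, so $\minors(\fG) \preceq \F_d$; then show that bounded-width deconstructions over $\F_d$ cannot produce arbitrarily long paths as minors — a path on $n$ vertices with an $\relh$-deconstruction of width $\leq w$ over $\relh \in \F_d$ forces $\relh$ to have $\Omega(n/w)$ vertices along some root-to-leaf structure, contradicting height $\leq d$ once $n$ is large. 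Hence some fixed path is excluded from $\minors(\fG)$, and by the excluded path theorem (Theorem~\ref{thm:excluded}(3)) $\fG$ has bounded tree depth.

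The main obstacle is this last step of part (3): turning "bounded-width deconstruction over a bounded-height forest" into a bound on the length of paths appearing as minors. The key lemma I would isolate is that if $\relp_n$ (the path on $n$ vertices) has an $\relh$-deconstruction of width $\leq w$, then $\relh$ contains $\relt_{1,\lceil n/(2w)\rceil}$-like long substructure — more precisely, that the tree depth (or at least the minor-path-length) of $\relh$ is at least roughly $n/(2w)$; this follows by tracking, via the connectivity condition, how consecutive bags must overlap along the path, forcing a long connected "backbone" in $\relh$. This is the only place genuine combinatorics beyond bookkeeping is needed; everything else is assembled from the cited propositions.
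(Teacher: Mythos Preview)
Your treatment of parts (1), (2), and the forward direction of (3) matches the paper's proof essentially exactly: these follow directly from Proposition~\ref{prop:decomposition-vs-deconstruction} (together with Proposition~\ref{prop:depth-gives-decomposition} for the forward half of (3)), and your observation that the last sentence of Proposition~\ref{prop:decomposition-vs-deconstruction} is stated for trees rather than forests is a fair point that the paper also handles implicitly via components.

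For the backward direction of (3) you take a genuinely different route from the paper. The paper argues directly: from $\fG \preceq \F_d$ one obtains (via Proposition~\ref{prop:decomposition-vs-deconstruction}, componentwise) $\F_d$-decompositions of width $< w$, and then observes that an $\relh$-decomposition with $\relh$ a rooted forest of height $\le d$ and bags of size $\le w$ yields tree depth $\le w(d+1)$ --- one simply replaces each node $h$ of $\relh$ by a chain of the elements of $B_h \setminus B_{\text{parent}(h)}$. This is the content of the cited remark in Blumensath--Courcelle. Your route via Proposition~\ref{prop:equivalence-with-minors} and the excluded path theorem is in principle workable, but the quantitative claim you sketch is incorrect: it is not true that a width-$w$ deconstruction of $\relp_n$ over $\relh \in \F_d$ forces a root-to-leaf path of length $\Omega(n/w)$ in $\relh$. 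Already for $d=1$ (a star), removing the $\le w$ vertices in the root bag splits $\relp_n$ into $\le w+1$ intervals, each of which must lie in a single leaf bag and hence has length $\le w$; so a star handles paths of length $\Theta(w^2)$, not $\Theta(w)$. The correct bound on $n$ is exponential in $d$ for fixed $w$, and the cleanest way to establish it is precisely the direct tree-depth bound above --- so your detour through excluded minors ultimately needs the paper's argument as a subroutine, without gaining anything.
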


\begin{proof}
 The first two claims follow immediately from 
Proposition~\ref{prop:decomposition-vs-deconstruction}.
For the third claim, we reason as follows.
For the forward direction,
let $d$ be an upper bound on the tree depth of $\fG$,
and
let $\relg$ be a graph in $\fG$;
Proposition~\ref{prop:depth-gives-decomposition}
(along with 
Proposition~\ref{prop:decomposition-vs-deconstruction})
implies that each component of $\relg$ has a $\T_d$-deconstruction
of width $\leq d$.
For the backward direction, 
Proposition~\ref{prop:decomposition-vs-deconstruction}
gives a constant $w$ such that
$\fG$ has $\F_d$-decompositions of width $< w$,
which implies that $\fG$ has tree depth $\leq wd$
(see~\cite[Remark 4.3(a)]{BlumensathCourcelle10-transduction}).
\end{proof}

In the next two lemmas, we present the negative results needed 
to give the hierarchy, showing that various pairs of graph classes
are not related by $\preceq$.

\begin{lemma}
\label{lemma:placing-G}
Let $\fG$ be a class of graphs.
\begin{enumerate}\itemsep=0pt

\item If $\L \not\preceq \fG$, then $\fG \preceq \T$.

\item If $\T \not\preceq \fG$, then $\fG \preceq \P$.

\item If $\P \not\preceq \fG$, then $\fG$ has bounded tree depth.

\end{enumerate}
\end{lemma}

\begin{proof}\mo{shorter phrasing}
To show (3), assume $\P \not\preceq \fG$.
By Proposition~\ref{prop:equivalence-with-minors},
$\P \not\preceq \minors(\fG)$,
so $\P \not\subseteq \minors(\fG)$,
and hence
$\fG$ has bounded tree depth by 
Theorem~\ref{thm:excluded}.

For (2) we reason analogously: 
if $\T \not\preceq \fG$, 
then $\fG$ has bounded pathwidth
%
by Proposition~\ref{prop:equivalence-with-minors}
and
Theorem~\ref{thm:excluded}
and hence $\fG \preceq \P$
by Proposition~\ref{prop:bounded-characterizations}.

Also (1) is proved analoguously, but
in order to use Theorem~\ref{thm:excluded}, 
we need to prove that $\L \preceq \R$, where
$\R$ denotes the class of all grids;
then $\L \not\preceq \fG$ implies $\R \not\preceq \fG$.
That $\L \preceq \R$ follows from Example~\ref{ex:grid}:
each graph has $\R$-deconstructions of width $\leq 2$.
\end{proof}

\begin{lemma}
\label{lemma:top-hierarchy}
The following relationships hold.
\begin{enumerate}\itemsep=0pt

\item $\L \not\preceq \T$.

\item $\T \not\preceq \P$.

\item $\P \not\preceq \F_d$, for all $d \geq 0$.

\end{enumerate}
\end{lemma}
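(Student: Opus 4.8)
The plan is to prove the three non-reducibilities by exhibiting, for each claimed relation $\fX \not\preceq \fY$, a family of graphs in $\fX$ whose $\fY$-deconstructions must have unbounded width. In each case I would argue contrapositively: assume $\fX \preceq \fY$ with some fixed width bound $w$, and derive a contradiction by showing that the witnessing deconstructions force structure on the graphs of $\fY$ that they cannot have. A convenient uniform tool is the following principle, which should be extractable from the definitions already in the excerpt: if $\relg$ has an $\relh$-deconstruction $(B_h)_{h\in H}$ of width $\le w$, then contracting, in $\relh$, the connected pieces $\{h : g \in B_h\}$ does not immediately give a minor map (the pieces need not be disjoint), but one can still transfer treewidth/pathwidth/tree-depth lower bounds from $\relg$ to $\relh$ up to a factor depending on $w$. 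Concretely, I expect to show that an $\relh$-deconstruction of $\relg$ of width $\le w$ yields an $\relh$-decomposition of width $O(w \cdot w')$ whenever $\relh$ has bounded degree or, more robustly, that it yields a tree/path decomposition of $\relg$ of bounded width once $\relh$ itself has a tree/path decomposition of bounded width — composing via Lemma~\ref{lemma:compose-deconstructions} and Proposition~\ref{prop:decomposition-vs-deconstruction}.

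For part (1), $\L \not\preceq \T$: take $\fG = \R$, the class of all grids. If $\R \preceq \T$ then by Proposition~\ref{prop:bounded-characterizations}(1) $\R$ would have bounded treewidth, but the $n$-by-$n$ grid has treewidth $n$, so by the excluded grid theorem (Theorem~\ref{thm:excluded}(1)) $\R$ does not have bounded treewidth — contradiction. For part (2), $\T \not\preceq \P$: if $\T \preceq \P$ then by Proposition~\ref{prop:bounded-characterizations}(2) the class of all trees would have bounded pathwidth; but complete binary trees of height $n$ have pathwidth growing with $n$, equivalently, by the excluded tree theorem (Theorem~\ref{thm:excluded}(2)), since $\minors(\T) = \T$ contains every tree, $\T$ cannot have bounded pathwidth — contradiction. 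For part (3), $\P \not\preceq \F_d$ for all $d$: if $\P \preceq \F_d$, then by Proposition~\ref{prop:bounded-characterizations}(3) $\P$ has bounded tree depth; but the path on $n$ vertices has tree depth $\lceil \log_2(n+1)\rceil$, which is unbounded, so by the excluded path theorem (Theorem~\ref{thm:excluded}(3)), since $\minors(\P)$ contains arbitrarily long paths, $\P$ cannot have bounded tree depth — contradiction.

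So in fact all three parts reduce cleanly to Proposition~\ref{prop:bounded-characterizations} together with the three excluded-minor theorems and the standard facts that grids have unbounded treewidth, trees have unbounded pathwidth, and paths have unbounded tree depth. The only genuinely substantive ingredient is Proposition~\ref{prop:bounded-characterizations}, which is already proved in the excerpt; given that, each implication is a one-line contrapositive. The main obstacle I anticipate is not in these three items at all but in marshalling the correct ``standard facts'': specifically, one should double-check that the $n$-by-$n$ grid, the height-$n$ complete binary tree, and the length-$n$ path are genuinely members of $\minors$ of the relevant class (trivially true, since each class is minor-closed or contains the graphs outright), and that their treewidth / pathwidth / tree-depth are unbounded as a function of $n$ — all of which are classical. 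I would also remark, for the record, that these negative results are exactly what is needed, in combination with the positive inclusions $\T_0 \subseteq \F_0 \subseteq \T_1 \subseteq \cdots \subseteq \P \subseteq \T \subseteq \L$ and Theorem~\ref{thm:stack-depth-decomps}, to conclude the strictness of the full hierarchy in Theorem~\ref{thm:graph-hierarchy}, but the strictness among the lower levels $\F_{d} \precneq \T_{d+1}$ etc.\ will require the separate stack-depth machinery and is not part of the present lemma.
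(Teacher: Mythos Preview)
Your second paragraph is correct and matches the paper's proof exactly: the paper simply says the lemma is ``immediate from Theorem~\ref{thm:excluded} and Proposition~\ref{prop:bounded-characterizations},'' and your three contrapositive arguments spell this out. The speculative machinery in your first paragraph (contracting bags, transferring width lower bounds from $\relg$ to $\relh$) is unnecessary here---Proposition~\ref{prop:bounded-characterizations} already does all the work of converting $\preceq$-statements into bounded-treewidth/pathwidth/tree-depth statements, after which the excluded-minor theorems finish each item in one line.
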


\begin{proof}
Immediate from
Theorem~\ref{thm:excluded}
and Proposition~\ref{prop:bounded-characterizations}.
\end{proof}

\begin{lemma}
\label{lemma:bounded-td-hierarchy}
For each $d \geq 0$, the following hold.
\begin{enumerate}\itemsep=0pt

\item $\F_d \not\preceq \T_d$.
\item $\T_{d+1} \not\preceq \F_d$.

\end{enumerate}
\end{lemma}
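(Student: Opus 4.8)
The plan is to prove both non-relationships by exhibiting, for each fixed $d$, a graph in the purported larger class all of whose $\relh$-deconstructions (for $\relh$ ranging over the smaller class) require unbounded width; by Proposition~\ref{prop:decomposition-vs-deconstruction} it is equivalent, since we deal with forests and trees, to rule out bounded-width $\relh$-\emph{decompositions} in the classical sense, so the arguments reduce to familiar decomposition obstructions. For part~(1), $\F_d \not\preceq \T_d$, the witnessing graphs are disjoint unions of many copies of a single tree of height $d$: take $\relg_n$ to be the disjoint union of $n$ copies of $\relt_{d,2}$ (or any fixed tree of height exactly $d$ with at least one vertex at depth $d$). A tree $\relh \in \T_d$ hosting a width-$\leq w$ deconstruction of $\relg_n$ must, by the connectivity condition and the argument of Proposition~\ref{prop:inheriting-connectivity}, contain $n$ pairwise-disjoint connected subsets of $H$ (one per component of $\relg_n$, each meeting only bags whose union covers that component's edges). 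A tree of height $\leq d$ has at most $2^{d+1}-1$ vertices if it is binary — but $\relh$ need not be binary; the cleaner route is: the bags covering one copy of $\relt_{d,2}$ must themselves, by Proposition~\ref{prop:decomposition-vs-deconstruction}(2), form an $\relh'$-decomposition (for some subtree $\relh'$ of $\relh$) of a graph of tree depth $d$, hence $\relh'$ has height $\geq$ the pathwidth-type parameter forcing $|\relh'| \geq$ some function of $d$; more simply, a height-$\leq d$ tree decomposition of $\relt_{d,2}$ needs width $\geq d$, but a \emph{deconstruction} only needs width $\sim d$ too, so the width bound alone does not separate a single copy. The actual obstruction is the multiplicity: I would argue that in any $\T_d$-deconstruction of $\relg_n$ of width $\leq w$, each root-to-leaf branch of $\relh$ can ``serve'' only boundedly many components (each component forces a vertex of $\relh$ at depth $\geq d$ whose ancestor-path is monopolized, up to the width budget, by that component's bags), so $\relh$ would need $\geq n/f(w)$ leaves, yet also height $\leq d$ and — crucially — one shows the relevant covering structure forces $\relh$ itself to have tree depth $> d$ once $n$ is large, contradicting $\relh \in \T_d$. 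I expect this multiplicity-versus-height counting to be the main obstacle: getting the bookkeeping right so that $n$ large genuinely forces height $> d$ rather than merely large width.

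For part~(2), $\T_{d+1} \not\preceq \F_d$, the witnessing graph is $\relt_{d+1,k}$ itself for large $k$: it is a tree of height $d+1$, and we must show that for every $w$ there is $k$ such that $\relt_{d+1,k}$ has no $\relf$-deconstruction of width $\leq w$ with $\relf$ a forest of height $\leq d$. Since $\relt_{d+1,k}$ is connected, any $\relf$-deconstruction is supported on a single tree-component of $\relf$, which is a tree of height $\leq d$. By Proposition~\ref{prop:decomposition-vs-deconstruction}(2) a width-$\leq w$ deconstruction over a height-$\leq d$ tree yields a height-$\leq d$ \emph{decomposition} of width $< 2w$. The plan is then to invoke a known lower bound: a tree of height $\leq d$ admitting a width-$< 2w$ tree decomposition of $\relt_{d+1,k}$ would witness, via the ancestor-bag structure, that $\relt_{d+1,k}$ has tree depth bounded in terms of $d$ and $w$ only — but $\relt_{d+1,k}$ contains $\relt_{d+1,k}$ as a minor, so its tree depth is $\geq d+2$ (indeed exactly $d+1$ as a rooted object, $\geq d+1$ in any case), and choosing $k$ large enough that every height-$\leq d$, width-$<2w$ decomposition is impossible does the job. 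Concretely I would cite the standard fact that a graph with a width-$q$ tree decomposition whose decomposition-tree has height $h$ has tree depth $\leq q \cdot h$ (cf.~the backward direction of Proposition~\ref{prop:bounded-characterizations}(3) and \cite[Remark 4.3(a)]{BlumensathCourcelle10-transduction}), so height-$\leq d$ width-$<2w$ decompositions give tree depth $\leq 2wd$; picking $k$ with $\relt_{d+1,k}$ of tree depth $> 2wd$ (e.g.\ $k$ such that a complete $k$-ary tree of height $d+1$ already exceeds this, which holds for all sufficiently large $k$ by Theorem~\ref{thm:excluded}(3) together with the stack-depth discussion) yields the contradiction.

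In both parts the structure is the same: reduce from deconstructions to decompositions over the relevant restricted host class (trees of height $\leq d$, resp.\ forests of height $\leq d$) using Proposition~\ref{prop:decomposition-vs-deconstruction}(2), then derive a tree-depth upper bound from the bounded width plus bounded host-height, and finally contradict it by a graph whose tree depth (equivalently, whose minors, via Theorem~\ref{thm:excluded}(3) and the $\relt_{h,k}$ family underlying the stack-depth definition) is too large. The delicate point for part~(1) is that a \emph{single} height-$d$ forest-component cannot separate $\F_d$ from $\T_d$ on width grounds — $\F_d$ and $\T_d$ differ only in connectedness — so the separation must come from packing many components into one tree, and the argument must show this packing is impossible within height $d$. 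I anticipate spending most of the effort making that packing bound precise; part~(2) should be essentially a direct application of the tree-depth-from-bounded-width-and-height estimate.
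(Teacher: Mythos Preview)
Your tree-depth strategy has a real gap. For part~(2) you plan to argue: a width-$<2w$ decomposition over a height-$\leq d$ tree forces tree depth $\leq 2wd$, so pick $k$ making the tree depth of $\relt_{d+1,k}$ exceed $2wd$. But the tree depth of $\relt_{d+1,k}$ is at most $d+1$ for \emph{every} $k$ --- the tree itself, with its natural rooting of height $d+1$, witnesses this. So the quantity you want to push past $2wd$ is capped at $d+1$, and for any $w\geq 1$ and $d\geq 1$ you get no contradiction. The same obstruction kills the summarised plan for part~(1): the disjoint union $\relg_n$ of $n$ copies of a fixed height-$d$ tree has tree depth equal to that of a single copy, hence $\leq d$, again bounded independently of $n$. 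Tree depth is simply too coarse an invariant to separate $\T_d$ from $\T_{d+1}$ or $\T_d$ from $\F_d$; all these classes already have uniformly bounded tree depth.

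The paper's argument avoids any quantitative invariant and instead runs a clean induction on $d$. Part~(2) is \emph{derived} from part~(1): since trees are connected, Proposition~\ref{prop:inheriting-connectivity} forces any $\F_d$-deconstruction of a tree in $\T_{d+1}$ to live in a single component of the host forest, i.e.\ in some tree in $\T_d$; so $\T_{d+1}\preceq\F_d$ would give $\T_{d+1}\preceq\T_d$, and since $\F_d\preceq\T_{d+1}$ (attach a new root to any forest in $\F_d$) this yields $\F_d\preceq\T_d$, contradicting~(1). For part~(1) with $d>0$, assume $\F_d\preceq\T_d$ with width bound $w$. Take $w+1$ disjoint copies of an arbitrary $\relt\in\T_d$; this forest has a width-$\leq w$ deconstruction over some $\relh\in\T_d$. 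Root $\relh$ at $h_0$: since $|B_{h_0}|\leq w$, one copy of $\relt$ is disjoint from $B_{h_0}$, and by Proposition~\ref{prop:inheriting-connectivity} its support lies in a subtree of $\relh\setminus\{h_0\}$, which has height $\leq d-1$. Thus $\T_d\preceq\T_{d-1}\subseteq\F_{d-1}$, contradicting part~(2) at level $d-1$. Your ``packing'' intuition for part~(1) is pointing at exactly this root-bag pigeonhole; what you are missing is that removing the root drops the height by one and closes an induction, rather than producing a tree-depth contradiction.
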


\begin{proof}
We prove this by induction on $d$.
When $d = 0$, the claim that $\F_d \not\preceq \T_d$ is
clear, since $\T_0$ contains only one-vertex trees,
but $\F_0$ contains graphs with arbitrarily many vertices.
The claim that $\T_{d+1} \not\preceq \F_d$
always follows from $\F_d \not\preceq \T_d$, as follows:
the assumption $\T_{d+1} \preceq \F_d$ 
implies 
$\T_{d+1} \preceq \T_d$
by Proposition~\ref{prop:inheriting-connectivity},
and this 
implies 
$\F_d \preceq \T_d$.

Let $d > 0$; we need to prove that $\F_d \not\preceq \T_d$.
Suppose for a contradiction that $w$ is a constant
such that $\F_d$ has $\T_d$-deconstructions of width $\leq w$.
We show that $\T_d \preceq \F_{d-1}$, which contradicts the induction
hypothesis.
Given an arbitrary tree $\relt$ in $\T_d$,
create~$w+1$ copies of it; the resulting graph has
 an~$\relh$-deconstruction $(B_h)_{h \in H}$
of width $\leq w$, with $\relh \in \T_d$.
Root~$\relh$ with a vertex $h_0$ to witness that its height is $\leq
d$.
Since the bag $B_{h_0}$ has size $\leq w$, there must be a copy of~$\relt$
that is disjoint from $B_{h_0}$; call this the \emph{key copy}
and denote its universe by $K$.
By Proposition~\ref{prop:inheriting-connectivity},
it is possible to take a subtree $\relh'$ of~$\relh$ that excludes
$h_0$
such that $(B_h \cap K)_{h \in H'}$ gives an $\relh'$-deconstruction of
the key copy.  The graph $\relh'$ has height $\leq d-1$.
We thus showed that $\T_d$ has $\T_{d-1}$-deconstructions of width
$\leq w$, implying that $\T_d \preceq \F_{d-1}$, as desired.
\end{proof}

%

\begin{lemma}
\label{lemma:graphs-to-forests}
Let $\fG$ be a class of graphs.
If $\fG$ has bounded tree depth, then there exists a class $\fH$
of forests of bounded height such that $\fG \equiv \fH$ and
such that $\fG$ and $\fH$ have the same stack depth.
(Namely, one can take $\fH$ to be the class of all forests in $\minors(\fG)$.)
\end{lemma}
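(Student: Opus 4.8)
\medskip\noindent\textbf{Proof plan.}
The plan is to take $\fH$ to be exactly the set named in the parenthetical, namely $\fH := \{ \relh \in \minors(\fG) \mid \relh \textup{ is a forest} \}$, and to establish three things: (a) $\fH$ is a set of forests of bounded height; (b) $\fG \equiv \fH$; and (c) $\fG$ and $\fH$ have the same stack depth. For (a), since $\fG$ has bounded tree depth, Theorem~\ref{thm:excluded}(3) supplies a path $\relp$ with $\relp \notin \minors(\fG)$; hence no forest in $\minors(\fG)$ can contain $\relp$ as a subgraph (else $\relp$ would be a minor of that forest, hence of a graph in $\fG$), so the longest root-to-leaf path in such a forest has length below a constant $c$ depending only on $\fG$, giving $\fH \subseteq \F_c$.

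For the direction $\fH \preceq \fG$ of (b): $\fH \subseteq \minors(\fG)$ gives $\fH \preceq \minors(\fG)$, and $\minors(\fG) \preceq \fG$ by Proposition~\ref{prop:equivalence-with-minors}, so $\fH \preceq \fG$ by transitivity (Proposition~\ref{prop:preorder}). The direction $\fG \preceq \fH$ is the crux. Fix $\relg \in \fG$ and work componentwise: for each connected component $C$ of $\relg$, pick a depth-first-search spanning tree $\relt_C$ of $C$, rooted at the search root. Then $\relt_C$ is a subgraph of $C$, hence of $\relg$, so $\relt_C \in \minors(\relg) \subseteq \minors(\fG)$; being a tree it lies in $\fH$, so by (a) its height is at most $c$. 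Moreover, since a DFS of an undirected graph creates no cross edges, $E^C$ is contained in the closure of $\relt_C$. Thus Proposition~\ref{prop:depth-gives-decomposition}, applied to the connected graph $C$ with $\relt_C$ as the witnessing rooted tree of height $\le c$, shows that $C$ has a $\relt_C$-decomposition of width $\le c$ in which adjacent bags are nested (each node's bag contains its parent's); by Proposition~\ref{prop:decomposition-vs-deconstruction}(1) this is in particular a $\relt_C$-deconstruction of width $\le c+1$.

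Now set $\relh := \bigsqcup_C \relt_C$ (disjoint union over the components of $\relg$). This is a forest of height $\le c$ and a subgraph of $\relg$, so $\relh \in \fH$; and since $\relg$ has no edges between distinct components, the union of the componentwise deconstructions, taken over disjoint index sets, is an $\relh$-deconstruction of $\relg$ of width $\le c+1$ (coverage and connectivity are inherited componentwise). Hence $\fG$ has $\fH$-deconstructions of bounded width, i.e.\ $\fG \preceq \fH$, which completes (b). For (c): stack depth is defined purely through the minor sets via membership of the trees $\relt_{h,k}$. From $\fH \subseteq \minors(\fG)$ and transitivity of the minor relation we get $\minors(\fH) \subseteq \minors(\fG)$, so the stack depth of $\fH$ is at most that of $\fG$. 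Conversely each $\relt_{h,k}$ is a tree, so whenever $\relt_{h,k} \in \minors(\fG)$ it is a forest lying in $\minors(\fG)$, hence $\relt_{h,k} \in \fH \subseteq \minors(\fH)$; this gives the reverse inequality, and the two stack depths coincide.

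The step I expect to be the main obstacle is the direction $\fG \preceq \fH$ of (b): one must exhibit, for every $\relg \in \fG$, a single forest that is simultaneously a minor of $\relg$ and an index graph for a bounded-width deconstruction of $\relg$. The difficulty is that an optimal elimination forest witnessing bounded tree depth need not be a subgraph of $\relg$ --- nor even a minor of it --- so it cannot simply be fed to Proposition~\ref{prop:depth-gives-decomposition}; the resolution above is to use a DFS spanning forest instead, which is automatically a subgraph (hence a minor), still has all edges of $\relg$ inside its closure, and whose bounded height is handed to us for free by part (a). The remaining steps are bookkeeping built on Proposition~\ref{prop:equivalence-with-minors}, Proposition~\ref{prop:depth-gives-decomposition}, Theorem~\ref{thm:excluded}(3), and transitivity of the minor and $\preceq$ relations.
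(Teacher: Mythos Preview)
Your proposal is correct and follows essentially the same route as the paper: you choose the same set $\fH$, obtain $\fH \preceq \fG$ from Proposition~\ref{prop:equivalence-with-minors}, and for $\fG \preceq \fH$ you use a depth-first-search spanning forest of each $\relg \in \fG$ together with Proposition~\ref{prop:depth-gives-decomposition}, exactly as the paper does (citing Diestel's normal spanning trees). Your justification of bounded height via the excluded path theorem and your argument for equal stack depth via shared tree minors are also what the paper's terse sentences ``$\fH$ has bounded height, as $\fG$ has bounded tree depth'' and ``the sets $\fG$ and $\fH$ have the same tree minors'' are encoding.
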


\begin{proof}
 Let $\fH$ be as described.  The class $\fH$ has bounded height,
as $\fG$ has bounded tree depth.  The classes $\fG$ and $\fH$
have the same tree minors, so they have the same stack depth.
We have $\fH \preceq \fG$ by
Proposition~\ref{prop:equivalence-with-minors}.

The proof of $\fG \preceq \fH$ is 
along the lines of that 
of~\cite[Lemma 4.8]{BlumensathCourcelle10-transduction}.
We provide it here for completeness.
For each graph $\relg \in \fG$,
let $\relh$ be a forest that contains a depth-first search tree of
each component of $\relg$
(for precise details, we refer to Diestel~\cite{Diestel-GraphTheory4thEdition}, 
who speaks of \emph{normal spanning trees}; see Proposition 1.5.6).
Since $\relh$ is a subgraph of $\relg$, we have $\relh \in \fH$.
Each component of $\relg$ is contained
in the closure of its corresponding component in $\relh$
(in the sense of the definition of tree depth), so 
Proposition~\ref{prop:depth-gives-decomposition} and the fact that
$\fH$ has bounded height
imply that $\fG \preceq \fH$.
\end{proof}

\begin{lemma}
\label{lemma:stack-depth-in-hierarchy}
Let $\fG$ be a class of forests having bounded height,
and let $d$ denote the stack depth of $\fG$.
It holds either that $\fG \equiv \T_d$ or that $\fG \equiv \F_d$.
\end{lemma}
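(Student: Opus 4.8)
The goal is to show that a set $\fG$ of bounded-height forests with stack depth $d$ satisfies $\fG \equiv \T_d$ or $\fG \equiv \F_d$. The plan is to prove two directions: first that $\fG \preceq \F_d$, and second that either $\T_d \preceq \fG$ (giving $\fG \equiv \F_d$ using $\T_d \preceq \F_d$) or $\F_d \preceq \fG$ (giving $\fG \equiv \F_d$ trivially) — more precisely, to show $\fG$ sits exactly at level $d$ and then decide between the tree and forest variant according to whether $\fG$ contains (up to equivalence) arbitrarily many disjoint height-$d$ trees.

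\textbf{Step 1: $\fG \preceq \F_d$.} Since each component of each $\relg \in \fG$ is a tree, and $\fG$ has stack depth $d$, the set of components of graphs in $\fG$ is a set of bounded-height trees of stack depth $\le d$. Apply Theorem~\ref{thm:stack-depth-decomps} to this set of components to obtain that it has $\T_d$-deconstructions of bounded width. Taking disjoint unions of these $\T_d$-deconstructions over the components of a fixed $\relg$ yields an $\fF_d$-deconstruction (a disjoint union of height-$\le d$ trees is a height-$\le d$ forest), of the same bounded width. Hence $\fG \preceq \F_d$.

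\textbf{Step 2: lower bound, $\F_d \preceq \fG$ or $\T_d \preceq \fG$.} Here I would use the definition of stack depth: since $d$ is the stack depth of $\fG$, for every $k \ge 1$ the tree $\relt_{d,k}$ is a minor of some graph in $\fG$, hence (the component containing it being a tree) $\relt_{d,k} \in \minors(\fG)$. Observe that $\relt_{d,k}$ is a tree of height exactly $d$, and $\{\relt_{d,k} \mid k \ge 1\}$ has stack depth $d$ and, one checks, is equivalent to $\T_d$ under $\preceq$: indeed every height-$\le d$ tree on $n$ vertices has a $\relt_{d,n}$-deconstruction of bounded width (embed it into the full $n$-branching height-$d$ tree and use bags of size $\le d$), so $\T_d \preceq \{\relt_{d,k}\}_k$, and the reverse inclusion is immediate. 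Combining with $\{\relt_{d,k}\}_k \subseteq \minors(\fG)$ and Proposition~\ref{prop:equivalence-with-minors} gives $\T_d \preceq \minors(\fG) \equiv \fG$, so $\T_d \preceq \fG$ always. The remaining question is whether $\F_d \preceq \fG$ as well. I would split on a dichotomy: either there is a bound $N$ on the number of components of height $\ge d$ that any $\relg \in \fG$ can have a deconstruction onto, equivalently whether the set of height-$\le d$ forests that are minors of $\fG$ is $\preceq$-equivalent to $\T_d$ or to $\F_d$. Concretely: if there exists a constant $c$ such that every $\relg\in\fG$ has at most $c$ components of tree depth exactly $d$, then I claim $\fG \preceq \T_d$ (handle the $\le c$ deep components by absorbing them into a single bag via niceness as in the proof of Theorem~\ref{thm:stack-depth-decomps}, and the shallow components have stack depth $< d$ hence, by that theorem, $\T_{d-1}$-deconstructions of bounded width, which compose into a $\T_d$-deconstruction), giving $\fG \equiv \T_d$. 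Otherwise, $\fG$ contains graphs with unboundedly many tree-depth-$d$ components; since any single tree-depth-$d$ tree $\relt$ has $\relt_{d,1}$ — a path of length $d$ — as a minor, and more: I would argue that $\fG$ then has, for every $m$, the disjoint union of $m$ copies of $\relt_{d, K}$ as a minor for a suitable fixed $K$, and that $\F_d$ is $\preceq$-equivalent to the class of such disjoint unions (every height-$\le d$ forest with $m$ components embeds into $m$ disjoint copies of a large enough $\relt_{d,n}$, using bags of size $\le d$, analogously to the tree case). Hence $\F_d \preceq \fG$, giving $\fG \equiv \F_d$.

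\textbf{Main obstacle.} The delicate point is Step 2, the dichotomy between the ``$\T_d$ case'' and the ``$\F_d$ case.'' Formalizing ``$\fG$ has boundedly many deep components'' in a way that is both $\preceq$-invariant and sufficient to carry out the $\T_d$-deconstruction requires care: one must argue that shallow components (tree depth $< d$) can be handled uniformly by Theorem~\ref{thm:stack-depth-decomps}, that the bounded number of deep components can each be contracted into $\T_d$ and then merged at the root into one bag (the niceness machinery of Lemmas~\ref{lemma:transitivity-niceness} and \ref{lemma:property-transfer} is exactly what licenses this), and that when deep components are unbounded one genuinely gets $\F_d$ from below and not merely $\T_d$. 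I expect the argument to mirror closely the inductive structure already developed for Theorem~\ref{thm:stack-depth-decomps}, with the forest-vs-tree bookkeeping being the only real addition; the equivalences $\T_d \equiv \{\relt_{d,k}\}_k$ and $\F_d \equiv \{$disjoint unions of $\relt_{d,k}\}$ are routine but should be stated explicitly.
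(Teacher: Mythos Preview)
Your overall architecture matches the paper's: prove $\fG \preceq \F_d$ via Theorem~\ref{thm:stack-depth-decomps} on components, show $\T_d \preceq \fG$ via stack depth and Proposition~\ref{prop:equivalence-with-minors}, and then run a dichotomy to decide between $\T_d$ and $\F_d$. However, the specific dichotomy criterion you propose is wrong, and the construction you sketch for the bounded case does not work.

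\textbf{The criterion.} You split on whether the number of components of tree depth exactly $d$ is bounded. Tree depth is not the right invariant here. Consider, for $d=1$, the class $\fG=\{K_{1,k}\sqcup k\cdot K_2\mid k\ge 1\}$: these are height-$1$ forests, the stack depth is $1$ (every star $\relt_{1,k}=K_{1,k}$ occurs as a component, but $\relt_{2,2}$ is not a minor of any disjoint union of stars and edges), and \emph{every} component has tree depth exactly $1$. Your criterion therefore places $\fG$ in the ``unbounded'' case and predicts $\fG\equiv\F_1$. But in fact $\fG\equiv\T_1$: each $\relg_k$ has a $\T_1$-deconstruction of width $\le 3$ (put the star's center at the root, each star leaf in its own child bag, and each $K_2$ entirely in a child bag), while $\F_1\not\preceq\T_1$ by Lemma~\ref{lemma:bounded-td-hierarchy}. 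The paper's criterion is instead: for each $k$, let $\relg(k)$ count components of $\relg$ containing $\relt_{d,k}$ as a minor, and split on whether $\{\relg(k)\mid\relg\in\fG\}$ is infinite for every $k$. In the example, $\relg_k(2)\le 1$ for all $k$, so one lands in the bounded case with threshold $K=2$, correctly.

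\textbf{The bounded case.} You propose to ``absorb the $\le c$ deep components into a single bag.'' This cannot work: those components have unbounded size. What the paper does is different: since each of the (at most $K$) deep components has a $\T_d$-deconstruction of some fixed width $v$, one takes a single sufficiently large tree $\relt\in\T_d$ and \emph{overlays} all $\le K$ deconstructions onto $\relt$ by taking the union of bags at each node, giving a $\relt$-deconstruction of the union of width $\le vK$. The remaining components, lacking $\relt_{d,K}$ as a minor, have stack depth $<d$ and hence $\T_{d-1}$-deconstructions of bounded width by Theorem~\ref{thm:stack-depth-decomps}; these are hung as subtrees off the root of $\relt$. The technique from the proof of Theorem~\ref{thm:stack-depth-decomps} that you cite contracts a bounded number of \emph{vertices} into one bag, not a bounded number of unbounded-size components.
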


\begin{proof} 
By Lemma~\ref{lemma:bounded-td-hierarchy}, not both $\fG \equiv \T_d$ and $\fG \equiv \F_d$ can hold.
Let $\fC$ be the class of connected graphs that appear as components of
graphs in $\fG$; note that $d$ is the stack depth of~$\fC$. 
By Theorem~\ref{thm:stack-depth-decomps} we have $\fC\preceq \T_d$ and hence $\fG \preceq \F_d$

For each $k \geq 1$ and each graph $\relg \in \fG$,
define $\relg(k)$ to be the number of components of $\relg$
having the tree $\relt_{d,k}$ as a minor.
We consider two cases.

\medskip

{\em Case 1: } 
Suppose that, for all $k \geq 1$, the set $\{ \relg(k) \mid  \relg \in \fG \}$ has infinite size.
We claim that $\fG \equiv \F_d$. 
We have to show that $\F_d \preceq\fG$.
For each $k \geq 1$, let us use $k \times \relt_{d,k}$
to denote the graph consisting of $k$ disjoint copies of $\relt_{d,k}$.
Then
$\{ k \times \relt_{d,k} \mid  k \geq 1 \} \subseteq \minors(\fG)$ by assumption.
Each graph in $\F_d$ is isomorphic to a subgraph of a graph of the form $k \times
\relt_{d,k}$.
Hence,
by Proposition~\ref{prop:equivalence-with-minors},
it holds that $\F_d \preceq \minors(\fG) \equiv \fG$.

\medskip

{\em Case 2: } 
When the assumption of the first case does not hold,
one can choose a sufficiently large $K \geq 1$
such that, for all $\relg \in \fG$, it holds that $\relg(K) \leq K$.
We claim that $\fG \equiv \T_d$.
That $\T_d \preceq \fG$ follows from the hypothesis that $d$ is the
stack depth of $\fC$. We show that $\fG \preceq \T_d$.
Let $\fH$ be the subset of $\fC$ that contains a graph $\relh \in \fC$
if and only if $\relt_{d,K}$ is not a minor of $\relh$.
By Theorem~\ref{thm:stack-depth-decomps},
it holds that $\fH \preceq \T_{d-1}$; let $w \geq 1$ be such that
$\fH$ has $\T_{d-1}$-deconstructions of width $\leq w$.
Let $\relg \in \fG$; let $\relg_1, \ldots, \relg_L$ be the components
of $\relg$ having $\relt_{d,K}$ as a minor,
and let $\relh_1, \ldots, \relh_M$ be the other components of $\relg$.
By the choice of $K$, it holds that $L \leq K$.
Since $\fC \preceq\T_d$, there exists $v \geq 1$ such that
$\fC$ has $\T_d$-deconstructions of width $\leq v$.
Let $\relt \in \T_d$ be a sufficiently large tree so that
each 
$\relg_i$
has a $\relt$-deconstruction 
$(B^i_{t})_{t \in T}$ of
width $\leq v$;
then, the disjoint union of the $\relg_i$ has a $\relt$-deconstruction
of width $\leq vL$,
namely, $(B^1_t \cup \cdots \cup B^L_t)_{t \in T}$.
Each $\relh_j$ has a $\relt_j$-deconstruction of width $\leq w$, 
where $\relt_j \in \T_{d-1}$.
Let $\relt'$ be equal to $\relt$ but augmented so that
the root of each $\relt_j$ is a child of the root of $\relt$;
we have that the height of $\relt'$ is $d$.
The graph $\relg$ has a $\relt'$-deconstruction where each bag is
defined
as it was in the respective $\relt$-deconstruction or $\relt_j$-deconstruction;
this $\relt'$-deconstruction has width $\leq \max(vL, w)$.
\end{proof}

The following is a consequence of the previous two lemmas.

\begin{lemma}
\label{lemma:stack-depth-placement}
Let $\fG$ be a class of graphs having bounded tree depth, and let $d
\geq 0$.
The class $\fG$ has stack depth $d$ if and only if
$\fG \equiv \T_d$ or $\fG \equiv \F_d$.
\end{lemma}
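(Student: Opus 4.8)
The plan is to read off both directions directly from Lemma~\ref{lemma:graphs-to-forests} and Lemma~\ref{lemma:stack-depth-in-hierarchy}, using the strictness of the lower part of the hierarchy (Lemma~\ref{lemma:bounded-td-hierarchy}) only to identify the relevant index. Throughout, $\fG$ is a fixed set of graphs of bounded tree depth.

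First I would set up a common reduction. By Lemma~\ref{lemma:graphs-to-forests} there is a set $\fH$ of forests of bounded height with $\fG \equiv \fH$ such that $\fG$ and $\fH$ have the same stack depth; call this common value $e$. I would note that $e$ is finite: since $\fH$ has bounded height it has bounded tree depth, so by the excluded path theorem (Theorem~\ref{thm:excluded}(3)) some path is not a minor of any member of $\fH$; as this path embeds as a subgraph into $\relt_{h,2}$ for all sufficiently large $h$, and a subgraph of a minor is a minor, no such tree $\relt_{h,2}$ is a minor of a graph in $\fH$, whence $e$ is bounded by such an $h$. Applying Lemma~\ref{lemma:stack-depth-in-hierarchy} to $\fH$ now gives $\fH \equiv \T_e$ or $\fH \equiv \F_e$, and hence $\fG \equiv \T_e$ or $\fG \equiv \F_e$.

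The forward direction is then immediate: if the stack depth of $\fG$ is $d$, then $e = d$, and the dichotomy just obtained is exactly the conclusion. For the backward direction, assume $\fG \equiv \T_d$ or $\fG \equiv \F_d$; combining this with the previously obtained dichotomy and using transitivity of $\equiv$, one of $\T_e, \F_e$ is $\equiv$-equivalent to one of $\T_d, \F_d$. To conclude $e = d$, I would invoke the strictness of the chain $\T_0 \preceq \F_0 \preceq \T_1 \preceq \F_1 \preceq \cdots$: the relations $\preceq$ hold because $\T_i \subseteq \F_i$ and because $\F_i \preceq \T_{i+1}$ (attach a single new common root to the component roots of a forest of height $\leq i$, obtaining a deconstruction of width $\leq 2$ into a tree of height $\leq i+1$), while Lemma~\ref{lemma:bounded-td-hierarchy} (namely $\F_i \not\preceq \T_i$ and $\T_{i+1} \not\preceq \F_i$) upgrades every link of this chain to $\precneq$. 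Consequently the classes $\T_0, \F_0, \T_1, \F_1, \ldots$ occupy pairwise distinct, linearly ordered $\equiv$-degrees; in particular no $\T_i$ is $\equiv$-equivalent to any $\F_j$, and $\T_i \equiv \T_j$ or $\F_i \equiv \F_j$ forces $i = j$. Hence the equivalence obtained above forces $e = d$, and so the stack depth of $\fG$ equals $e = d$.

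The proof is essentially bookkeeping on top of the two preceding lemmas; the only points requiring care are the finiteness of the stack depth (so that Lemma~\ref{lemma:stack-depth-in-hierarchy} is applicable) and the strictness and linear order of the chain $\T_0 \precneq \F_0 \precneq \cdots$, for which Lemma~\ref{lemma:bounded-td-hierarchy} is precisely what is needed. I do not anticipate any substantial obstacle beyond assembling these pieces.
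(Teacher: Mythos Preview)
Your proposal is correct and follows essentially the same approach as the paper: the forward direction chains Lemma~\ref{lemma:graphs-to-forests} and Lemma~\ref{lemma:stack-depth-in-hierarchy}, and the backward direction uses the strictness of the chain from Lemma~\ref{lemma:bounded-td-hierarchy} together with the forward direction. You organize things slightly differently by first establishing $\fG \equiv \T_e$ or $\fG \equiv \F_e$ for the actual stack depth $e$ and then reading off both directions, and you supply extra detail (finiteness of $e$, the explicit reason for $\F_i \preceq \T_{i+1}$) that the paper leaves implicit, but the substance is the same.
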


\begin{proof}
 We first prove the forward direction.  
Suppose that $\fG$ has stack depth $d$.
By Lemma~\ref{lemma:graphs-to-forests}, 
there exists a class of forests $\fH$ having bounded height and stack depth~$d$
such that $\fG \equiv \fH$.
By Lemma~\ref{lemma:stack-depth-in-hierarchy},
$\fH \equiv \T_d$ or $\fH \equiv \F_d$,
implying 
$\fG \equiv \T_d$ or $\fG \equiv \F_d$.

Conversely, suppose  $\fG$ does not have stack depth $d$. 
By bounded tree depth and Proposition~\ref{prop:stackdepth-pw}, $\fG$
has stack depth $d'\in\mathbb N$ for some $d'\neq d$.
By the forward direction $\fG\equiv\T_{d'}$ or $\fG\equiv\F_{d'}$. In both cases, 
$\fG\not\equiv\T_{d}$ and $\fG\not\equiv\F_{d}$ by Lemma~\ref{lemma:bounded-td-hierarchy}.
\end{proof}

\begin{proof}[Proof of Theorem~\ref{thm:graph-hierarchy}.]
It is clear that
$$\T_0 \preceq \F_0 \preceq \T_1 \preceq \F_1 \preceq \cdots
\preceq \P \preceq \T \preceq \L.$$
Lemmas~\ref{lemma:top-hierarchy}
and~\ref{lemma:bounded-td-hierarchy} imply that
none of the displayed $\preceq$ can be reversed.
To prove that the hierarchy is comprehensive, let $\fG$
be an arbitrary class of graphs.  
If $\L \preceq \fG$, then clearly $\fG \equiv \L$ and we are done.
Otherwise $\fG \preceq \T$ by Lemma~\ref{lemma:placing-G}~(1).
If $\T \preceq \fG$, then $\fG \equiv \T$ and we are done.
Otherwise, $\fG \preceq \P$ by Lemma~\ref{lemma:placing-G}~(2).
If $\P \preceq \fG$, then $\fG \equiv \P$ and we are done.
Otherwise $\fG$ has bounded tree depth  by Lemma~\ref{lemma:placing-G}~(3).
By Proposition~\ref{prop:stackdepth-pw} there is $d\in\mathbb N$ such that $\fG$ has stack depth $d$.
Then $\fG \equiv \T_d$ or $\fG \equiv \F_d$ by Lemma~\ref{lemma:stack-depth-placement}.
\end{proof}

\section{Grohe's theorem} \label{sect:grohes-theorem}

In this section, we use the notion of graph deconstruction
to give a novel proof of Grohe's theorem, which 
establishes the hardness of the homomorphism problem
on any class of structures whose cores have unbounded treewidth.
We believe that our proof constitutes a 
 modular, relatively transparent, and relatively simple
alternative to the original
proof~\cite{Grohe07-otherside}.
Other than the definition of graph deconstruction,
the only element needed from the previous section
is the fact that, if a graph class $\fG$
has unbounded treewidth, then $\L \preceq \fG$
(this follows from
Proposition~\ref{prop:bounded-characterizations}
and Lemma~\ref{lemma:placing-G}.)

For the sake of brevity and because it is unnecessary for 
our purposes here, we do not introduce here a full framework
for parameterized complexity.
(We do carry this out in the next section, 
where we in particular introduce our notion 
of quantifier-free reduction.)
We introduce the following definitions to be used
in the scope of this section.
For each class $\fancya$ of structures,
define $\HOMP{\fancya}$
to be the problem whose instances 
are pairs $(\rela, \relb)$ of similar structures
where $\rela \in \fancya$,
and the question is to decide
whether or not $\rela \hto \relb$.
We consider such a problem $\HOMP{\fancya}$ to be
\emph{tractable} if there exists
a computable function $f$ and a polynomial-time algorithm
$g$ such that,
on each instance $(\rela, \relb)$ of $\HOMP{\fancya}$,
$g$ run on input $(f(\rela),\relb)$ decides if $\rela \hto \relb$.
It can be recognized that this definition is
equivalent to that of fixed-parameter tractability,
where the structure $\rela$ is taken to be the parameter;
see
the characterization of fixed-parameter
tractability in terms of precomputation on the parameter~\cite[Theorem 1.37]{FlumGrohe06-parameterizedcomplexity}. 
It is well-known that the tractability
of $\HOMP{\L^*}$ is equivalent to the complexity class collapse W[1] = FPT.

We prove the following formulation of 
Grohe's theorem~\cite{Grohe07-otherside}.

\begin{theorem}
\label{thm:grohe}
Assume that  $\fancya$ is a computably enumerable 
class of structures having bounded arity.
If the graphs of the cores of $\fancya$
have unbounded treewidth,
then the problem $\HOMP{\fancya}$ is not tractable,
unless $\HOMP{\L^*}$ is as well.
\end{theorem}

%

At the heart of our proof are three polynomial-time reductions,
presented in the following three lemmas.
In each case, 
we describe the output of the claimed polynomial-time algorithm;
it is readily verified that the output can be produced
in polynomial time.
The second and third lemmas are based on results that
appeared in previous work~\cite{ChenMueller13-fineclass-arxiv}.

\begin{lemma}
\label{lemma:polytime-deconstruction}
For each $k \geq 1$,
there exists a polynomial-time algorithm that,
given graphs $\relg$ and $\relh$,
a $\relh$-deconstruction $(B_h)_{h \in H}$ of $\relg$
of width $\leq k$,
and a structure $\reld$ similar to~$\relg^*$,
outputs a structure $\reld'$ such that
$\relg^* \hto \reld$ iff $\relh^* \hto \reld'$
\end{lemma}

\begin{proof}
The structure $\reld'$ is defined as follows.
Its universe $D'$ is the set of all partial homomorphisms
$f$ from $\relg^*$ to $\reld$ with $|\dom(f)| \leq k$.
The relation $E^{\reld'}$ is defined as the set of pairs
$(f,f') \in D' \times D'$
such that  $f \cup f'$ 
(as a set of ordered pairs)
is a partial homomorphism from $\relg^*$ to $\reld$.
Each relation $C_h^{\reld'}$ is defined as 
$\{ f \in D' ~|~ \dom(f) = B_h \}$.

Suppose that $e$ is a homomorphism from $\relg^*$ to $\reld$.
Then the mapping $e': H \to D'$ defined by
$e'(h) = e \res B_h$ (the restriction of $e$ to $B_h$) 
is  
a homomorphism from
$\relh^*$ to $\reld'$.

Suppose that $e'$ is a homomorphism from $\relh^*$ to $\reld'$.
We define a map $e: G \to D$ as follows.
For each $g \in G$, by the connectivity condition
(Definition~\ref{def:deconstruction})
and the definition of $E^{\reld'}$,
all maps of the form $e'(h)$ with $g \in \dom(e'(h))$
send $g$ to the same value.
Define $e(g)$ to be that value; we have that 
$e'(h) \subseteq e$.
To verify that $e$ is a homomorphism from $\relg^*$ to $\reld$,
since each relation of $\relg^*$ has arity $1$ 
or is the relation $E^{\relg}$,
it suffices to argue that for any pair $(g, g') \in \refl(E^{\relg})$,
the map $e \res \{ g, g' \}$ is a partial homomorphism
from $\relg^*$ to $\reld$.
Let $(g, g')$ be such a pair;
by the coverage condition
(Definition~\ref{def:deconstruction})
there exists $(h, h') \in \refl(E^{\relh})$
such that $\{ g, g' \} \subseteq B_h \cup B_{h'}$.
We have that $e'(h) \cup e'(h')$
is a partial homomorphism from $\relg$ to $\reld$
(this is clear if $h = h'$; if $h \neq h'$,
this follows from from the definition of $E^{\reld'}$).
This concludes the proof, as 
$e'(h) \cup e'(h') \subseteq e$.
\end{proof}

\begin{lemma}
\label{lemma:polytime-undo-graph}
For each $r \geq 1$,
there exists a polynomial-time algorithm that,
given a structure $\rela$ whose relations have arity $\leq r$
and a structure $\reld$ similar to $\gr(\rela)^*$,
outputs a structure $\reld'$ such that
$\gr(\rela)^* \hto \reld$ iff $\rela^* \hto \reld'$.
\end{lemma}

\begin{proof}
The structure $\reld'$ has universe
$D' = A \times D$.
Each relation $C_a^{\reld'}$ is defined as
$\{ a \} \times C_a^{\reld}$,
and for each relation symbol $R$ of the structure of $\rela$,
define
$R^{\reld'}$ to be the set of all $k$-tuples
$((a_1, d_1), \ldots, (a_k, d_k))$ on $D'$
such that $k = ar(R)$ and for all $i, j \in [k]$,
it holds that $(a_i, a_j) \in E^{\gr(\rela)^*}$ implies
$(d_i, d_j) \in E^{\reld}$.

Suppose that $e$ is a homomorphism from
$\gr(\rela^*)$ to $\reld$.
Then it is straightforward to verify that $e': A \to D'$
defined by $e'(a) = (a, e(a))$
is a homomorphism from $\rela^*$ to $\reld'$.

Suppose that $e'$ is a homomorphism from
$\rela^*$ to $\reld'$.
By the definition of the relations $C_a^{\reld'}$,
each element $a$ is mapped by $e$ to an element
of the form $(a, d)$ with $d \in C_a^{\reld}$.
Define $e: A \to D$ so that, for each $a \in A$,
it holds that $e'(a) = (a, e(a))$.
We have that $e$ is a homomorphism from 
$\gr(\rela)^*$ to $\reld$:
when $(a, a') \in E^{\gr(\rela)^*}$,
there exists a tuple $(a_1, \ldots, a_k)$
in a relation $R^{\rela^*}$ of $\rela^*$
with $a$ and $a'$ among its entries,
so $(e(a), e(a')) \in E^{\reld}$
follows from the fact that $e'$ is a homomorphism
and the definition of $R^{\reld'}$.
\end{proof}

\begin{lemma}
\label{lemma:polytime-uncore}
There exists a polynomial-time algorithm that,
given a core $\rela$ and a structure $\reld$ similar to $\rela^*$,
outputs a structure $\reld'$ such that
$\rela^* \hto \reld$ iff $\rela \hto \reld'$.
\end{lemma}

\begin{proof}
Define $\reld'$ as the structure 
$\langle \{ (a,d) \in A \times D ~|~ d \in C_a^{\reld} \}
\rangle^{\rela \times \reld_{\sigma}}$,
where $\reld_{\sigma}$ denotes the restriction of $\reld$
to the vocabulary $\sigma$ of $\rela$.
(If the specified set of pairs is empty, 
the algorithm outputs a fixed \emph{no} instance.)

Suppose that $e$ is a homomorphism from $\rela^*$ to $\reld$;
then, the map $e': A \to D'$ defined by
$e'(a) = (a,e(a))$ is straightforwardly verified
to be a homomorphism from $\rela$ to $\reld'$.

Suppose that $e'$ is a homomorphism from $\rela$ to $\reld'$.
Here, for any homomorphism $g'$ from $\rela$ to $\reld'$,
we let $g'_1: A \to A$ and $g'_2: A \to D$ denote
the maps such that $g'(a) = (g'_1(a),g'_2(a))$ for each $a \in A$.
We have that $e'_1$ is a homomorphism from $\rela$ to itself;
since $\rela$ is a core, $e'_1$ is a bijection.
It follows that each finite power of $e'_1$, and hence
$e'^{-1}_1$, is a homomorphism from $\rela$ to itself.
Composing $e'^{-1}_1$ with $e'$,
we obtain a homomorphism $f'$ from $\rela$ to $\reld'$
such that $f'_1$ is the identity map on $A$.
By definition of $\reld'$, for each $a \in A$
it holds that $f'_2(a) \in C_a^{\reld}$,
and for each $R \in \sigma$ 
and each tuple $(a_1, \ldots, a_k) \in R^{\rela}$,
it holds that $(f_2(a_1), \ldots, f_2(a_k)) \in R^{\reld_{\sigma}}$
by definition of $\reld'$.
Hence $f'_2$ is a homomorphism from $\rela^*$ to $\reld$.
\end{proof}

\begin{proof}[Proof of Theorem~\ref{thm:grohe}.]
Assume that the problem $\HOMP{\fancya}$ is tractable
via $(f',g')$;
we show that the problem $\HOMP{\L^*}$ is tractable.
Let $r \geq 1$ be a bound on the arity of $\fancya$.
As noted at the beginning of the section,
our hypothesis on $\fancya$ implies that
there exists
$k \geq 1$ such that, for each graph $\relg$,
there exists a structure $\rela \in \fancya$ 
that \emph{corresponds} to $\relg$, by which we mean that
the
core $\relc$ of $\rela$ has a graph $\relh = \gr(\relc)$
such that $\relg$ has a $\relh$-deconstruction $(B_h)_{h \in H}$
of width $\leq k$.

The following pair $(f,g)$ establishes the tractability 
of $\HOMP{\L^*}$.
Given an instance $(\relg^*, \relb)$ thereof, 
the algorithm first performs a computation depending only on
$\relg^*$.  In particular, it enumerates the structures
in $\fancya$ until it finds a structure $\rela \in \fancya$
that corresponds to $\relg$; it outputs the core $\relc$ of $\rela$,
the core's graph $\relh = \gr(\relc)$, 
and the $\relh$-deconstruction $(B_h)_{h \in H}$ of $\relg$
having width $\leq k$.
All of this information plus the value of $f'(\rela)$
is the output of $f(\relg^*)$.
Then,   $g$ is defined
to be the polynomial-time algorithm
that
invokes the algorithm of Lemma~\ref{lemma:polytime-deconstruction}
on $(\relg^*, \reld)$
to obtain an instance $(\relh^*, \reld')$;
invokes the algorithm of Lemma~\ref{lemma:polytime-undo-graph}
on $(\relh^*, \reld')$
to obtain an instance $(\relc^*, \reld'')$;
and, then
invokes the algorithm of Lemma~\ref{lemma:polytime-uncore}
on $(\relc^*, \reld'')$
to obtain an instance $(\relc, \reld''')$.
We have $\relc \hto \reld'''$ if and only if
$\rela \hto \reld'''$.
Thus, the algorithm invokes the algorithm $g'$ 
on $(f'(\rela), \reld''')$ and outputs the answer of this invocation.
\end{proof}

\section{Complexity classification}\label{sec:classification}

In this section we study the complexity of the parameterized homomorphism problems associated to classes of structures $\fancya$: 
$$
\HOMP{\fancya}:=\big\{(\relb,\numb{\rela})\mid \rela\in\fancya\ \&\  \rela\hto\relb\big\}.
$$
Here, $\numb{\rela}$ is a natural number coding the structure $\rela$ in some natural way.
The goal of this section is to 
show that the complexities of homomorphism problems are captured in a strong sense by the hierarchy 
from Section~\ref{sec:hierarchy}, namely with respect  to a computationally very weak notion of reduction which we call quantifier-free after a pre-computation ({\em qfap}).

We recall some basic notions from parameterized complexity theory in the next subsection; define qfap-reductions in Section~\ref{sec:qfap};
and consider the homomorphism problem for graph classes 
 and 
subsequently 
for general classes of structures in 
Sections~\ref{sec:graphclass}
and~\ref{sec:strclass}.

\subsection{Parameterized and descriptive complexity}

A {\em parameterized problem $Q$} is a subset of $\{0,1\}^*\times\nats$. By a {\em classical problem} we mean a
subset of $\{0,1\}^*$.
Given an instance $(x,k)$ of $Q$ we refer to $k$ as its {\em parameter}. The {\em $k$th slice} of $Q$ is the classical problem
$\{x\in\{0,1\}^*\mid (x,k)\in Q\}$.

Following~\cite{FlumGrohe03-describing},
we say $Q$ is in L {\em after a pre-computation} if there is a computable function
$a:\nats\to\{0,1\}^*$ and a classical problem $P\subseteq \{0,1\}^*$ in L such 
that for all $(x,k)\in\{0,1\}^*\times\nats$
$$
(x,k)\in Q\Longleftrightarrow \langle x,a(k)\rangle\in P,
$$
where $\langle\cdot,\cdot\rangle$ is some standard pairing function for binary strings.
Equivalently, this means that $(x,k)\stackrel{?}{\in}Q$ is decidable in space $O(f(k)+\log n)$ for some computable $f:\nats\to\nats$.
The class of such problems is denoted {\em para-L}.
This mode of speech makes sense not only for L but for any classical complexity class, and we refer to \cite{FlumGrohe03-describing}
for the corresponding theory. For example, FPT is the class of parameterized problems which are in P after a pre-computation.
A {\em parameterized reduction} from a parameterized problem $Q$ to another $Q'$ is a function $r:\{0,1\}^*\times\nats\to\{0,1\}^*\times\nats$ 
such that there is a computable $g:\nats\to\nats$ such that for all $(x,k)\in\{0,1\}^*\times\nats$ we have for $(x',k'):=r((x,k))$ that
$k'\le g(k)$ and: $(x,k)\in Q\Longleftrightarrow (x',k')\in Q'$.
If there is a computable $f$ such that $r((x,k))$ is computable in 
space $O(f(k)+\log|x|)$ (on a Turing machine with write-only output tape), then we speak of a $\pl${\em -reduction}.

In descriptive complexity one considers  classical problems as isomorphism closed classes of 
(finite) structures of some fixed vocabulary. 
In the parameterized setting we are led to consider the slices of parameterized problems as such classes of structures. 

\begin{definition}  \label{def:paramprobl}
A {\em parameterized problem} is a subset $Q\subseteq\str\times\nats$ such that for every $k\in\nats$ there is 
a vocabulary $\tau_k$ such that the {\em $k$-th slice of $Q$}, i.e.\ $\{\rela\mid (\rela,k)\in Q\}$, is an isomorphism closed 
class of $\tau_k$-structures.\footnote{ This slightly deviates 
from \cite{FlumGrohe03-describing}, where the $\tau_k$'s are assumed to be pairwise equal and only {\em ordered} structures are considered.} 
If there is $r\in\nats$ such that $\ar(R)\le r$ for all $R\in\bigcup_k\tau_k$, we say that $Q$ {\em has bounded arity}.
\end{definition}

This definition is not in conflict with the mode of speech  above if one views binary strings as structures in the usual way. 
Flum and Grohe~\cite{FlumGrohe01-modelchecking} transferred capturing results (cf.~\cite[Chapter~7]{EbbinghausFlum95-finitemodeltheory}) 
of classical descriptive complexity 
to the parameterized setting via the concept of {\em slicewise definability}.
%
Many parameterized classes could be characterized this way
\cite{FlumGrohe03-describing,ChenFlumGrohe-alternation}. 
For example, a parameterized problem $Q$ is
{\em slicewise FO-definable} if there exists a computable function $d$ mapping every $k\in\nats$ to a first-order sentence $d(k)$ 
defining the $k$-th slice of $Q$ (cf.~\cite{FlumGrohe01-modelchecking}). 

\subsection{Reductions that are quantifier-free after a pre-computation}\label{sec:qfap}

Central to descriptive complexity are first-order reductions which take a structure $\rela$ to the structure $I(\rela)$ where $I$ is a 
first-order interpretation (see e.g.~\cite[Chapter~12.3]{EbbinghausFlum95-finitemodeltheory}). We recall the 
definition (see e.g.~\cite[Chapter~11.2]{EbbinghausFlum95-finitemodeltheory}). 

\begin{definition} \label{df:interpetation} Let $\sigma,\tau$ be (finite, relational) 
vocabularies and $U$ be a unary relation symbol outside~$\tau$.
An {\em interpretation (of~$\tau$ in $\sigma$)} 
is a sequence $I=(\varphi_R)_{R\in \tau\dot\cup\{U,=\}}$ of $\sigma$-formulas such 
that there exists $w\in\nats$ such that for all $R\in\tau\dot\cup\{U\}$ we have 
$\varphi_R=\varphi_R(\bar x_1,\ldots,\bar x_{\ar(R)})$ and $\varphi_==\varphi_=(\bar x_1,\bar x_2)$ 
where every~$\bar x_i$ is a tuple of $w$ variables. The number $w$ is 
the {\em dimension} of $I$. The vocabularies $\sigma$ and $\tau$ are the {\em input} and {\em output vocabulary of}~$I$, respectively.  
An interpretation is {\em quantifier-free} if all its formulas are.
An interpretation $I$
determines the partial function from $\str[\sigma]$ into 
$\str[\tau]$ which maps a $\sigma$-structure $\rela$ to  a $\tau$-structure $\relb$ if there exists
a surjection $f:\varphi_U(\rela)\to B$ such that for all $R\in \tau$ and all $\bar a_1,\bar a_2,\ldots\in\varphi_U(\rela)$:
\begin{enumerate}
\item[] $\rela\models \varphi_{=}(\bar a_1,\bar a_2)\Longleftrightarrow f(\bar a_1)=f(\bar a_2)$;
\item[] $\rela\models \varphi_{R}(\bar a_1,\ldots, \bar a_{\ar(R)})\Longleftrightarrow f(\bar a_1)\cdots f(\bar a_{\ar(R)}) \in R^{\relb}$;
\end{enumerate} 
such a $\relb$, if it exists, is unique up to isomorphism; if no such~$\relb$ exists, the partial function determined by $I$ is
not defined on~$\rela$. 

For technical reasons we extend this partial function to 
a partial function from $\str[\sigma]\cup\{\emptyset\}$ to $\str[\tau]\cup\{\emptyset\}$ 
by adding to its domain $\emptyset$ as well as those $\rela\in\str[\sigma]$ with $\varphi_U(\rela)=\emptyset$; these 
additional arguments are all mapped to $\emptyset$. We denote the resulting  partial function again by $I$.
\end{definition}

We need to agree upon a way how to consider pairs of structures as a single structure:

\begin{definition}  Given a pair $(\rela,\relb)$ of a $\sigma$-structure  $\rela$ and  a $\tau$-structure $\relb$, define the 
structure $\langle\rela,\relb\rangle$ by taking the disjoint union of $\rela$ and $\relb$ and interpreting two new unary 
relation symbols $P_1$ and $P_2$ by the (copies of the) universes of $\rela$ and $\relb$ respectively.
Naturally here, the disjoint union of $\rela$ and $\relb$ has
universe $(\{1\}\times A)\dot\cup(\{2\}\times B)$ and interprets $R\in\sigma\cup\tau$ by
$R_A\cup R_B$ where $R_A:=\emptyset$ if $R\notin\sigma$ and else 
$R_A:=\{((1,a_1),\ldots,(1,a_{\ar(R)}))\mid \bar a\in R^{\rela}\}$; $R_B$ is defined analogously.
For $k\ge 3$ many structures $\rela_1,\ldots,\rela_k$ we inductively set 
$\langle\rela_1,\ldots,\rela_k\rangle:=\langle\langle\rela_1,\ldots,\rela_{k-1}\rangle, \rela_k\rangle$.
\end{definition}

%

It is well-known that NP contains problems that are complete under quantifier-free reductions, 
i.e. reductions computed by a quantifier-free interpretation $I$ as above. 
Dawar and He~\cite{DawarHe09-logicalreductions} transferred 
the notions to the parameterized setting and asked whether central completeness
 results for the classes of the W-hierarchy exhibit a similar robustness. 
More precisely, Dawar and He defined a parameterized reduction $r$ from $Q$ to $Q'$ to be
 {\em slicewise quantifier-free definable} if there 
exists $w\in \nats$ and a computable function $d$ that maps every $k\in\nats$ to some 
quantifier-free interpretation
$d(k)$ of dimension $w$
such that $r((\rela,k))=d(k)((\rela,k))$; here, one views $(\rela,k)$ 
in some suitable way as a single structure.

\begin{definition} \label{df:qfred} 
Let $Q,Q'$ be parameterized problems (Definition~\ref{def:paramprobl}). 
For $k\in\nats$ let~$\tau_k$ be the vocabulary of the $k$-th slice of~$Q$.
A parameterized reduction $r$ from $Q$ to $Q'$
is {\em quantifier-free after a pre-computation} if there are $w\in \nats$ and  computable functions 
\begin{enumerate}\itemsep=0pt
\item[--] $p:\nats \to\nats$ 
\item[--] $a:\nats\to\str$
\item[--] $d$ mapping $k\in\nats$ to a quantifier-free interpretation $d(k)$ of dimension $w$,
\end{enumerate}
such that for all $(\rela,k)\in\str[\tau_k]\times\nats$:
\begin{enumerate}\itemsep=0pt
 \item[--] $d(k)$ is defined on $\langle a(k),\rela\rangle$, and
\item[--] $r((\rela,k))=(\rela',k')$ for $\rela':=d(k)(\langle a(k),\rela\rangle)$ and $k':=p(k)$.
\end{enumerate}
We write $Q\qfred Q'$ to indicate that such a reduction exists, 
and $Q\qfeq Q'$ to indicate that both $Q\qfred Q'$ and $Q'\qfred Q$.
\end{definition}

\begin{remark}
Note that the new {\em p}arameter $p(k)$ is computed by~$p$ from~$k$ alone, 
$a$ is the pre-computation providing an {\em a}uxiliary structure, and~$d$ provides the {\em d}efinition of the 
new structure~$\rela'$.
%
\end{remark}

\begin{remark}
We allow a reduction $r$ to output $(\emptyset,k')$ for some $k'\in\mathbb N$. 
This is considered to be a ``no" instance of any parameterized problem. 
For example, in the definition above we have $r(\rela,k)=(\emptyset,p(k))$ if 
$\varphi_U(\rela)=\emptyset$ where we write $d(k)=(\varphi_R)_{R\in\{U,=\}\cup\ldots}$.
\end{remark}

 \begin{lemma}\label{lem:trans} 
Let $Q,Q',Q''$ be parameterized problems. 
If  $Q\qfred Q'$ and $Q'\qfred Q''$, then $Q\qfred Q''$.
 \end{lemma}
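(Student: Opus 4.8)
The plan is to unwind the two given reductions into a single one and then repackage it as a qfap-reduction, the only real work being to see that the repackaging can be done by a quantifier-free interpretation of bounded dimension that depends computably on the parameter. Fix witnesses $(w,p,a,d)$ for $Q\qfred Q'$ and $(w',p',a',d')$ for $Q'\qfred Q''$, where $d(k)$ has dimension $w$ and $d'(j)$ has dimension $w'$; let $r,r'$ be the corresponding reductions and $g,g'$ computable functions bounding the parameter blow-up. For an instance $(\rela,k)$ of $Q$, writing $\rela':=d(k)(\langle a(k),\rela\rangle)$ we have $r((\rela,k))=(\rela',p(k))$; this is an instance of $Q'$, so $\rela'$ is a structure over the vocabulary of the $p(k)$-th slice of $Q'$, and hence $r'((\rela',p(k)))=(\rela'',p'(p(k)))$ with $\rela'':=d'(p(k))(\langle a'(p(k)),\rela'\rangle)$. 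Thus $r'\circ r$ maps $(\rela,k)$ to $(\rela'',p'(p(k)))$, and since $p'(p(k))\le g'(p(k))\le g'(g(k))$, it is a parameterized reduction; it remains to present it as quantifier-free after a pre-computation.

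\textbf{The three ingredients.} Put $p''(k):=p'(p(k))$ (computable) and let $a''(k)$ be a computable structure encoding the pair $(a(k),a'(p(k)))$, e.g. $a''(k):=\langle a(k),a'(p(k))\rangle$, so that $\langle a''(k),\rela\rangle$ is the nested pair $\langle\langle a(k),a'(p(k))\rangle,\rela\rangle$. I will exhibit $d''(k)$ of dimension $ww'$ as a composition $d''(k)=d'(p(k))\circ\widehat d(k)\circ e$ of three quantifier-free interpretations:
(i) a \emph{fixed} quantifier-free interpretation $e$ of dimension $1$ that re-brackets the nested pair, i.e. $e\big(\langle\langle a(k),a'(p(k))\rangle,\rela\rangle\big)=\langle a'(p(k)),\langle a(k),\rela\rangle\rangle$: it keeps the universe and merely redefines the unary "tag" predicates that the pairing operator $\langle\cdot,\cdot\rangle$ introduces, which is quantifier-free and uniform in $k$;
(ii) $\widehat d(k)$, of dimension $w$ and computable in $k$, the "lift" of $d(k)$ that acts only on the second component of a pair, i.e. $\widehat d(k)(\langle c,\relb\rangle)=\langle c,d(k)(\relb)\rangle$ for every $c$: build it from $d(k)$ by the standard disjoint-union-of-interpretations construction (represent elements of $c$ by constant $w$-tuples, run the formulas of $d(k)$ relativized to the second-component part, redefine the tags), which stays quantifier-free because $d(k)$ is;
(iii) $d'(p(k))$ itself, of dimension $w'$ and computable in $k$.
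Composing these with Lemma~\ref{lem:int} (applied twice, which also gives computability of the composite from the components) yields $d''(k)$, a quantifier-free interpretation of dimension $1\cdot w\cdot w'=ww'$, computable in $k$. By construction $d''(k)(\langle a''(k),\rela\rangle)=d'(p(k))\big(\langle a'(p(k)),d(k)(\langle a(k),\rela\rangle)\rangle\big)=\rela''$, and it is defined for every $\rela$ because each interpretation in the chain is defined on the structure fed to it (here the fact noted above, that $\rela'$ is a genuine structure over the $p(k)$-th slice vocabulary of $Q'$, is used to apply $d'(p(k))$). Hence $(ww',p'',a'',d'')$ witnesses $Q\qfred Q''$.

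\textbf{Main obstacle.} Everything is routine once ingredients (i) and (ii) are in place, so the crux is to make precise that the bookkeeping performed by $\langle\cdot,\cdot\rangle$ — which component of a (possibly nested) tuple a given element belongs to — is recoverable by quantifier-free formulas of bounded dimension. Concretely, one must check that "rearranging the parentheses of a nested pair" and "applying a fixed interpretation inside one component while leaving the other untouched and re-pairing" are themselves quantifier-free operations, taking care of the tag predicates introduced at each level of nesting and of the dimension padding (elements of the untouched component represented by constant $w$-tuples). This is fiddly but mechanical; after it, Lemma~\ref{lem:int} and closure of computable functions under composition do the rest.
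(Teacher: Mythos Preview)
Your proof is correct and follows essentially the same route as the paper: set $p''(k)=p'(p(k))$, set the auxiliary structure to $\langle a(k),a'(p(k))\rangle$, and build the composite interpretation via Lemma~\ref{lem:int}. The only cosmetic difference is that the paper fuses your $e$ and $\widehat d(k)$ into a single interpretation $J$ of dimension $w$ (taking $\langle\langle a(k),a'(p(k))\rangle,\rela\rangle$ directly to something isomorphic to $\langle a'(p(k)),\rela'\rangle$, representing elements of $a'(p(k))$ by constant $w$-tuples exactly as you suggest), and then composes once with $d'(p(k))$; your three-step decomposition is just a more modular way to say the same thing. One small nit: your $e$ is not literally ``fixed'', since its input and output vocabularies depend on $k$ (they include the vocabularies of $a(k)$ and $a'(p(k))$); what you mean, and what suffices, is that $e$ is \emph{computable from $k$} by a uniform recipe.
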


\begin{proof}
We need some folklore combinatorics concerning first-order interpretations. We give some details in order to be clear about our 
special treatment of $\emptyset$.

\medskip

\noindent{\em Claim 1:} For $i\in\{1,2\}$ there is a dimension 1 quantifier-free interpretation $\textit{Pr}_i$ such that  
$\textit{Pr}_i(\langle \rela_1,\rela_2\rangle)$ is defined and
isomorphic to $\rela_i$ for all structures $\rela_1,\rela_2$.\medskip

We omit the easy proof. 

\medskip

\noindent{\em Claim 2:} Assume $I,J$ are quantifier-free interpretations 
of dimensions $w,w'$ respectively, and
such that the output vocabulary of $J$ contains the input vocabulary of $I$. Then there is 
a quantifier-free interpretation $(I\circ J)$ of dimension $w\cdot w'$ which is defined on
a structure $\rela$ whenever both $J$ is defined on $\rela$ and $I$ is defined on $J(\rela)$, and then
outputs $(I\circ J)(\rela)\cong I(J(\rela))$.

Notationally, we understand here that $\emptyset\cong\emptyset$.

\medskip

\noindent{\em Proof of Claim 2:} Let $I=(\varphi_R)_R$ have dimension $w$ and $J=(\psi_S)_S$ have dimension $w'$. 
Let $\sigma$ be the input and $\tau$ be the output vocabulary of $J$. Associate with each variable~$x_j$ \mo{there has been confusion between $I$ and $J$}
a $w'$-tuple $\bar x_j$ of variables. For every 
$\tau$-formula $\psi=\psi(x_1,x_2,\ldots)$ there is  a $\sigma$-formula $J(\psi)=J(\psi)(\bar x_1,\bar x_2,\ldots)$ 
such that for all $\rela\in\str[\sigma]$ with $J(\rela)$ defined and $\neq\emptyset$  we have
for all  $\bar a_1,\bar a_2,\ldots \in\varphi_U(\rela)\subseteq A^{w'}$:
$$
\rela\models J(\psi) (\bar a_1,\bar a_2,\ldots)\Longleftrightarrow J(\rela)\models\psi(f(\bar a_1),f(\bar a_2),\ldots);
$$
here, $f$ is a surjection from $\varphi_U(\rela)$ onto the universe of $J(\rela)$ witnessing that $J(\rela)$ is defined and $\neq \emptyset$.
The formula $J(\psi)$ is obtained from $\psi$ by replacing atomic subformulas $Sx_{i_1}\cdots x_{i_{\ar(S)}}$ and $x_{i_1}=x_{i_2}$ of $\psi$ by
$\psi_S(\bar x_{i_1},\ldots, \bar x_{i_{\ar(S)}})$ and $\varphi_=(\bar x_{i_1},\bar x_{i_2})$, respectively. 
Then the interpretation $(J(\varphi_R))_R$ is as desired whenever $I(\rela)\neq\emptyset$. To additionally ensure output $\emptyset$ 
whenever $J(\rela)=\emptyset$, replace the formula $J(\varphi_U)(\bar x_1,\ldots,\bar x_{w})$ by
$J(\varphi_U)(\bar x_1,\ldots,\bar x_{w})\wedge\bigwedge_{i\in[w]}\psi_U(\bar x_{i})$.
\hfill$\dashv$\medskip

\medskip

\noindent{\em Claim 3:} Assume $I,J$ are quantifier-free interpretations of dimensions $w,w'$ respectively. Then there is a quantifier-free 
interpretation $\langle I,J\rangle$ of dimension $w+w'+2$ which is defined on a structure $\rela$ whenever both $I$ and $J$ are defined on $\rela$, 
and then outputs $\langle I,J\rangle(\rela)\cong\langle I(\rela), J(\rela) \rangle$.

Notationally, we understand here that $\langle\rela,\relb \rangle=\emptyset$ if $\rela=\emptyset$ or $\relb=\emptyset$.
\medskip

\noindent{\em Proof of Claim 3:} Write $I=(\varphi_R)_{R\in \sigma\cup\{U,=\}}$  and $J=(\psi_R)_{R\in \tau\cup\{U,=\}}$. 
Then $\langle I,J\rangle$ is the interpretation $(\chi_R)_{R\in\sigma\cup\tau\cup \{P_1,P_2\}\cup \{U,=\}}$ 
defined as follows. Let $\bar x_i$ 
range over $w$-tuples and $\bar y_i$ range over $w'$-tuples. Set
\begin{eqnarray*}
\chi_U(\bar x_1\bar y_1uv) &:=&\varphi_U(\bar x_1)\wedge\psi_U(\bar y_1),\\
\chi_=(\bar x_1\bar y_1u_1v_1,\bar x_2\bar y_2u_2v_2) &:=&(u_1=v_1\wedge u_2=v_2\wedge\varphi_=(\bar x_1,\bar x_2))\\
&&\ \vee\ ( u_1\neq v_1\wedge u_2\neq v_2\wedge\psi_=(\bar y_1,\bar y_2)),\\
\chi_{P_1}(\bar x\bar yuv)&:=& u=v,\\
\chi_{P_2}(\bar x\bar yuv)&:=& u\neq v.
\end{eqnarray*}
For $R\in\sigma\cup\tau$ define $\chi_R(\bar x_1\bar y_1u_1v_1,\ldots,\bar x_{\ar(R)}\bar y_{\ar(R)}u_{\ar(R)}v_{\ar(R)})$ by
$$
\textstyle
\big(\varphi_R(\bar x_1, \ldots,\bar x_{\ar(R)})\wedge \bigwedge_{i\in[\ar(R)]} u_i=v_i\big)
\vee \big(\psi_R(\bar y_1,\ldots,\bar y_{\ar(R)})\wedge \bigwedge_{i\in[\ar(R)]} u_i\neq v_i\big),
$$
where $\varphi_R$ and $\psi_R$ are inconsistent formulas if 
$R\in\tau\setminus\sigma$ resp. $R\in\sigma\setminus\tau$.
This interpretation is as desired. In particular,  if $I(\rela)=\emptyset$ or $J(\rela)=\emptyset$, then
$\varphi_U(\rela)=\emptyset$ resp. $\psi_U(\rela)=\emptyset$, and then $\chi_U(\rela)=\emptyset$, and hence 
$\langle I,J\rangle(\rela)=\emptyset$.\hfill$\dashv$\medskip

We now prove the lemma. Let $(p,a,d)$ witness $Q\qfred Q'$ and $(p',a',d')$ witness $Q'\qfred Q''$
respectively. Define 
\begin{enumerate}\itemsep=0pt
 \item[--] $\tilde p(k):=p'(p(k))$;
\item[--] $\tilde a(k):=\langle a(k), a'(p(k))  \rangle$;
\item[--] $\tilde d(k):= d'(p(k))\circ\langle 
\textit{Pr}_2\circ \textit{Pr}_1, d(k)\circ\langle \textit{Pr}_1\circ \textit{Pr}_1,\textit{Pr}_2 \rangle \rangle $.
\end{enumerate}

We claim $(\tilde p,\tilde a,\tilde d)$ witnesses $Q\qfred Q''$. 
By construction the interpretations $I\circ J$ and $\langle I,J\rangle$ are computable from $(I,J)$, and hence $\tilde d$ 
is computable. The dimensions of $d(k)$ and $d'(p(k))$ are constant (independent of $k$), say, $w$ and $w'$ respectively.
Then it is easily checked, that $\tilde d(k)$ has constant dimension $w'\cdot (4w+3)$.

Write $\rela':=d(k)(\langle a(k),\rela\rangle),k':=p(k)$
and $\rela'':=d'(k')\big(\langle a'(k'), \rela'\rangle), k'':=p'(k')$. Then $(\rela,k)\in Q$ if and only if $(\rela'',k'')\in Q''$.
It suffices to show that $\tilde d(\langle\tilde a(k),\rela\rangle)\cong\rela''$.

But $\langle \textit{Pr}_1\circ \textit{Pr}_1,\textit{Pr}_2 \rangle (\langle  \tilde a(k),\rela\rangle)\cong\langle a(k),\rela\rangle$, 
and $(\textit{Pr}_2\circ \textit{Pr}_1)(\langle  \tilde a(k),\rela\rangle)\cong a'(p(k))$.
Hence
\begin{eqnarray*}
&&\tilde d(k)(\langle\tilde a(k),\rela\rangle)\cong d'(k')\big(\langle a'(k'), d(k)(\langle a(k),\rela\rangle)\rangle\big)
\cong d'(k')\big(\langle a'(k'), \rela'\rangle)= \rela'',
\end{eqnarray*}
as was to be shown.
\end{proof}

\begin{lemma}\label{lem:plred}
Let $Q,Q'$ be parameterized problems and assume $Q'$ has bounded arity. If
$Q\qfred Q'$, then $Q\le_\pl Q'$.
\end{lemma}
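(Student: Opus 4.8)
The plan is to show that a quantifier-free-after-precomputation reduction can be simulated in parameterized logarithmic space, i.e.\ in space $O(f(k)+\log|x|)$ for some computable $f$. So suppose $r$ witnesses $Q\qfred Q'$ via the constants and functions $w\in\nats$, $p:\nats\to\nats$, $a:\nats\to\str$, and $d$ mapping $k$ to a quantifier-free interpretation $d(k)$ of dimension $w$, so that $r((\rela,k))=(d(k)(\langle a(k),\rela\rangle),\,p(k))$. Given input $(\rela,k)$, the reduction must produce the pair $(\rela',p(k))$ where $\rela'=d(k)(\langle a(k),\rela\rangle)$. The new parameter $p(k)$ depends only on $k$ and is trivially computable, so it can be written to the output tape using space bounded by a function of $k$. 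It remains to produce $\rela'$ within the stated space bound.

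First I would fix $k$ and note that, since $k$ is the parameter, everything depending only on $k$ is "for free": the auxiliary structure $a(k)$, the interpretation $d(k)$ together with its formulas $\varphi_R$ for $R\in\tau\dot\cup\{U\}$, and the bound $w$ are all computable from $k$ and have size bounded by some computable $g(k)$; they can be stored on a work tape of size $f_1(k)$. Next, form (a description of) the combined structure $\langle a(k),\rela\rangle$. Its universe is $(\{1\}\times a(k))\dot\cup(\{2\}\times A)$, and any atomic fact about it is decidable by inspecting $a(k)$ (stored, size $O(f_1(k))$) and $\rela$ (the input, so reading a single tuple uses $O(\log|\rela|)$ space), together with constant bookkeeping. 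Thus an oracle deciding atoms of $\langle a(k),\rela\rangle$ runs in space $O(f_1(k)+\log|\rela|)$.

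The core step is evaluating the quantifier-free formulas $\varphi_R(\bar x_1,\ldots,\bar x_{\ar(R)})$ of $d(k)$, each with $\le r\cdot w$ variables ($r$ the arity bound of $Q'$), on tuples of elements of $\langle a(k),\rela\rangle$. Because $Q'$ has bounded arity and $d(k)$ has fixed dimension $w$, the number of variables is bounded by a function of $k$ alone; a tuple of that many elements of $\langle a(k),\rela\rangle$ can be held in space $O(f_2(k)+\log|\rela|)$ (each element index needs $O(\log|\rela|)$ bits, and there are $O(f_2(k))$ of them). A quantifier-free formula is a Boolean combination of atoms, and each atom is evaluated by the oracle above; so the whole formula is evaluated in space $O(f_1(k)+f_2(k)+\log|\rela|)$. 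To output $\rela'$, I would cycle through all candidate tuples in lexicographic order: first enumerate $\varphi_U(\langle a(k),\rela\rangle)\subseteq (\text{universe})^w$ to produce the universe of $\rela'$, then for each $R\in\tau$ enumerate the $w\cdot\ar(R)$-tuples, test membership via $\varphi_R$ intersected with $\varphi_U^{\ar(R)}$, and write the surviving tuples. Enumerating tuples requires only counters of size $O(f(k)+\log|\rela|)$, and each membership test reuses the same bounded work space; the output tape is write-only and does not count. Hence $r$ is computable in space $O(f(k)+\log|\rela|)$, and combined with the bound on the new parameter this is exactly a $\pl$-reduction.

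The main obstacle — and the place where the hypothesis that $Q'$ has bounded arity is essential — is controlling the size of the tuples one must enumerate and hold in memory: without an arity bound on $Q'$, the formulas $\varphi_R$ of $d(k)$ would have $\ar(R)\cdot w$ variables with $\ar(R)$ unbounded, so a single tuple of the output structure could not be stored in space $O(f(k)+\log|\rela|)$, and indeed the output structure itself could be of superpolynomial size (as noted in the remark following Definition~\ref{df:qfred}). Everything else is routine: the simulation is a straightforward nested enumeration with a quantifier-free-formula evaluator as its innermost loop, and quantifier-free evaluation is manifestly logspace once the (parameter-sized) auxiliary data is in place.
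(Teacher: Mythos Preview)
Your proposal is correct and follows essentially the same route as the paper's proof: precompute $p(k),a(k),d(k)$ using space depending only on $k$, then evaluate each quantifier-free formula of $d(k)$ on all candidate tuples over $\langle a(k),\rela\rangle$ in logarithmic space. One small point worth tightening: you write that the number of variables is ``bounded by a function of $k$ alone'' and that storing $O(f_2(k))$ indices of $O(\log|\rela|)$ bits each costs $O(f_2(k)+\log|\rela|)$, but that product-to-sum step is not valid in general; what makes it work here (and what the paper stresses) is that $r\cdot w$ is an \emph{absolute constant}, since both the arity bound $r$ of $Q'$ and the dimension $w$ are fixed independently of $k$, so the tuple really occupies $O(\log|\rela|)$ space.
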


\begin{proof}
 Let $(p,a,d)$ witness $Q\qfred Q'$. 
We need to explain 
how to compute the output $(\rela',k')$ of the reduction in parameterized logarithmic space; here $k'=p(k)$ and 
$\rela'=d(k)(\langle a(k),\rela\rangle)$.
The computation of $p(k),a(k)$ and $d(k)$ 
requires an amount of space that depends on the parameter $k$ only. We show how to compute
 an isomorphic copy of $\rela'$ 
from (a binary encoding of) $\langle a(k),\rela\rangle$ and $d(k)$ in parameterized logarithmic space. 

First note the following: for every formula $\varphi=\varphi(\bar x)$ from $d(k)$ and  for every
length~$|\bar x|$ tuple $\bar a$ from the universe $\tilde A$ of $\langle a(k),\rela\rangle$ 
one can decide in space $O(|\varphi|\log|\varphi|+\log |\tilde A|)$
whether  $\langle a(k),\rela\rangle\models\varphi(\bar a)$. 
Indeed, if $w$ is the dimension of $d(k)$ and $r$ bounds the arity of~$Q'$, then
$\varphi$ has at most $rw$ many variables $\bar x$, and this is an absolute constant.

We compute a copy of $\rela'$ with universe $[m]$ for $m=|A'|$. The binary encoding of the structure
$\langle a(k),\rela\rangle$ determines a linear order on $\tilde A$, and 
this induces a lexicographic order on finite tuples over $\tilde A$. 
To compute $m$, cycle through all $\bar a\in\tilde A^w$ in lexicographic order and increase a counter whenever $\bar a$ passes the following check: 
check that $\langle a(k),\rela\rangle\models \varphi_U(\bar a)$, and check that there is no $\bar a'\in \tilde A^w$ 
lexicographically smaller~$\bar a$ and such that $\langle a(k),\rela\rangle\models(\varphi_=(\bar a,\bar a')\wedge\varphi_U(\bar a'))$.
The latter check is  done by cycling through all $\bar a'\in \tilde A^w$.

Using a similar loop, one can  determine, given $i\in[m]$, the $i$-th tuple $\bar a$ passing the check; 
we denote this tuple by $\bar a_i$. 
Now, to determine the
bits of the encoding of the copy of~$\rela'$, it is sufficient to determine, given a relation symbol $R$ and a tuple
$\bar\imath=(i_1,\ldots,i_{\ar(R)})$ from $ [m]^{\ar(R)}$, whether~$\bar\imath$ satisfies the interpretation of $R$ over $[m]$. This is 
done by computing $\bar a_{i_1},\ldots,\bar a_{i_{\ar(R)}}$ and checking whether 
$\langle a(k),\rela\rangle\models\varphi_R(\bar a_{i_1},\ldots,\bar a_{i_{\ar(R)}})$.
\end{proof}

\paragraph*{Convention} For technical reasons we need to consider 
homomorphism problems $\HOMP{\fancya}$ also for classes $\fancya$ which 
are not necessarily decidable. In such a case we slightly abuse notation and write $\HOMP{\fancya}\qfred Q$ for a 
parameterized problem $Q$ to mean that there are {\em partially} computable 
functions $p,a,d$ whose domain contains $\{\numb{\rela}\mid\rela\in\fancya\}$ such 
that for all $\rela\in \fancya$ and similar $\relb$ we have 
that $d(\numb{\rela})(\langle a(\numb{\rela}),\relb \rangle)=:\relb'$ is defined and:
$$
\rela\hto\relb\quad\Longleftrightarrow\quad ( \relb' , p(\numb{\rela}))\in Q.
$$

\subsection{Homomorphism problems for graph classes}\label{sec:graphclass}

Let $\fG,\fH$ be computably enumerable classes of graphs. In this subsection we show that the associated homomorphism problems 
$\HOMP{\fG^*}$ and $\HOMP{\fH^*}$ are $\qfeq$-equivalent if the graph classes $\fG$ and $\fH$ are $\equiv$-equivalent:

\begin{theorem}\label{theo:dechom} If $\fG \preceq \fH$, then
$\HOMP{\fG^*} \qfred \HOMP{\fH^*}$.
\end{theorem}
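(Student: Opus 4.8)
The plan is to fix a constant $w$ such that every graph $\relg \in \fG$ has an $\relh$-deconstruction of width $\le w$ for some $\relh \in \fH$, and to exhibit a qfap-reduction that maps an instance $(\relb,\numb{\relg^*})$ of $\HOMP{\fG^*}$ to an instance $(\relb',\numb{\relh^*})$ of $\HOMP{\fH^*}$. The precomputation $a$ on the input $\numb{\relg^*}$ will search for a witnessing $\relh \in \fH$ together with a width-$\le w$ deconstruction $(B_h)_{h\in H}$ of $\relg$; since $\fH$ is computably enumerable and deconstructions of bounded width can be checked, such a search terminates, and $a(\numb{\relg^*})$ will output (a coding of) $\relh$ together with the bag data. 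The new parameter is $p(\numb{\relg^*}) = \numb{\relh^*}$. The dimension of the interpretation $d$ will be $w$: an element of the universe of $\relb'$ is a $w$-tuple over the universe of $\relb$, thought of as a candidate assignment of the (at most $w$) vertices lying in some bag $B_h \cup B_{h'}$.

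The core combinatorial idea is the standard one behind treewidth-based homomorphism reductions, adapted to deconstructions. Given $\relg \to \relb$, one wants to build $\relh \to \relb'$ and conversely. Concretely, for each $h \in H$ fix an enumeration of $B_h$; the ``intended'' universe element of $\relb'$ sitting over the pair $(h,h')\in\refl(E^{\relh})$ records a partial homomorphism from $\langle B_h \cup B_{h'}\rangle^{\relg}$ into $\relb$. We encode the unary constants of $\relg^*$ by requiring the recorded partial map to be consistent with them, and the unary constants of $\relh^*$ translate into requiring the recorded partial map to be defined (and correct) on the corresponding distinguished vertices of $\relg$; the edge relation of $\relh^*$-instances is handled by demanding that two recorded partial maps, over adjacent $h,h'$, agree on $B_h \cap B_{h'}$. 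All of these conditions --- "is a partial homomorphism of $\langle B_h\cup B_{h'}\rangle^{\relg}$ into $\relb$'', "agrees on the overlap'', "respects the constants'' --- are expressible by quantifier-free formulas in $\langle a(\numb{\relg^*}), \relb\rangle$ because $\relg$, $\relh$, the bags, and the enumerations are all part of the precomputed structure $a(\numb{\relg^*})$, and only the membership/edge facts about $\relb$ need to be queried, each at fixed tuple positions. One then checks: a homomorphism $\relg \to \relb$ yields, by restriction, a family of recorded partial maps that is a homomorphism $\relh^* \to \relb'$; conversely, a homomorphism $\relh^* \to \relb'$ yields, by gluing the recorded partial maps, a well-defined total map $\relg \to \relb$, where well-definedness uses exactly the connectivity axiom of the deconstruction (the set of $h$ containing a given $g$ is connected in $\relh$, so the recorded values of $g$ propagate consistently) and totality uses the coverage axiom (every $g$ lies in some bag, every edge $(g,g')$ lies in $B_h \cup B_{h'}$ for adjacent $h,h'$, so it is respected).

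The main obstacle, and the part deserving the most care, is verifying that the interpretation $d(\numb{\relg^*})$ is genuinely quantifier-free and of a dimension depending only on $w$ (not on $\relg$ or $\relh$), while still correctly handling the connectivity-based gluing in the soundness direction. The subtlety is that connectivity of $\{h \mid g \in B_h\}$ in $\relh$ is a global property, yet the gluing must be justified from purely local (edge-wise) agreement conditions that the quantifier-free interpretation imposes; this is resolved exactly as in the tree-decomposition case by Proposition~\ref{prop:inheriting-connectivity}-style reasoning, pushed to the reader as a correctness lemma about the combinatorial construction rather than built into the formula. A secondary point to get right is that the universe formula $\varphi_U$ must be nonempty and that isolated vertices of $\relh$ (pairs $(h,h)$) are handled, so that $d(\numb{\relg^*})(\langle a(\numb{\relg^*}),\relb\rangle)$ is always defined as required by Definition~\ref{df:qfred}; this is arranged by including the reflexive pairs $(h,h) \in \refl(E^{\relh})$ among the ``slots'' and by padding. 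Finally, one notes $\fancya = \fG^*$ and $\fancya = \fH^*$ consist of cores (as remarked after the definition of $\rela^*$), and that the abuse-of-notation convention is not needed here since $\fG,\fH$ are computably enumerable, so $p,a,d$ are genuinely computable.
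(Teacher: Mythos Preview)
Your approach is essentially the paper's: precompute $\relh$ together with a width-$\le w$ $\relh$-deconstruction of $\relg$, let the universe of $\relb'$ consist of (encodings of) partial homomorphisms on bags, use the colours $C_h$ of $\relh^*$ to pin down which bag an element refers to, use the edge relation of $\relb'$ to enforce that partial maps at adjacent nodes glue to a partial homomorphism on the union bag, and invoke the connectivity axiom of the deconstruction to show the glued map is well-defined. The correctness sketch is the right one, and the issue of empty bags that you flag is also handled (by a separate case split) in the paper.

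There is, however, a genuine gap in the claimed dimension. If an element of $\relb'$ is literally a $w$-tuple $\bar b$ over $B$, it carries no information about \emph{which} vertices of $G$ it is meant to assign, and then neither $\varphi_E$ nor $\varphi_{C_h}$ can be written down correctly. The formula $\varphi_E(\bar y,\bar y')$ would have to express ``the combined map on $B_h\cup B_{h'}$ is a partial homomorphism from $\relg^*$ to $\relb$'', but whether that holds depends on the pair $(h,h')$, and a single relation $E^{\relb'}\subseteq B^w\times B^w$ must serve simultaneously for all edges of $\relh$. The colours cannot rescue this: if in the input structure all $C_g^{\relb}$ happen to equal $B$, then every candidate $C_h^{\relb'}$ collapses to the same set, while different edges of $\relh$ may still impose genuinely different constraints coming from $E^{\relg}$.

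The paper resolves this by taking dimension $2w$ rather than $w$: an element of $\relb'$ is a pair $(\bar g,\bar b)\in G^w\times B^w$ such that $\{(g_i,b_i)\}_i$ is a partial homomorphism from $\relg^*$ to $\relb$. The auxiliary structure stores, for each $h\in H$, a fixed enumeration $\bar g^h$ of $B_h$; the colour $C_h^{\relb'}$ is simply $\bar g=\bar g^h$, and $\varphi_E$ says that the concatenation $(\bar g\bar g',\bar b\bar b')$ is a partial homomorphism of length $2w$. Because the bag identity is carried inside each element, the edge formula needs no case analysis over $(h,h')$, and everything is quantifier-free of a fixed dimension. (Dimension $w+1$, with one coordinate drawn from a copy of $H$ placed in the auxiliary structure, would also work; but bare $w$-tuples over $B$ do not.)
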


\begin{proof}
Choose $w$ such that graphs in  $\fG$ have $\fH$-deconstructions of width at most $w$. Using that $\fH$ is computably  enumerable, it is not hard to see that there is an algorithm that computes given $\relg\in\fG$ a graph $\relh\in\fH$ and an 
$\relh$-deconstruction $(B_h)_{h\in H}$ of $\relg$ of width at most $w$.
Given an instance $(\relb,\numb{\relg^*})$ of $\HOMP{\fG^*}$ with $\relg\in\fG$ and $\relb$ similar to  
$\relg^*$ the reduction outputs $(\relb',\numb{\relh^*})$ where $\relh\in\fH$ is as above and $\relb'$ is 
defined as follows. Assume first that all bags $B_h,h\in H,$ are nonempty.

For $\ell\in\mathbb N$ let  $\textit{PH}(\relg^*,\relb,\ell)$ denote the set of pairs 
$$
(g_1\cdots g_\ell,b_1\cdots b_\ell)\in G^\ell\times B^\ell
$$ 
such that
$\{(g_1,b_1),\ldots,(g_\ell,b_\ell)\}$ is a partial homomorphism from $\relg^*$ to $\relb$.
 For each $h\in H$ choose a tuple $\bar g^h:=g^h_1\cdots g^h_w\in G^w$ that lists the elements of $B_h$. 
Note that $\textit{PH}(\relg^*,\relb,\ell)$ is empty only if $(\relb,\numb{\relg^*})$ is a ``no''-instance of $\HOMP{\fG^*}$. If
$\textit{PH}(\relg^*,\relb,\ell)$ is non-empty it carries a structure $\relb'$ defined as follows:
\begin{eqnarray*}
B'&:=&\textit{PH}(\relg^*,\relb,w),\\
E^{\relb'}&:=&\big\{((\bar g,\bar b),(\bar g',\bar b'))\in \textit{PH}(\relg^*,\relb,w)^2\mid (\bar g\bar g',\bar b\bar b')\in \textit{PH}(\relg^*,\relb,2w)\big\},\\
C^{\relb'}_h&:=&\big\{(\bar g,\bar b)\in \textit{PH}(\relg^*,\relb,w)\mid \bar g=\bar g^h \big\},\quad\textup{ for } h\in H.
\end{eqnarray*}
 
We claim that 
$$
\relg^*\hto\relb\quad\Longleftrightarrow\quad \relh^*\hto\relb'.
$$

If $f$ is a homomorphism from $\relg^*$ to $\relb$, then $h\mapsto(\bar g^h,f(\bar g^h))$ is a homomorphism from~$\relh^*$ to $\relb'$.
Conversely, suppose that $f'$ is a homomorphism from $\relh^*$ to~$\relb'$. 
%
%
If $f'$ maps $h\in H$ to $(\bar g,\bar b)\in B'$, let $f^h$ be the map 
$\{(g_1,b_1),\ldots,(g_w,b_w)\}$. Note that $\dom(f^h)=B_h$ because $f'$ preserves the colours $C_h$.
Any two such maps $f^h$ and $f^{h'}$ are compatible in the 
sense that they agree on arguments on which they are both defined: 
indeed, if $a\in\dom(f^h)\cap\dom(f^{h'})$ then $a\in B_{h}$ and $a\in B_{h'}$, so there is a 
path in $\relh$ from $h$ to $h'$; if $f^h(a)\neq f^{h'}(a)$ then there exists 
neighbors $h_0,h_1$ on this path such that $f^{h_0}(a)\neq f^{h_1}(a)$; then $f^{h_0}\cup f^{h_1}$ is
not a function, and in particular $(f'(h_0),f'(h_1))\notin E^{\relb'}$; 
as $(h_0,h_1)\in E^{\relh}$ this contradicts $f'$ being a homomorphism.
Therefore and since every $g\in G$ appears in some $B_h$, $f:=\bigcup_{h\in H}f^h$ is a
 function from $G$ to $B$. To verify it is a homomorphism we show that it preserves $E$; that is preserves the colours $C_g$ can be seen similary. So, given 
an edge $(g,g')\in E^{\relg}$ we have to show $(f(g),f(g'))\in E^{\relb}$.
Choose $(h,h')\in\refl(E^{\relh})$ such that $\{g,g'\}\in B_h\cup B_{h'}$. Then $f^h\cup f^{h'}$ 
is a 
partial homomorphism from $\relg^*$ to $\relb$: this is clear if $h=h'$; 
otherwise $(h,h')\in E^{\relh}$, so $(f'(h),f'(h'))\in E^{\relb'}$ and 
it follows by definition of $E^{\relb'}$ that $f^h\cup f^{h'}$ is a partial homomorphism. But 
$f^h\cup f^{h'}$ is defined on $g,g'$, so $f^h\cup f^{h'}$ and hence $f$ maps $(g,g')$ 
to an edge in $E^{\relb}$.

We are left to show that there is a quantifier-free interpretation producing $\relb'$ from 
$\langle\tilde\relg,\relb\rangle$ for some structure $\tilde\relg$ computable from~$\relg^*$. 
For $\tilde\relg$ we take the expansion of $\relg^*$ that interprets for every 
$h\in H$ a $w$-ary relation symbol $B_h$ by $\{\bar g^h\}$. 
For $w$-tuples of variables  $\bar x=x_1\cdots x_w$ and $\bar y=y_1\cdots y_w$ consider the formula
\begin{eqnarray*}
 \textstyle 
\textit{ph}^w(\bar x, \bar y)&:=& \textstyle 
\bigwedge_{i\in[w]}  P_1x_i\wedge \bigwedge_{i\in[w]}  P_2y_i
\wedge\bigwedge_{(i,i')\in[w]^2}(x_i=x_{i'}\to y_i=y_{i'}) \\
&& \textstyle \wedge\bigwedge_{(i,i')\in[w]^2} (Ex_ix_{i'}\to Ey_iy_{i'})
\wedge\bigwedge_{i\in[w]}\bigwedge_{g\in G}(C_gx_i\to C_gy_i).
\end{eqnarray*}
Let $\textit{ph}^{2w}$ be similarly defined for $2w$-tuples. Then define
\begin{eqnarray*}
\varphi_U(\bar x\bar y)&:=& \textit{ph}^w(\bar x, \bar y)\\
\varphi_=(\bar x\bar y,\bar x'\bar y')&:=&\textstyle \bigwedge_{i\in[w]}(x_i=x'_{i}\wedge y_i=y'_{i} )\\
\varphi_E(\bar x\bar y, \bar x'\bar y')&:=& \textit{ph}^{2w}(\bar x\bar x', \bar y\bar y')\\
\varphi_{C_h}(\bar x\bar y)&:=&\textstyle B_h\bar x,
\end{eqnarray*}
for $h\in H$. This is a quantifier-free interpretation $I$ of dimension $2w$ such that 
$I(\langle\tilde\relg,\relb\rangle)$ is defined and equals $\emptyset$ if $\textit{PH}(\relg^*,\relb,w)=\emptyset$, and otherwise 
equals $\relb'$. This finishes the 
proof for the case that the bags~$B_h,h\in H,$ are nonempty.

In the general case we can assume that any $B_h$ is nonempty whenever there exists $h'\in H$ in the connected component of $h$ in~$\relh$ 
such that $B_{h'}\neq\emptyset$. Thus we can assume that $\relh$ is the disjoint union of $\relh_0$ and $\relh_1$ such 
that $B_h\neq\emptyset$ for all $h\in H_0$ and $B_h=\emptyset$ for all $h\in H_1$. As seen above we get a $\relb'$ such that
either $\relb'=\emptyset$ and $(\relb,\numb{\relg^*})$ is a ``no''-instance of $\HOMP{\fG^*}$, or
$\relb'$ is a structure such that
\begin{equation*}\label{eq:h0}
\relg^*\hto \relb\quad\Longleftrightarrow \quad\relh^*_0\hto\relb'.
\end{equation*}
Define $\relb''$ as follows using a new vertex $b''\notin B'$: 
\begin{eqnarray*}
B''&:=&B'\cup\{b''\},\\
E^{\relb''}&:=&E^{\relb'}\cup\{(b'',b'')\},\\
C^{\relb''}_h&:=&\left\{
\begin{array}{ll}
C^{\relb'}_h&  , h\in H_0\\
\{b''\}&, h\in H_1
\end{array}\right.;
\end{eqnarray*}
in case $\relb'=\emptyset$ we understand here that the sets $B', E^{\relb'},C^{\relb'}_h$ are empty.

It is straightforward to check that $\relb''$ can be defined by 
a quantifier-free interpretation: e.g. as formula $\varphi_U(\bar x\bar y)$  one may now take 
$$\textstyle
\textit{ph}^w(\bar x, \bar y)\vee\bigwedge_{i\in[w]}(C_{g_0}x_i\wedge x_i=y_i\wedge P_1x_i),
$$
for some fixed vertex $g_0\in G$. Furthermore,
it is easy to see that
\begin{equation*}\label{eq:h0h}
\relh^*_0\hto\relb'\quad\Longleftrightarrow \quad\relh^*\hto\relb'',
\end{equation*}
if $\relb'\neq\emptyset$; if $\relb'=\emptyset$, then $\relh^*\hto\relb''$ fails.
Therefore
 $(\relg^*,\relb)\mapsto (\relh^*,\relb'')$ is a reduction from $\HOMP{\fG^*}$ to $\HOMP{\fH^*}$ as desired. 
\end{proof}
 
\subsection{Homomorphism problems for classes of arbitrary structures}\label{sec:strclass}

Let $\fancya$ be a computably enumerable class of structures.

\begin{theorem}\label{theo:coregraphhom} Suppose $\fancya$ has bounded arity.
Let $\fG$ denote the class of graphs from the
hierarchy~(\ref{eq:hierarchy}) 
having the property that
$\fG\equiv\gr(\core(\fancya))$. Then
$$
\HOMP{\fancya} 
\qfeq\HOMP{\fG^*}.
$$
\end{theorem}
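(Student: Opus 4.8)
The plan is to establish the two displayed equivalences separately. The second one, $\HOMP{\gr(\core(\fancya))^*}\qfeq\HOMP{\fG^*}$, I would get immediately from Theorem~\ref{theo:dechom}: since $\fG$ lies in the hierarchy~\eqref{eq:hierarchy} and $\fG\equiv\gr(\core(\fancya))$ we have both $\gr(\core(\fancya))\preceq\fG$ and $\fG\preceq\gr(\core(\fancya))$, so Theorem~\ref{theo:dechom} applied in each direction gives $\qfred$-reductions both ways. This is legitimate because both classes are computably enumerable --- $\fG$ is decidable, and $\gr(\core(\fancya))$ is c.e.\ since $\fancya$ is (enumerate $\fancya$, take cores, then Gaifman graphs) --- which is just what makes the deconstruction search inside the reduction of Theorem~\ref{theo:dechom} effective.

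For the first equivalence I would first reduce to cores. The problems $\HOMP{\fancya}$ and $\HOMP{\core(\fancya)}$, where $\core(\fancya):=\{\core(\rela)\mid\rela\in\fancya\}$, are $\qfeq$ by trivial reductions: both directions leave the second structure unchanged (the identity interpretation, dimension~$1$) and only adjust the parameter, replacing $\numb\rela$ by $\numb{\core(\rela)}$ and, conversely, recovering by enumeration of $\fancya$ some $\rela\in\fancya$ with the prescribed core; correctness is $\rela\hto\relb\iff\core(\rela)\hto\relb$. So it suffices to prove $\HOMP{\fancyc}\qfeq\HOMP{\gr(\fancyc)^*}$ for a c.e.\ class $\fancyc$ of cores of bounded arity $r$.

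The direction $\HOMP{\gr(\fancyc)^*}\qfred\HOMP{\fancyc}$ is the more routine one. Given $(\relb,\numb{\gr(\relc)^*})$ with $\relc\in\fancyc$, $\relb$ similar to $\gr(\relc)^*$, and $\sigma$ the vocabulary of $\relc$, output $(\relb',\numb\relc)$ where $\relb'$ is the $\sigma$-structure whose universe is the set of pairs $(a,b)$ with $a\in C$ and $b\in C_a^{\relb}$, and where for $R\in\sigma$ of arity $k$ the relation $R^{\relb'}$ consists of the tuples $((a_1,b_1),\ldots,(a_k,b_k))$ with $(a_1,\ldots,a_k)\in R^{\relc}$ and $(b_i,b_j)\in E^{\relb}$ whenever $a_i\neq a_j$. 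This is a quantifier-free interpretation of bounded dimension from $\langle\tilde\relc,\relb\rangle$ for a precomputed expansion $\tilde\relc$ of $\relc$ that names the elements of $\relc$ and carries $\gr(\relc)$. For correctness: if $f\colon\gr(\relc)^*\to\relb$ then $a\mapsto(a,f(a))$ is a homomorphism $\relc\to\relb'$; conversely a homomorphism $\relc\to\relb'$ has the form $a\mapsto(\alpha(a),\beta(a))$ with $\alpha$ an endomorphism of $\relc$, hence an automorphism (here the core hypothesis is used), so that $\beta\circ\alpha^{-1}$ is a homomorphism $\gr(\relc)^*\to\relb$, using that $\alpha^{-1}$ is also an automorphism of $\gr(\relc)$. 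Thus $\gr(\relc)^*\hto\relb\iff\relc\hto\relb'$.

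The reverse direction $\HOMP{\fancyc}\qfred\HOMP{\gr(\fancyc)^*}$ is the heart of the matter and the step I expect to be the main obstacle. For arity above~$2$ the Gaifman graph $\gr(\relc)$ forgets which tuples produced which edges, so a homomorphism out of the colored graph $\gr(\relc)^*$, being governed only by pairwise constraints, cannot encode a genuine $r$-ary homomorphism unless the target carries extra local data --- and the bounded dimension of a qfap-interpretation forbids storing a whole homomorphism in each element of the target. My plan is to take the target $\relb'$ to have as elements \emph{pointed} partial homomorphisms $\relc\to\relb$ whose domain is the vertex set of a single relational tuple of $\relc$ (of size at most $r$, hence codable in bounded dimension), with the color $C_a$ marking the ones pointed at $a$, and with the edge relation on $\relb'$ expressing that two such partial homomorphisms agree on their common domain and each contains the other's point; then a homomorphism $\gr(\relc)^*\to\relb'$ picks a compatible system of partial homomorphisms that glues to a function $f\colon C\to B$, and the clique on the vertices of each tuple $\bar a\in R^{\relc}$ forces one of the chosen partial homomorphisms to be defined on all of $\bar a$, whence $f(\bar a)\in R^{\relb}$; the converse is the routine direction. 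The subtlety, and the obstacle, is to arrange the edge relation so that it simultaneously forces this tuple-realization and is still satisfiable in the converse direction (where each pattern vertex receives a single image), which may require passing first to an equivalent class of binary structures --- e.g.\ the cores of the position-labelled incidence structures $\relc^{\mathrm{inc}}$, for which $\gr(\core(\{\relc^{\mathrm{inc}}\}))\equiv\gr(\fancyc)$ because the incidence encoding changes treewidth, pathwidth, tree depth and stack depth of a bounded-arity structure only by an additive constant --- and then using the clean binary case, where all constraints are already pairwise. Chaining these reductions, together with Theorem~\ref{theo:dechom} to pass between $\equiv$-equivalent graph classes, and composing with the first equivalence, yields the theorem; besides the obstacle above, the steps needing care are verifying that the incidence encoding respects cores and keeping every interpretation of absolutely bounded dimension.
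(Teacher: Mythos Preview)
Your treatment of the second equivalence and of the direction $\HOMP{\gr(\fancyc)^*}\qfred\HOMP{\fancyc}$ is essentially the paper's: the paper factors the latter through the starred class $\core(\fancya)^*$ (its Corollary~\ref{cor:core} and the lemma $\HOMP{\gr(\fancya)^*}\qfred\HOMP{\fancya^*}$), but your single product construction is just the composite of those two steps and is correct.

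The genuine divergence, and a genuine gap, is in the hard direction $\HOMP{\fancyc}\qfred\HOMP{\gr(\fancyc)^*}$. Your pointed--partial--homomorphism target does not work as stated, and you correctly identify the obstruction: in the forward direction each $a$ must be sent to a partial homomorphism on the support of \emph{one} tuple through $a$, and for an edge $(a,a')$ the two chosen tuples need not contain the other endpoint, so your edge condition fails. Your fallback via incidence structures is in the right spirit but is left as a promissory note: you would still need (i) the binary case of the hard direction spelled out, (ii) a proof that the position-labelled incidence structure of a core is again a core, and (iii) that the passage $\relc\mapsto\relc^{\mathrm{inc}}$ preserves the $\equiv$-class of Gaifman graphs uniformly over $\fancyc$ --- plausible for treewidth, pathwidth and tree depth, but you have not argued it, and you have not argued stack depth at all.

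The paper sidesteps all of this with a different idea: a case split on whether $\gr(\fancya)$ has bounded treewidth. If it does not, then $\gr(\fancya)\equiv\L$, so it suffices to reduce $\HOMP{\fancya^*}$ to $\HOMP{\fH^*}$ for \emph{some} graph class $\fH$ (any such $\fH$ satisfies $\fH\preceq\L\equiv\gr(\fancya)$, whence Theorem~\ref{theo:dechom} closes the loop); here an incidence-graph construction works without any need to match invariants. If it does, one uses the Hierarchy Theorem to pass to a class of forests and then mimics the proof of Theorem~\ref{theo:dechom}, exploiting that in a tree decomposition every tuple of $\rela$ lies in a single bag, so the partial homomorphisms glue to a genuine $\sigma$-homomorphism. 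This case split is the missing idea; it turns your unverified ``invariants change by a constant'' claim into a non-issue.
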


Thus, the complexity of $\HOMP{\fancya}$ is determined in a strong sense by the level which the graph class 
$\gr(\core(\fancya))$ takes in our hierarchy. For example, because the reductions are weaker than fpt-reductions it is the level $\gr(\core(\fancya))$ takes 
in our hierarchy what determines whether $\HOMP{\fancya}$ is W[1]-complete 
or fixed-parameter tractable (cf.~\cite{Grohe07-otherside}), and because it is weaker
 than pl-reductions 
this level determines 
whether $\HOMP{\fancya}$
is in para-L or PATH or TREE (cf.~\cite{ChenMueller13-fineclass-arxiv}).

\medskip

We divide the proof into a sequence of lemmas. The first two of them are analogues of Lemmas~\ref{lemma:polytime-uncore}
and~\ref{lemma:polytime-undo-graph}. 
It is straightforward, and left to the reader, to verify that these reductions are quantifier-free after a pre-computation.


\begin{lemma}\label{lemma:arxive1} If $\fancya$ is a class of cores, then
$\HOMP{\fancya^*} \qfred \HOMP{\fancya}$.
\end{lemma}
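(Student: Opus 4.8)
**Proof plan for the Lemma: $\HOMP{\fancya^*} \qfred \HOMP{\fancya}$ when $\fancya$ is a class of cores.**

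The plan is to reduce an instance $(\relb, \numb{\rela^*})$ of $\HOMP{\fancya^*}$ — where $\rela \in \fancya$ and $\relb$ is similar to $\rela^*$ — to an instance $(\relb', \numb{\rela})$ of $\HOMP{\fancya}$, by \emph{deleting the new color relations} from $\relb$ and replacing $\relb$ by the direct product $\relb' := \relb \times \rela$ (restricted to the original vocabulary $\sigma$ of $\rela$). The point of taking the product with $\rela$ is that, since $\rela$ is a core, a homomorphism $\rela \to \relb \times \rela$ must compose with the second projection $\relb \times \rela \to \rela$ to give an endomorphism of $\rela$, which is then an automorphism; composing with its inverse lets one extract from a homomorphism $\rela \to \relb'$ a homomorphism $\rela^* \to \relb$. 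This is exactly the standard trick (as used in~\cite{ChenMueller13-fineclass-arxiv}) for ``forgetting'' that the source is rigidified.

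First I would record the structural equivalence: I claim $\rela^* \hto \relb$ if and only if $\rela \hto \relb'$, where $\relb' := \relb\res\sigma \times \rela$ and $\sigma$ is the vocabulary of $\rela$. For the forward direction, if $h: \rela^* \to \relb$ is a homomorphism, then $h$ is in particular a $\sigma$-homomorphism $\rela \to \relb\res\sigma$, and $a \mapsto (h(a), a)$ is a $\sigma$-homomorphism $\rela \to \relb'$. For the converse, suppose $g: \rela \to \relb'$ is a $\sigma$-homomorphism; write $g(a) = (g_1(a), g_2(a))$. Then $g_2: \rela \to \rela$ is an endomorphism, hence — as $\rela$ is a finite core — an automorphism. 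Now $h := g_1 \circ g_2^{-1}: A \to B$ is a $\sigma$-homomorphism $\rela \to \relb\res\sigma$, and moreover for each $a \in A$ we have $h(C_a^{\rela^*})$-membership: since $g_2^{-1}(a) \mapsto (g_1(g_2^{-1}(a)), a)$ under $g$, the coordinate $h(a) = g_1(g_2^{-1}(a))$ is the $\relb\res\sigma$-image of the element $g_2^{-1}(a)$, and one checks this lands in $C_a^{\relb}$ using that $g_2$ is a \emph{bijection}. Actually the cleanest route is: $h$ is a homomorphism $\rela^* \to \relb$ iff $h$ respects every $C_a$, i.e. $h(a) \in C_a^{\relb}$, which one argues directly from the product structure. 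I would spell this last verification out carefully, since it is the substantive content.

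Second, I would exhibit the qfap-reduction realizing $(\relb, \numb{\rela^*}) \mapsto (\relb', \numb{\rela})$. The precomputation $a$ maps $\numb{\rela^*}$ to (an encoding of) the structure $\rela$ itself; the new parameter is $p(\numb{\rela^*}) := \numb{\rela}$. For the quantifier-free interpretation $d(\numb{\rela^*})$, we work over $\langle a(\numb{\rela^*}), \relb \rangle = \langle \rela, \relb \rangle$: its universe is the disjoint union of $A$ and $B$ with unary predicates $P_1$ (picking out $A$) and $P_2$ (picking out $B$). The output universe $A \times B$ is defined by the dimension-$2$ formula $\varphi_U(x,y) := P_1 x \wedge P_2 y$, and for each $R \in \sigma$ of arity $r$ we set $\varphi_R(x_1 y_1, \ldots, x_r y_r) := R x_1 \cdots x_r \wedge R y_1 \cdots y_r$ (reading the $R$-atoms in $\langle \rela, \relb\rangle$, which carries both copies of $R$). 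All these formulas are quantifier-free and of bounded dimension, so $d$ is a valid qfap-interpretation; note that $\numb{\rela^*}$ determines $\rela$, hence the vocabulary $\sigma$, hence the finite list of formulas, computably. Since $\fancya$ is computably enumerable, $a$ and $p$ are computable.

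The main obstacle I anticipate is purely bookkeeping around the color predicates: one must be scrupulous that deleting the $C_a$'s and tensoring with $\rela$ really does restore the ``rigid'' semantics, and in particular that the $\Leftarrow$ direction genuinely needs $\rela$ to be a core (finiteness giving endomorphism $=$ automorphism) rather than merely rigid-up-to-something. Everything else — transitivity-friendliness, boundedness of arity and dimension, computability of the components — is routine given Definition~\ref{df:qfred} and the standing assumption that $\fancya$ has bounded arity.
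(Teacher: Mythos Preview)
Your reduction does not work as stated: taking $\relb' = \relb\res\sigma \times \rela$ throws away all information about the colour relations $C_a^{\relb}$, and with it the converse direction of your equivalence. Concretely, let $\rela$ be the triangle $K_3$ on $\{1,2,3\}$ (a core), and let $\relb$ be a triangle on $\{b_1,b_2,b_3\}$ with $C_1^{\relb}=C_2^{\relb}=C_3^{\relb}=\{b_1\}$. Then $\rela^*\not\hto\relb$ (any colour-preserving map is constant $b_1$, which kills the edges), yet $\rela\hto\relb\res\sigma\times\rela$ via $i\mapsto(b_i,i)$. Your sentence ``one checks this lands in $C_a^{\relb}$ using that $g_2$ is a bijection'' is the step that cannot be completed: nothing in $\relb\res\sigma\times\rela$ remembers $C_a^{\relb}$, so there is simply no way to argue it. In the example above your recipe produces $g_2=\mathrm{id}$ and $h(i)=b_i$, and $h(2)=b_2\notin C_2^{\relb}$.

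The fix, which is exactly what the paper does, is to restrict the product to the induced substructure on $\{(a,b)\in A\times B\mid b\in C_a^{\relb}\}$. This is the substantive idea you are missing. With this restriction the converse goes through: if $g:\rela\to\relb'$ then each $g(a)=(g_2(a),g_1(a))$ satisfies $g_1(a)\in C_{g_2(a)}^{\relb}$ by definition of the universe of $\relb'$; composing with $g_2^{-1}$ (available because $\rela$ is a core) then gives $h(a)=g_1(g_2^{-1}(a))\in C_a^{\relb}$, as desired. On the interpretation side, this restriction shows up as the extra conjunct $\bigvee_{a\in A}(C_a x\wedge C_a y)$ in $\varphi_U$, which your $\varphi_U(x,y)=P_1x\wedge P_2y$ omits. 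The rest of your bookkeeping (precomputation, parameter map, dimension) is fine.
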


%

\begin{lemma}\label{lemma:arxive2} $\HOMP{\gr(\fancya)^*} \qfred \HOMP{\fancya^*}.$
\end{lemma}

\begin{corollary}\label{cor:core} $\HOMP{\fancya} \qfeq \HOMP{\core(\fancya)^*}.$
\end{corollary}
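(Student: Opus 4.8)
The plan is to establish the two reductions $\HOMP{\fancya}\qfred\HOMP{\core(\fancya)^*}$ and $\HOMP{\core(\fancya)^*}\qfred\HOMP{\fancya}$ separately: the first directly, and the second by composing the preceding lemma with an easy change-of-parameter reduction via Lemma~\ref{lem:trans}. Throughout I use that a structure $\rela$ is homomorphically equivalent to its core: there is a homomorphism $\rela\hto\core(\rela)$ by definition, and since the universe of $\core(\rela)$ is a subset of $A$ and $R^{\core(\rela)}\subseteq R^{\rela}$ for every symbol $R$, the corresponding inclusion is a homomorphism $\core(\rela)\hto\rela$; hence $\rela\hto\relb$ iff $\core(\rela)\hto\relb$ for every $\relb$ similar to $\rela$ (equivalently, similar to $\core(\rela)$, as the two share a vocabulary).

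For $\HOMP{\fancya}\qfred\HOMP{\core(\fancya)^*}$, on an instance $(\relb,\numb{\rela})$ with $\rela\in\fancya$ I would output $(\reln,\numb{\core(\rela)^*})$, where $\reln$ is the expansion of $\relb$ interpreting, for each element $a$ of $\core(\rela)$, the unary symbol $C_a$ by the whole universe of $\relb$. A homomorphism $\core(\rela)^*\hto\reln$ is exactly a homomorphism $\core(\rela)\hto\relb$ respecting the $C_a$, which is vacuous since $C_a^{\reln}$ is all of $\relb$; so $\core(\rela)^*\hto\reln$ iff $\core(\rela)\hto\relb$ iff $\rela\hto\relb$. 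This is quantifier-free after a precomputation: take $p(\numb{\rela}):=\numb{\core(\rela)^*}$, let $a$ be a constant function whose (irrelevant) value is some fixed one-element structure, and let $d(\numb{\rela})$ be the dimension-$1$ quantifier-free interpretation with $\varphi_U(x):=P_2x$, with $\varphi_R(x_1\cdots x_{\ar(R)}):=Rx_1\cdots x_{\ar(R)}$ for each symbol $R$ of $\rela$, and with $\varphi_{C_a}(x):=P_2x$ for each element $a$ of $\core(\rela)$. The universe of $\core(\rela)$, hence this finite list of formulas, is computable from $\numb{\rela}$, and the dimension is the constant $1$; since the universe of $d(\numb{\rela})(\langle a(\numb{\rela}),\relb\rangle)$ is its $P_2$-part, the interpreted relations are automatically cut down to that part, so this structure is (a copy of) $\reln$.

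For the converse I would apply the preceding lemma with $\core(\fancya)=\{\core(\rela)\mid\rela\in\fancya\}$ in the role of $\fancya$ — this class consists of cores and is computably enumerable (enumerate $\fancya$, apply $\core$) — obtaining $\HOMP{\core(\fancya)^*}\qfred\HOMP{\core(\fancya)}$; it then suffices to reduce $\HOMP{\core(\fancya)}\qfred\HOMP{\fancya}$ and use Lemma~\ref{lem:trans}. Given $(\relb,\numb{\relc})$ with $\relc\in\core(\fancya)$, the precomputation searches the enumeration of $\fancya$ for some $\rela\in\fancya$ with $\core(\rela)\cong\relc$ — such an $\rela$ exists since $\relc\in\core(\fancya)$, and is found since $\fancya$ is computably enumerable and isomorphism of finite structures is decidable — and the reduction outputs $(\relb,\numb{\rela})$; as $\core(\rela)\cong\relc$, the vocabularies of $\rela$ and $\relc$ coincide, so $\relb$ is similar to $\rela$ and this is a legitimate instance, with $\relc\hto\relb$ iff $\core(\rela)\hto\relb$ iff $\rela\hto\relb$. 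Here $p(\numb{\relc})$ is the code of the $\rela$ found (partial computable, defined on all codes of members of $\core(\fancya)$, as the Convention permits), $a$ is again immaterial, and $d(\numb{\relc})$ is the dimension-$1$ interpretation projecting $\langle a(\numb{\relc}),\relb\rangle$ onto its $P_2$-part exactly as above. Combining the two directions yields $\HOMP{\fancya}\qfeq\HOMP{\core(\fancya)^*}$.

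The reductions are essentially routine; the only point needing care is the parameter computation in the second direction, where one must genuinely \emph{find} a member of $\fancya$ whose core is the structure coded by the input parameter — which is exactly where the standing assumption that $\fancya$ is computably enumerable is used — together with the bookkeeping that the ``trivial'' interpretations really meet Definition~\ref{df:qfred}: uniform constant dimension, quantifier-free, and matching output vocabulary.
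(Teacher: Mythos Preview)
Your proof is correct and follows essentially the same route as the paper: both establish the cycle $\HOMP{\fancya}\qfred\HOMP{\core(\fancya)^*}\qfred\HOMP{\core(\fancya)}\qfred\HOMP{\fancya}$, using for the first step the trivial expansion of $\relb$ by full-universe colours, for the second the preceding lemma applied to the class of cores $\core(\fancya)$, and for the third a search through the enumeration of $\fancya$ for a preimage under $\core$. Your write-up is in fact more detailed than the paper's (which leaves the interpretations implicit), and you correctly flag the one place where computable enumerability of $\fancya$ is needed and invoke the Convention on partial computability.
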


\begin{proof} 
By Lemma~\ref{lem:trans} it suffices to establish the following sequence of reductions:
\begin{eqnarray*}
\HOMP{\fancya}\qfred\HOMP{\core(\fancya)^*}\qfred\HOMP{\core(\fancya)}\qfred\HOMP{\fancya}.
\end{eqnarray*}
To see $\HOMP{\fancya}\qfred\HOMP{\core(\fancya)^*}$, map an instance $(\relb,\numb{\rela})$ with $\rela\in\fancya$ to $(\relb',\numb{\core(\rela)^*})$ where $\relb'$ 
expands $\relb$ interpeting $C_a,a\in A,$ by $C_a^{\relb'}:=B$. It is easy to see that this is quantifier-free after a pre-computation.
That $\HOMP{\core(\fancya)^*}\qfred\HOMP{\core(\fancya)}$ follows from Lemma~\ref{lemma:arxive1}. Finally, to see 
$\HOMP{\core(\fancya)}\qfred\HOMP{\fancya}$ we use (here and only here) that $\fancya$ is computably enumerable: this ensures that there exists 
a partially computable function $p$ such that for all $\rela\in\core(\fancya)$ we have that $p(\numb{\rela})$ is the code of
a structure in $\fancya$ whose core is $\rela$; the reduction 
then maps $(\relb,\numb{\rela})$  to $(\relb,p(\numb{\rela}))$. Again it is clear that this is quantifier-free after a pre-computation.
\end{proof}

We show a partial converse to Lemma~\ref{lemma:arxive2}. This is the technically most
 involved step in the proof of Theorem~\ref{theo:coregraphhom}.

\begin{lemma} \label{lem:classtograph}
If $\fancya$ has bounded arity, then $\HOMP{\fancya^*} \qfred \HOMP{\gr(\fancya)^*}.$
\end{lemma}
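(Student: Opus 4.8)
The plan is to show that an instance $(\relb, \numb{\rela^*})$ of $\HOMP{\fancya^*}$ can be reduced to an instance of $\HOMP{\gr(\fancya)^*}$ by building a structure $\relb'$ over the graph $\gr(\rela)$ whose universe encodes partial homomorphisms from $\rela$ to $\relb$ restricted to the edges and vertices of $\gr(\rela)$. Since $\fancya$ has bounded arity, say arity $\leq r$, a tuple in a relation of $\rela$ touches at most $r$ vertices, all pairwise adjacent in $\gr(\rela)$ (or equal). So the "local" information needed to certify a homomorphism can be carried along the edges of $\gr(\rela)$ in bags of bounded size. Concretely, first I would fix an enumeration of $\fancya$ to obtain, from $\numb{\relg^*}$ with $\relg \in \gr(\fancya)$, a code of some $\rela \in \fancya$ with $\gr(\rela) = \relg$; this is the precomputation $a$, and $p(\numb{\relg^*})$ records $\numb{\rela^*}$ — wait, we actually need the output parameter to code $\relg^*$ itself, so set $p(\numb{\rela^*}) := \numb{\gr(\rela)^*}$ and $a(\numb{\rela^*}) := \rela^*$.

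The structure $\relb'$ over vocabulary $\{E\} \cup \{C_g \mid g \in G\}$ (where $G = A$, $\relg = \gr(\rela)$) is defined as follows. Put $B' := \{(a, b) \mid a \in A,\ b \in C_a^{\relb}\}$; interpret $C_g^{\relb'} := \{(g, b) \mid b \in C_g^{\relb}\}$; and put an edge between $(a,b)$ and $(a',b')$ in $E^{\relb'}$ exactly when $(a,a') \in E^{\relg}$ and the partial map $\{(a,b),(a',b')\}$ extends to a partial homomorphism from $\rela$ to $\relb$ that is defined on (the vertex set of) some relational tuple of $\rela$ witnessing the edge $(a,a')$ — more simply, when for every $R \in \sigma$ and every $\bar{a} \in R^{\rela}$ with $a, a'$ among its components, the "bag" condition is locally consistent. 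The cleanest formulation: let the edge hold iff $(a,a') \in E^{\relg}$ and there is an assignment to the at most $r$ vertices of some such tuple that maps it into $R^{\relb}$ and agrees with $b, b'$ on $a, a'$. One then checks the claim
$$
\rela^* \hto \relb \quad \Longleftrightarrow \quad \relg^* \hto \relb'.
$$
Forwards: a homomorphism $f: \rela^* \to \relb$ yields $g \mapsto (g, f(g))$, which is well-defined into $B'$ since $f$ respects the $C_g$'s, respects the $C_g$-colours, and respects $E$ because $f$ maps every relational tuple into $\relb$, so any edge $(g,g')$ of $\relg$ lies in such a tuple whose image certifies the $E^{\relb'}$-edge. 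Backwards: a homomorphism $f': \relg^* \to \relb'$ must, by the colour constraints $C_g$, send $g$ to some $(g, f(g))$; the $E^{\relb'}$-edges then guarantee local consistency along every edge of $\relg$, and since every relational tuple of $\rela$ induces a clique (or a point) in $\relg$, one argues that $f$ is a genuine homomorphism $\rela \to \relb$ — this is where bounded arity is essential, as it bounds how large a consistency window one needs and lets a bounded-width deconstruction-style argument glue the local pieces. (This backwards direction is the part to double-check, and it may well be the lemma cited from~\cite{ChenMueller13-fineclass-arxiv}; if so, I would invoke it, noting that the present edge relation on $\relb'$ encodes exactly the pairwise-consistency of partial homomorphisms that the cited lemma uses.)

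Finally I would exhibit the quantifier-free interpretation $I$ of dimension $2$, applied to $\langle \rela^*, \relb\rangle$, producing $\relb'$: use variables $x, y$ with $P_1x \wedge P_2 y$ in $\varphi_U$ together with $\bigvee_{a \in A}(C_a x \wedge C_a y)$ to pin the first coordinate to a vertex and the second to an admissible element; set $\varphi_{C_g}(xy) := C_g x \wedge C_g y$; and for $\varphi_E(x_1 y_1, x_2 y_2)$ take $E x_1 x_2$ (the edge of $\relg$, available since $\rela^*$ — hence its graph-edge predicate — is precomputed; more precisely precompute into $a$ the expansion of $\rela^*$ by a binary symbol interpreted as $E^{\gr(\rela)}$) conjoined with the quantifier-free local-consistency condition, which is a finite Boolean combination over $A$ and $\sigma$ of atoms $R x_{i_1} \cdots$ and equalities, using that the arity bound $r$ makes the set of relevant tuples finite and nameable by constants. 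Since $A$ is fixed once $\numb{\relg^*}$ (equivalently $\numb{\rela}$) is known, all these disjunctions and conjunctions are finite and $d(\numb{\rela^*}) := I$ is computable. Together with $p, a$ above this witnesses $\HOMP{\fancya^*} \qfred \HOMP{\gr(\fancya)^*}$. The main obstacle is the backwards direction of the equivalence — ensuring that edge-local consistency in $\relb'$ forces global consistency in $\relb$ — which is exactly where the bounded-arity hypothesis is consumed, and which I expect to handle by (or reduce to) \cite[Lemma~3.8]{ChenMueller13-fineclass-arxiv} or a direct argument that a relational tuple's vertex set, being a clique in $\relg$, has all its pairwise constraints simultaneously satisfiable into the corresponding relation.
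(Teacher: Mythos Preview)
Your direct construction has a genuine gap in the backward direction, and the fix is not the cited \cite[Lemma~3.8]{ChenMueller13-fineclass-arxiv}: that lemma is used in the paper for the \emph{opposite} reduction $\HOMP{\gr(\fancya)^*}\qfred\HOMP{\fancya^*}$, not the one you need here.

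Concretely, your edge relation in $\relb'$ records only that the pair $(b,b')$ is extendible to \emph{some} tuple in $R^{\relb}$, and pairwise extendibility does not imply joint extendibility once $\ar(R)\ge 3$. A small counterexample: let $\sigma=\{R\}$ with $R$ ternary, $\rela$ with universe $\{1,2,3\}$ and $R^{\rela}=\{(1,2,3)\}$, and let $\relb$ (similar to $\rela^*$) have $C_i^{\relb}=\{b_i\}$ and $R^{\relb}=\{(b_1,b_2,c_3),(b_1,c_2,b_3),(c_1,b_2,b_3)\}$ for fresh elements $c_1,c_2,c_3$. Then $\rela^*\not\hto\relb$, yet in your $\relb'$ the three vertices $(i,b_i)$ form a triangle (each pair is witnessed by one of the three tuples), so $\relg^*\hto\relb'$. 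Your suggested repair, a ``quantifier-free local-consistency condition'', cannot be written in dimension~$2$: the missing coordinates range over $B$, not over the parameter-sized set $A$, so expressing ``there is an assignment to the remaining positions landing in $R^{\relb}$'' needs either genuine quantifiers or dimension at least $r$.

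The paper's proof is structurally different. It splits on whether $\gr(\fancya)$ has bounded treewidth. If not, then $\L\preceq\gr(\fancya)$ by the Hierarchy Theorem, so via Theorem~\ref{theo:dechom} it suffices to reduce $\HOMP{\fancya^*}$ to $\HOMP{\fH^*}$ for \emph{some} graph class $\fH$; this is done with an incidence-style graph $\In(\rela)$ whose left vertices are triples $(R,\bar a,i)$ and whose interpretation has dimension $r+1$. If $\gr(\fancya)$ has bounded treewidth, the paper picks a forest class $\fH\equiv\gr(\fancya)$ from the hierarchy and runs the argument of Theorem~\ref{theo:dechom} with genuine tree \emph{decompositions} of width $\le w$: because every relational tuple of $\rela$ is a clique in $\gr(\rela)$, it lies entirely in one bag, and the dimension-$2w$ interpretation carrying partial homomorphisms on bags (the formula $\textit{ph}^w$) suffices. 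Both branches avoid exactly the obstacle above by raising the dimension so that a whole tuple (or a whole bag containing it) is handled at once.
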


\begin{proof} 
We consider two cases depending on whether or not $\gr(\fancya)$ has bounded treewidth.

\medskip

{\em Case 1: }
 Suppose $\gr(\fancya)$ has unbounded treewidth. Then $\L\preceq\gr(\fancya)$ by the 
Hierarchy Theorem. By Theorem~\ref{theo:dechom} we get $\HOMP{\L^*}\qfred \HOMP{\gr(\fancya)^*}$.
Since trivially $\HOMP{\fH^*}\qfred\HOMP{\L^*}$ for every class of graphs $\fH$ it suffices to show
$\HOMP{\fancya^*}\qfred\HOMP{\fH^*}$ for some class of graphs $\fH$. 
In fact, we show that for every $\fancya$ with  bounded arity (not necessarily 
of the form $\fancya^*$)  there exists a class 
of graphs $\fH$ such that $\HOMP{\fancya}\qfred\HOMP{\fH^*}$. 

Let $\sigma$ be a vocabulary and $\rela$ be a $\sigma$-structure.  
We define the following graph $\In(\rela)=(L(\rela)\dot\cup R(\rela),E^{\In(\rela)})$, 
reminiscent of the incidence graph. 
Its universe has ``left'' vertices 
$$
L(\rela):=\big\{(R,\bar a,i)\mid R\in\sigma, \bar a\in R^A, i\in[\ar(R)]\big\}
$$ 
together with ``right'' vertices $R(\rela):=A$; for notational simplicity we assume that $L(\rela)\cap R(\rela)=\emptyset$.
Its edges $E^{\In(\rela)}$ are also devided into two kinds, namely we have edges ``on the left'' 
between $(R,\bar a,i)\in L(\rela)$ and $(R,\bar a,i')\in L(\rela)$
for $i\neq i'$ together with ``left to right'' edges  between $(R,a_1\cdots a_{\ar(R)},i)\in L(\rela)$ and $a_i\in R(\rela)$.

Given $\rela\in\fancya$ and a structure $\relb$, say both of vocabulary $\sigma$, define
the following structure $\relb'$. It expands $\In(\relb)$ to a structure interpreting the language of $\In(\rela)^*$. Namely,
for $(R,\bar a,i)\in L(\rela)$ we set
\begin{eqnarray*}
C_{(R,\bar a,i)}^{\relb'}&:=&\big\{(R,\bar b,i)\mid R\in\sigma,\bar b\in R^{\relb},i\in[\ar(R)]\big\}, 
\end{eqnarray*}
a subset of $L(\relb)$, and for $a\in R(\rela)$ we set 
$
C_{a}^{\relb'}:= R(\relb)=B.
$

We claim that  $(\rela,\relb)\mapsto(\In(\rela)^*,\relb')$ is a reduction as desired.
We first show that 
$$
\rela\hto\relb\quad\Longleftrightarrow \quad \In(\rela)^*\hto\relb'.
$$

To see the forward direction, assume $h$ is a homomorphism from $\rela$ to $\relb$. Define $h':L(\rela)\cup R(\rela)\to B'=L(\relb)\cup R(\relb)$ setting $h'(a):=h(a)\in R(\relb)$ for $a\in R(\rela)$ and $h'((R,\bar a,i)):=(R,h(\bar a),i)\in L(\relb)$ for $(R,\bar a,i)\in L(\rela)$. Note that indeed 
$(R,h(\bar a),i)\in L(\relb)$ because $h(\bar a)\in R^{\relb}$ as $h$ is a homomorphism. It also follows that $h'$ preserves the colours $C_{(R,\bar a,i)}$; that it preserves the colours $C_a$ is clear. It is also clear that it preserves edges ``on the left''. Finally consider a ``left to 
right'' edge between
$(R,a_1\cdots a_{\ar(R)},i)\in L(\rela)$ and $a_i\in R(\rela)$. Then
$(h'((R,\bar a,i)), h'(a_i))=((R,h(\bar a),i),h(a_i))$ and this is in $E^{\relb}= E^{\relb'}$.

Conversely, let $h'$ be a homomorphism from $\In(\rela)^*$ to $\relb'$ and let $h$ be its restriction to $R(\rela)=A$. Since $h'$ preserves the colours $C_a,a\in A,$ the function  $h$ takes values in $R(\relb)=B$. We claim $h$ is a homomorphism from $\rela$ to $\relb$.
Let $\bar a=a_1\cdots a_{\ar(R)}\in R^{\rela}$ for $R\in\sigma$. 
 For each $i\in[\ar(R)]$ there exists a tuple $\bar b_i\in R^{\relb}$ such that $h'((R,\bar a,i))=(R,\bar b_i,i)$ because $h'$
preserves the colour $C_{(R,\bar a,i)}$. But, in fact, $\bar b:=\bar b_i$ does not depend on $i$:
if there would be $i,i'\in[\ar(R)]$ such that $\bar b_i\neq \bar b_{i'}$, then $h'$ would not preserve the edge ``on the left'' between $(R,\bar a,i)$ and $(R,\bar a,i')$. 
It now suffices to show that $\bar b=h(\bar a)$, that is,  $b_i=h(a_i)$ for all $i\in[\ar(R)]$ (where we write $\bar b=b_1\cdots b_{\ar(R)}$). If this would fail for $i\in[\ar(R)]$, then $((R,\bar b,i),h(a_i))=(h'(R,\bar a,i),h'(a_i))\notin E^{\In(\relb)}=E^{\relb'}$ while $((R,\bar a,i),a_i)\in E^{\In(\rela)}$,  contradicting that $h'$ is a homomorphism.

Thus, $(\rela,\relb)\mapsto(\In(\rela)^*,\relb')$ defines a parameterized reduction 
from $\HOMP{\fancya}$ to $\HOMP{\In(\fancya)^*}$ where 
$\In(\fancya):=\{\In(\rela)\mid \rela\in\fancya\}$ is a class of graphs. We are left to show 
that this reduction is quantifier-free after a pre-computation. It is here where we are going to use the assumption that $\fancya$ has bounded arity. 

It suffices to show that some isomorphic copy of
$\relb'$ can be produced from $\langle\rela',\relb\rangle$ by a suitable interpretation $I$ where $\rela'$ is 
some auxiliary structure. It will be clear that $I$ and $\rela'$ are computable from $\numb{\rela}$. 
The dimension of $I$ is $r+1$ where $r\ge 1$ bounds the arity of $\fancya$. If $\rela$ has vocabulary $\sigma$, 
the auxiliary structure $\rela'$ is $(\{0\}\cup(\sigma\times[r]))^*$, i.e. take the $\emptyset$-structure with 
universe $\{0\}\cup(\sigma\times[r])$ and add colours for the elements. 
The interpretation produces the isomorphic copy of $\relb'$ where $(R,\bar b,i)\in L(\relb)$ is replaced by
$((R,i),\bar b\bar 0)$ where $\bar 0$ is a $(r-\ar(R))$-tuple of $0$s. Similarly, $b\in R(\relb)=B$ is replaced
 by $b\bar 0$ with an $r$-tuple of $0$s. 
Recall, $\relb'$ has vocabulary 
$$
\{C_{(R,\bar a,i)}\mid (R,\bar a,i)\in L(\rela)\}\cup\{C_a\mid a\in R(\rela)\}\cup\{E\}.
$$
Letting $\bar x=x_1\cdots x_r$ and $\bar x'=x'_1\cdots x'_r$ be $r$-tuples of variables,  the interpretation $I$  reads as
 follows. For $R\in\sigma,i\in[\ar(R)]$ set
\begin{eqnarray*}
\textit{left}_{R,i}(z\bar x)&:=&\textstyle  P_1z\wedge C_{(R,i)}z\wedge Rx_1\cdots x_{\ar(R)} 
\wedge\textstyle\bigwedge_{j\in[r]\setminus[\ar(R)]}(P_1x_j\wedge C_0x_j),\\
\textit{right}(z\bar x)&:=&\textstyle P_2z\wedge \bigwedge_{j\in[r]}(P_1x_j\wedge C_0x_j),
\end{eqnarray*}
and define
\begin{eqnarray*}
\varphi_U(z \bar x)&:=&\textstyle\bigvee_{R\in\sigma}\bigvee_{i\in [\ar(R)]} \textit{left}_{R,i}(z\bar x)\vee \textit{right}(z\bar x),\\
\varphi_{C_{(R,\bar a,i)}}(z\bar x)&:=&\textit{left}_{R,i}(z\bar x),\\
\varphi_{C_a}(z\bar x)&:=&\textit{right}(z\bar x),\\
\varphi_E(z\bar x,z'\bar x')&:=& \varphi'_{E}(z\bar x,z'\bar x')\vee \varphi'_{E}(z'\bar x',z\bar x),
\end{eqnarray*}
where $\varphi'_{E}(z\bar x,z'\bar x')$ is
\begin{eqnarray*}
\textstyle
\bigvee_{R\in\sigma}\bigvee_{i\in [\ar(R)]}&&\textstyle \Big(\big(\textit{left}_{R,i}(z\bar x)\wedge 
\bigvee_{i'\in[r]\setminus\{i\}}\textit{left}_{R,i'}(z'\bar x')\wedge\bigwedge_{j\in[r]}x_j=x'_j\big)\\
&&\textstyle\ \ \vee\ \big(\textit{left}_{R,i}(z\bar x)\wedge\textit{right}(z'\bar x') \wedge z'=x_i\big)\Big).
\end{eqnarray*}

{\em Case 2: } Suppose $\gr(\fancya)$ has bounded treewidth. Then $\gr(\fancya)\preceq\T$ and by the Hierarchy Theorem there exists a 
class of forests $\fH$ such that $\gr(\fancya)\equiv\fH$. 
Moreover, we can assume that $\fH$ is computably enumerable and closed under deleting proper (connected) components 
since so are the classes of graphs listed in this theorem.
Theorem~\ref{theo:dechom} implies that
 $\HOMP{\gr(\fancya)^*}\qfeq\HOMP{\fH^*}$, so it suffices to show $$\HOMP{\fancya^*}\qfred\HOMP{\fH^*}.$$ 
%
%

Arguing as in Proposition~\ref{prop:decomposition-vs-deconstruction}~(2), 
one sees that there is a $w$ such that for 
every $\rela\in\fancya$ there exists $\relh\in\fH$ such that $\gr(\rela)$ has an $\relh$-decomposition. 
Given an instance $(\relb,\numb{\rela^*})$ of $\HOMP{\fancya^*}$ with $\rela^*\in \fancya^*$ 
compute $\relh\in\fH$ and a $\relh$-decomposition $(B_h)_{h\in H}$ 
of $\gr(\rela)$ of width $w$. As in the proof of Theorem~\ref{theo:dechom} we can assume that each 
component of $\relh$ either has all its bags empty or all its bags nonempty. We can even assume that all bags are nonempty - otherwise
use the above closure assumption on $\fH$ and throw away all  components with empty bags.
Define $\relb'$ as in the proof of Theorem~\ref{theo:dechom}, now with $\rela^*$ replacing $\relg^*$ there. We show
$$
\rela^*\hto\relb\quad\Longleftrightarrow\quad\relh^*\hto\relb'.
$$
The direction from left to right is seen as in the proof of Theorem~\ref{theo:dechom}. Conversely, let $f'$ be a 
homomorphism from $\relh^*$ to $\relb'$ and let $\rela_0$ be a component  of $\rela$. Choose a 
 component, i.e. a tree, $\relh_0$ of $\relh$ such that $(B_{h}\cap A_0)_{h\in H_0}$ is a tree decomposition of $\gr(\rela_0)$.
For $h\in H_0$ define $f_0^h$ as in the proof of Theorem~\ref{theo:dechom}. Again we show that 
$f_0:=\bigcup_{h\in H_0}f_0^h$ is a homomorphism from $\rela_0$ to $\relb$. To see this, let $R$ be a relation symbol
 in the vocabulary of $\rela$ and $\bar a\in R^{\rela_0}$. Then the components of $\bar a$ form a clique in $\gr(\rela_0)$. 
It is well-known for tree decompositions, that cliques are contained in some bag (see e.g. \cite[Lemma~4]{Bodlaender98-arboretum}),
that is, there is $h\in H_0$ such that $B_h$ 
contains all components of $\bar a$. As $f_0^h$ is a partial 
homomorphism from $\rela_0$ to $\relb$ and contains $\bar a$ in its domain, we get $f_0^h(\bar a)=f_0(\bar a)\in R^{\relb}$ as desired. 

That $(\rela^*,\relb)\mapsto(\relh^*,\relb')$ is quantifier-free after a pre-computation is seen as in the proof of Theorem~\ref{theo:dechom}.
\end{proof}

The last two lemmas imply:

\begin{corollary}\label{cor:graph} If $\fancya$ has bounded arity, then 
$$
\HOMP{\fancya^*}\qfeq\HOMP{\gr(\fancya)^*}.
$$
\end{corollary}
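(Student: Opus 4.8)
The plan is simply to chain together the two lemmas immediately preceding the corollary, since $\qfeq$ is by definition the symmetric closure of $\qfred$. The first of these lemmas gives $\HOMP{\gr(\fancya)^*} \qfred \HOMP{\fancya^*}$, and Lemma~\ref{lem:classtograph} gives the reverse direction $\HOMP{\fancya^*} \qfred \HOMP{\gr(\fancya)^*}$, this latter being exactly where the bounded-arity hypothesis on $\fancya$ is consumed. Putting the two reductions together yields both $\HOMP{\fancya^*} \qfred \HOMP{\gr(\fancya)^*}$ and $\HOMP{\gr(\fancya)^*} \qfred \HOMP{\fancya^*}$, which is precisely $\HOMP{\fancya^*} \qfeq \HOMP{\gr(\fancya)^*}$; no appeal to transitivity (Lemma~\ref{lem:trans}) is even needed.

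Before concluding, I would record that both problems are well-formed objects in the sense of Definition~\ref{df:qfred}: the standing assumption of this subsection is that $\fancya$ is computably enumerable, and hence so are $\fancya^*$, $\gr(\fancya)$, and $\gr(\fancya)^*$ (enumerate $\fancya$ and apply the relevant operation), so the two reductions above are genuine quantifier-free-after-precomputation reductions and not instances of the convention made for non-decidable classes. I would also note for the reader that the bounded-arity assumption is essential only through Lemma~\ref{lem:classtograph}: it is what keeps the dimension of the interpreting family $d(k)$ bounded by an absolute constant (roughly $r+1$ for $r$ an arity bound), which is required for the output to be a legitimate qfap-reduction.

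There is no real obstacle at the level of the corollary itself — all the content has already been discharged in the two cited lemmas, and in particular in Lemma~\ref{lem:classtograph}, whose proof splits on whether $\fancya$ has bounded treewidth (using the incidence-graph construction $\In(\cdot)$ together with Theorem~\ref{theo:dechom} in the unbounded case, and a tree-decomposition argument mirroring the proof of Theorem~\ref{theo:dechom} in the bounded case). Accordingly, the proof of the corollary should be a two-line citation of those lemmas, and I would present it as such.
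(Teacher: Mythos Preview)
Your proposal is correct and matches the paper's own proof, which is literally the single sentence ``The last two lemmas imply:'' preceding the corollary. Your additional remarks about computable enumerability and where the bounded-arity hypothesis is consumed are accurate but go beyond what the paper records.
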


\begin{proof}[Proof of Theorem~\ref{theo:coregraphhom}] Note that with $\fancya$ also $\core(\fancya)$ is computably enumerable. Thus
\begin{eqnarray*}
\HOMP{\fancya}&\qfeq&\HOMP{\core(\fancya)^*}\qfeq \HOMP{\gr(\core(\fancya))^*}
\end{eqnarray*}
by Corollaries~\ref{cor:core} and \ref{cor:graph}.
Now apply Theorem~\ref{theo:dechom}.
\end{proof}

\section{Pebble games}\label{sect:pebble-games}

Following the motivation given in the introduction,
this section presents pebble
games which can and will be used to solve
the homomorphism problems associated with the  lower levels of the
hierarchy 
(Theorem~\ref{thm:graph-hierarchy}).
We consider pebble games (cf.~\cite[Section~3.3]{EbbinghausFlum95-finitemodeltheory}) played 
by two players, \emph{Spoiler} and \emph{Duplicator} on two similar
structures $\rela,\relb$. 
Let us call a tuple 
$v = (p_1, \ldots, p_r) \in \nats^{r}$ for $r\ge 1$ a \emph{game vector} 
with $r$ {\em rounds} and $\sum_{i\in[r]}p_i$ {\em pebbles}. 

Informally speaking, the game has $r$ rounds; in the $i$th round,
 Spoiler places $\le p_i$ many pebbles on 
elements of $A$ and Duplicator responds placing equally many pebbles on $B$; Duplicator wins if in the end the 
correspondence between pebbled elements is a partial homomorphism from $\rela$ to $\relb$.

Formally, we define a \emph{Duplicator winning strategy for the $v$-game on $(\rela, \relb)$} to be
a sequence $(W_1, \ldots, W_r)$ of sets of partial homomorphisms from $\rela$ to $\relb$ such that:
%
\begin{itemize}\itemsep=0pt

\item For all $S \subseteq A$ with $|S| \leq p_1$,
the set $W_1$ contains a mapping with domain $S$.

\item For all $i\in[r-1]$, $g \in W_i$ and supersets $S$ of $\dom(g)$
 with $|S \setminus \dom(g)| \leq p_{i+1}$, there is an extension $g'\in W_{i+1}$ of $g$
with domain $S$.

\end{itemize}
We write $\rela \vto \relb$ to indicate that such a 
strategy exists.

This section's first main theorem provides 
a decidable characterization of the homomorphism
problems arising from a single structure
that are solved by the $v$-game.
We say that {\em the $v$-game solves}
$\HOMP{\fancya}$ if for any instance $(\rela, \relb)$ thereof,
the existence of a Duplicator winning strategy for the $v$-game
(on the instance)
 implies that there is a homomorphism from $\rela$
to $\relb$.

Let $v = (p_1, \ldots, p_r)$ be a game vector.
A {\em $v$-decomposition} of a graph $\relg$ is 
an $\relh$-decomposition $(B_h)_{h \in H}$ such that:
\begin{itemize}\itemsep=0pt
\item $\relh$ is a rooted forest of height $< r$, 
\item $|B_h|  \leq p_1$ for all nodes $h$ of $\relh$ at level $1$,
\item $B_h \subseteq B_{h'}$ and $|B_{h'} \setminus B_h| \leq p_{i+1}$ for all 
nodes $h, h'$ of $\relh$ at levels $i$ and
  $i+1$, respectively, with $(h, h') \in E^{\relh}$.
\end{itemize}
The \emph{level} of a node $h$ in $\relh$ is the
number of vertices in the unique path from the root (of $h$'s component) to $h$.
So, roots are considered to be at level $1$.

\begin{theorem}\label{thm:char-v-game-solves}
Let $\rela$ be a structure, and $v$ be a game
vector.  The following are equivalent.
\begin{enumerate}\itemsep=0pt
\item The $v$-game solves $\HOMP{\{\rela\}}$.
\item The graph $\gr(\core(\rela))$ has a $v$-decomposition.
\end{enumerate}
\end{theorem}

This theorem immediately implies that, given a 
structure $\rela$ and a game vector $v$,
it is decidable whether or not 
the $v$-game solves $\HOMP{\{\rela\}}$.
Another immediate consequence of this theorem
is that for any class of structures $\rela$,
the $v$-game solves $\HOMP{\fancya}$
if and only if each graph of the form
$\gr(\core(\rela))$, with $\rela \in \fancya$,
has a $v$-decomposition.

We will provide the proof of this theorem after
presenting two further theorems built on it.
These two theorems 
analyze two natural measures associated with our pebble games:
number of pebbles and number of rounds.  
We show that these two measures correspond exactly ($\pm 1$)
to tree depth and stack depth; this is made precise as follows.

\begin{theorem}[Correspondence between number of pebbles and tree depth]\label{thm:pebbles-td}

Let $\fancya$ be a class of structures, let $\fG$ denote $\gr(\core(\fancya))$, and let $n \geq 1$.
The following are equivalent.
\begin{enumerate}\itemsep=0pt
\item There exists a game vector $v$ with $n$ pebbles such that the $v$-game solves $\HOMP{\fancya}$.
\item The $\underbrace{(1, \ldots, 1)}_{n\;\mathrm{times}}$-game solves $\HOMP{\fancya}$.
\item The class $\fG$ has tree depth $< n$.
\end{enumerate}
\end{theorem}

\begin{proof}
(1 $\Leftrightarrow$ 2): The implication from 2 to 1 is immediate.
To prove that 1 implies~2,
let $v = (p_1, \ldots, p_r)$ be a game vector with $\sum_{i\in[r]} p_i=n$; 
it suffices to show that if there is a 
Duplicator winning strategy $W_1, \ldots, W_n$ for the $(1, \ldots, 1)$-game,
then there is a 
Duplicator winning strategy for the $v$-game.
The sequence
$W_{p_1}, W_{p_1 + p_2}, \ldots, W_{p_1 + \cdots + p_r}$
is straightforwardly verified to be a Duplicator winning strategy
for the $v$-game.

(2 $\Leftrightarrow$ 3): 
By appeal to Theorem~\ref{thm:char-v-game-solves},
it suffices to show that each graph in $\fG$ has a 
$(1, \ldots, 1)$-decomposition if and only if condition 3 holds.
For the forward direction, let $(B_h)_{h \in H}$ be a
$(1, \ldots, 1)$-decomposition 
of a graph $\relg \in \fG$
with respect to the rooted forest~$\relh$.
We process $\relh$
by removing any vertex $h \in H$ with $B_h = \emptyset$ and
by contracting together adjacent vertices $h, h' \in H$ with
$B_h = B_{h'}$,
We obtain 
that each root of $\relh$ has $|B_h| = 1$
and that each node $h'$ having a parent $h$ satisfies
$B_{h'} \supseteq B_h$ and $|B_{h'}| = |B_h| + 1$.
Rename each root $h' \in H$ by the unique element in $B_{h'}$
and
rename each $h' \in H$ with a parent $h$ by the unique element that is
in $B_{h'} \setminus B_h$; then, it holds that the new $\relh$ 
has height $< n$, and 
witnesses that the tree depth of
$\relg$
is $< n$.
For the backward direction, let $\relt$ be a rooted tree of height 
$< n$
witnessing that a component $C$ of a graph $\relg \in \fG$
has tree depth $< n$.
Then $C$ has a $(1, \ldots, 1)$-decomposition
given by
$(B_t)_{t \in T}$ defined by 
$B_t = \{ a \mid  a \textup{ is an ancestor of $t$} \}$;
this is straightforwardly verified.
Combining the $(1, \ldots, 1)$-decompositions
of the components of $\relg$, we obtain the desired decomposition
of $\relg$.
\end{proof}

\begin{theorem}[Correspondence between number of rounds and stack depth]\label{thm:pebbles-sd}

Let $\fancya$ be a class of structures, let $\fG$ denote
$\gr(\core(\fancya))$, and let $r \geq 0$.
The following are equivalent.
\begin{enumerate}\itemsep=0pt
\item There exists a game vector $v$ with $r+1$ rounds
such that the $v$-game solves $\HOMP{\fancya}$.
\item $\fG \preceq \F_r$.
\item The class $\fG$ has bounded tree depth and 
has stack depth $\leq r$.
\end{enumerate}
\end{theorem}

\begin{proof}
(1 $\Leftrightarrow$ 2):
By appeal to Theorem~\ref{thm:char-v-game-solves},
it suffices to show that each graph in $\fG$ has a $v$-decomposition
if and only if $\fG \preceq \F_r$.
For the forward direction, 
let $v=(p_1, \ldots, p_{r+1})$.
By definition, a $v$-decom\-po\-si\-tion is
a $\relh$-decomposition for a rooted forest $\relh$ of height $\leq r$.
Also, it follows from the definition that each bag $B_h$ has 
size $|B_h| \leq p_1 + \cdots + p_r$.
Thus
$\fG$ has $\F_r$-decompositions of bounded width, and
by Proposition~\ref{prop:decomposition-vs-deconstruction},
we obtain $\fG \preceq \F_r$.

For the backward direction, 
by Proposition~\ref{prop:decomposition-vs-deconstruction},
there exists $w \geq 1$ such that
$\fG$ has $\F_r$-decompositions of width $< w$.
When $\relh \in \F_r$ and $(B_h)_{h \in H}$ is a $\relh$-decomposition
of width $< w$ for a graph $\relg$, define $(B'_h)_{h \in H}$
by $B'_h = \bigcup_{a} B_a$ where the union is over all ancestors $a$ of
$h$.
It is readily seen that $(B'_h)_{h \in H}$ is a
$\underbrace{(w, \ldots, w)}_{r+1}$-decomposition of $\relg$.

(2 $\Leftrightarrow$ 3): 
By Proposition~\ref{prop:bounded-characterizations},
whenever condition 2 holds,
$\fG$ has bounded tree depth.
Hence, the desired equivalence 
follows
 from Lemma~\ref{lemma:stack-depth-placement}.
\end{proof}

\subsection{Proof of Theorem~\ref{thm:char-v-game-solves}}

To prove Theorem~\ref{thm:char-v-game-solves} we need to develop 
the theory of the introduced pebble games.

\begin{prop}
\label{prop:transitivity-of-strategy}
The $\vto$ relation is transitive, that is,
when $\rela$, $\relb$, and $\relc$ are similar structures,
if $\rela \vto \relb$ and $\relb \vto \relc$, then $\rela \vto \relc$.
\end{prop}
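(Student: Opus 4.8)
The plan is to take Duplicator winning strategies for the $v$-game on $(\rela,\relb)$ and on $(\relb,\relc)$ and compose them pointwise to obtain one for $(\rela,\relc)$. Write $v = (p_1,\ldots,p_r)$, let $W_1,\ldots,W_r$ witness $\rela\vto\relb$, and let $W'_1,\ldots,W'_r$ witness $\relb\vto\relc$. The natural candidate is to set $W''_i$ to be the set of all compositions $g'\circ g$ where $g\in W_i$ is a partial homomorphism from $\rela$ to $\relb$ and $g'\in W'_i$ is a partial homomorphism from $\relb$ to $\relc$ with $\im(g)\subseteq\dom(g')$; in fact it is cleanest to insist $\dom(g')=\im(g)$ so that $\dom(g'\circ g)=\dom(g)$. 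Each such $g'\circ g$ is a partial homomorphism from $\rela$ to $\relc$ since a composition of (partial) homomorphisms along matching domains is again a (partial) homomorphism.

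First I would verify the initial condition: given $S\subseteq A$ with $|S|\le p_1$, the strategy $W_1$ provides some $g\in W_1$ with $\dom(g)=S$; then $\im(g)\subseteq B$ has $|\im(g)|\le|S|\le p_1$, so $W'_1$ provides some $g'\in W'_1$ with $\dom(g')=\im(g)$; hence $g'\circ g\in W''_1$ has domain $S$. Next I would verify the extension condition: suppose $1\le i<r$, that $g''=g'\circ g\in W''_i$ (so $g\in W_i$, $g'\in W'_i$, $\dom(g')=\im(g)$), and that $S\supseteq\dom(g'')=\dom(g)$ with $|S\setminus\dom(g)|\le p_{i+1}$. Apply the $W$-strategy to $g$ and $S$ to get an extension $h\in W_{i+1}$ of $g$ with $\dom(h)=S$. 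Now $\im(h)\supseteq\im(g)=\dom(g')$, and $|\im(h)\setminus\dom(g')|\le|S\setminus\dom(g)|\le p_{i+1}$ since $h$ extends $g$; so apply the $W'$-strategy to $g'$ and $\im(h)$ to get an extension $h'\in W'_{i+1}$ of $g'$ with $\dom(h')=\im(h)$. Then $h'\circ h\in W''_{i+1}$, it extends $g'\circ g = g''$ because $h$ extends $g$ and $h'$ extends $g'$, and $\dom(h'\circ h)=\dom(h)=S$, as required.

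The only mild subtlety — and the step I would be most careful about — is bookkeeping with the domains: the game definition counts $|S\setminus\dom(g)|$, the number of \emph{new} pebbles placed, rather than $|S|$, so I must check the pebble bounds are preserved under composition using that $h$ \emph{extends} $g$ (whence $\im(h)\setminus\im(g)$ has size at most $|\dom(h)\setminus\dom(g)|$), not merely that $|\dom(h)|\le p_{i+1}+|\dom(g)|$. Everything else is routine: that a composite of partial homomorphisms with matching intermediate domain is a partial homomorphism, and that extension is transitive. No further ideas are needed.
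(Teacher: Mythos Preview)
Your proposal is correct and takes exactly the same approach as the paper: define the composed strategy $U_i = \{ g'\circ g \mid g \in W_i,\ g' \in W'_i,\ \dom(g') = \im(g) \}$ and verify the two conditions. The paper merely asserts the verification is straightforward, whereas you spell it out in full, including the key observation that $|\im(h)\setminus\im(g)| \le |\dom(h)\setminus\dom(g)|$ when $h$ extends $g$.
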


\begin{proof}
Let $W_1, \ldots, W_r$ be a Duplicator winning strategy 
for the $v$-game on $(\rela, \relb)$,
and let $V_1, \ldots, V_r$ be a Duplicator winning strategy 
for the $v$-game on $(\relb, \relc)$.
For each $i \in [r]$, define $U_i$ to be the set
$\{ g\circ f  \mid f \in W_i, g \in V_i, \dom(g) = \im(f) \}$.
It is straightforward to verify that
$U_1, \ldots, U_r$ is a Duplicator winning strategy 
for the $v$-game on $(\rela, \relc)$.
\end{proof}

\begin{prop}
\label{prop:hom-gives-strategy}
If $\rela$ and $\relb$ are structures such that there exists a
homomorphism $h$ from $\rela$ to $\relb$,
then for any game vector~$v$, there is a Duplicator winning strategy
for the $v$-game on $(\rela, \relb)$.
\end{prop}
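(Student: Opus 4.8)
The plan is to take for the winning strategy nothing more than the homomorphism $h$ itself, cut down to all possible domains. Concretely, for each $i \in [r]$ I would define $W_i$ to be the set $\{ h\res S \mid S \subseteq A \}$ of all restrictions of $h$ to subsets of $A$, where $h\res\emptyset$ is understood to be the empty partial homomorphism $\emptyset$. (Since $A$ is finite this is a finite set, though finiteness plays no role.)

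First I would verify that every member of $W_i$ is a partial homomorphism from $\rela$ to $\relb$. The element $h\res\emptyset = \emptyset$ qualifies by definition, and for nonempty $S$, if $R$ is a relation symbol and $(a_1,\ldots,a_{\ar(R)}) \in R^{\langle S\rangle^\rela} = S^{\ar(R)} \cap R^\rela$, then in particular $(a_1,\ldots,a_{\ar(R)}) \in R^\rela$, so $(h(a_1),\ldots,h(a_{\ar(R)})) \in R^\relb$ because $h$ is a homomorphism; hence $h\res S$ is a homomorphism from $\langle S\rangle^\rela$ to $\relb$.

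Next I would check the two conditions defining a Duplicator winning strategy for the $v$-game. For the first condition, given $S \subseteq A$ with $|S| \le p_1$, the map $h\res S$ lies in $W_1$ and has domain $S$. For the second, fix $i$ with $1 \le i < r$ and $g \in W_i$; then $g = h\res T$ for some $T \subseteq A$, so $\dom(g) = T$, and if $S$ is a superset of $T$ with $|S \setminus T| \le p_{i+1}$, then $h\res S$ lies in $W_{i+1}$, extends $g$, and has domain $S$. Thus $W_1,\ldots,W_r$ is the desired strategy and $\rela \vto \relb$.

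I do not anticipate any real obstacle here; the only points requiring a moment's care are the degenerate situations — an empty domain, or a component $p_i$ of the game vector equal to $0$ — and these are absorbed by the convention that $\emptyset$ counts as a partial homomorphism. Note in particular that one does not even need the $W_i$ to depend on $i$: taking them all equal, as above, suffices.
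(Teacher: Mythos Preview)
Your proposal is correct and follows the same idea as the paper's proof: the strategy is built from restrictions of the given homomorphism $h$. The only cosmetic difference is that the paper takes $W_i$ to consist of the restrictions $h\res S$ with $|S|\le p_1+\cdots+p_i$, whereas you take all restrictions and let every $W_i$ be the same set; as you observe, the definition of Duplicator winning strategy imposes no upper bound on the domains of maps in $W_i$, so your simplification is harmless.
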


\begin{proof}
Such a strategy is given by taking
$W_i$ to be the set containing each restriction of $h$ to 
a subset $S \subseteq A$ with $|S| \leq p_1 + \cdots + p_i$.
\end{proof}

\begin{theorem}
\label{thm:strategy-to-hom}
Let $v = (p_1, \ldots, p_r)$
be a game vector, 
let $\relt, \relb$ be similar structures, and assume that
$\gr(\relt)$ has a $v$-decomposition.
If there exists a Duplicator winning strategy for the $v$-game on
$(\relt, \relb)$, then there is a homomorphism from $\relt$ to $\relb$.
\end{theorem}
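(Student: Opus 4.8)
The plan is to read a homomorphism off the Duplicator winning strategy, using the $v$-decomposition as a ``schedule''. So let $(B_h)_{h\in H}$ be the given $v$-decomposition of $\relt$, with $\relh$ a rooted forest of height $<r$, and let $W_1,\ldots,W_r$ be a Duplicator winning strategy for the $v$-game on $(\relt,\relb)$; write $\ell(h)$ for the level of a node $h$ of $\relh$, so $1\le\ell(h)\le r$. First I would construct, by induction on $\ell(h)$, partial homomorphisms $g_h$ from $\relt$ to $\relb$ with $g_h\in W_{\ell(h)}$ and $\dom(g_h)=B_h$, in such a way that $g_{h'}\supseteq g_h$ whenever $h'$ is a child of $h$. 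For a root $h$ we have $|B_h|\le p_1$, so the first clause in the definition of winning strategy yields some $g_h\in W_1$ with domain $B_h$. For the step, if $h'$ is a child of $h$ with $\ell(h)=i$ and $\ell(h')=i+1\le r$, then $B_h\subseteq B_{h'}$ and $|B_{h'}\setminus B_h|\le p_{i+1}$, so the second clause extends $g_h\in W_i$ to some $g_{h'}\in W_{i+1}$ with domain $B_{h'}$.

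Next I would glue the $g_h$ into a single map $f\colon T\to B$ by setting $f:=\bigcup_{h\in H}g_h$; this is total, since the coverage condition of the decomposition places every element of $T$ into some bag. To see that $f$ is well defined, fix $t\in T$ and let $S_t:=\{h\mid t\in B_h\}$, which by the connectivity condition is connected in $\relh$, and hence, being a connected subset of a rooted forest, lies inside a single component of $\relh$ and has a unique node $h_0$ of minimal level. For any $h\in S_t$, the path from $h_0$ to $h$ stays in $S_t$ and, by minimality of $\ell(h_0)$, runs downward through children only; since the extension relation is transitive along such a path, $g_h\supseteq g_{h_0}$, so $g_h(t)=g_{h_0}(t)$. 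Thus all the $g_h$ with $t\in\dom(g_h)$ agree on $t$, and $f$ is a well-defined function.

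Finally I would check that $f$ is a homomorphism from $\relt$ to $\relb$. Given $R$ and a tuple $\bar a\in R^{\relt}$, the set of its pairwise distinct components forms a clique of $\gr(\relt)$, and is therefore contained in a single bag $B_h=\dom(g_h)$ --- by the Helly property of the subtrees $\{h\mid t\in B_h\}$ within one component of $\relh$, which is the clique-in-a-bag fact for tree decompositions already invoked earlier in the paper (the case of a single distinct component being trivial). Then $\bar a\in R^{\langle\dom(g_h)\rangle^{\relt}}$, and since $g_h$ is a partial homomorphism, $g_h$ maps $\bar a$ into $R^{\relb}$; as $g_h$ and $f$ agree on $\dom(g_h)$, this shows $f$ preserves $R$, as required. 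The only genuinely delicate points are the gluing --- ensuring the locally chosen maps $g_h$ agree on overlaps, which is exactly what the connectivity condition together with the downward-extension property give us --- and the clique-in-a-bag fact for $\relh$-decompositions over forests; everything else is routine unwinding of the definitions of $v$-decomposition and of Duplicator winning strategy.
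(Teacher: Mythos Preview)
Your proof is correct and follows essentially the same approach as the paper: both inductively select partial homomorphisms $g_h\in W_{\ell(h)}$ along the rooted forest so that children extend parents, glue them via connectivity of $\{h\mid t\in B_h\}$, and verify the homomorphism property using the clique-in-a-bag fact for tree decompositions. Your write-up is in fact slightly more explicit about the gluing step (identifying the unique minimal-level node in $S_t$) than the paper's, but the argument is the same.
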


\begin{proof}
Let $\relh$ be a forest such that $(B_h)_{h \in H}$ gives a
$v$-decomposition of $\gr(\relt)$.
For each $h \in H$, we define a map $f_h$ 
from $B_h$ to $\relb$ that is a partial homomorphism from $\relt$ to
$\relb$,
in the following inductive manner.
Let $W_1, \ldots, W_r$ be a Duplicator winning strategy for the $v$-game.
For each root $h_0$ of $\relh$, define
$f_{h_0}$ to be a map in $W_1$ that is defined on $B_{h_0}$.
When node $h'$ is the child (in $\relh$) of a node $h$ at level $i$ 
having $f_h$ defined,
we define $f_{h'}$ to be a map in $W_{i+1}$ that is defined on
$B_{h'}$
and that extends $f_h$; such a map exists
by the definition of Duplicator winning strategy
and
by the definition of
$v$-decomposition.
For every element $t \in T$, by the definition of
$\relh$-decomposition,
there exists an $h \in H$ such that $t \in \dom(f_h)$.
If for an element $t \in T$ it holds that
$t \in \dom(f_h) \cap \dom(f_{h'})$
where $h$ is the parent of $h'$ in $\relh$,
by the way in which we defined the mappings $f_h$, $f_{h'}$,
it holds that $f_h(t) = f_{h'}(t)$.
It follows by the connectivity condition of a decomposition 
that for any $h, h' \in H$ such that  $t \in \dom(f_h) \cap
\dom(f_{h'})$,
one has $f_h(t) = f_{h'}(t)$.
Thus $f:=\bigcup_{h \in H} f_h$ is a map from $T$ to $B$, and in fact a homomorphism from $\relt$ to $\relb$:
for any tuple of a relation of $\relt$, its entries are contained in a bag
$B_h$,
and $f$ extends $f_h$,
which is a partial homomorphism from $\relt$ to $\relb$ defined on~$B_h$.
\end{proof}

For each structure $\rela$ and each game vector $v = (p_1, \ldots, p_r)$,
we define the structure $\relt_v(\rela)$ as follows.
Let us say that 
a tuple $(C_1, \ldots, C_m)$ is a \emph{set vector}
(of $v$ in $A$) 
if $1 \leq m \leq r$, 
$C_1 \subseteq \cdots \subseteq C_m\subseteq A$, $|C_1| \leq p_1$, and for each
$i \in [m-1]$, it holds that $|C_{i+1} \setminus C_i| \leq p_{i+1}$.
Let $S=S(v,A)$ be the set of all set vectors of $v$ in $A$.
For a set vector $s = (C_1, \ldots, C_m)$,
when $a \in C_m$, define $u(a,s)$ to be the
smallest prefix of $s$ with $a \in C_{|u(a,s)|}$
(equivalently, with $a \in C_{|u(a,s)|} \cap \cdots \cap C_m$),
and
define $B_s$ to be the set
$\{ (a, u(a,s)) \mid  a \in C_m \}$.
The universe of $\relt_v(\rela)$ is 
$\bigcup_{s \in S} B_s$, and every
symbol $R$ from the vocabulary of $\rela$ is interpreted by 

$$\bigcup_{s\in S}\Big\{ ((a_1, u_1), \ldots, (a_{\ar(R)}, u_{\ar(R)})) \in B_s^{\ar(R)}\mid 
(a_1, \ldots, a_{\ar(R)}) \in R^{\rela} \Big\}.$$


\begin{prop}
\label{prop:tva-has-v-decomp}
For every game vector $v$ and every structure~$\rela$, the graph  $\gr(\relt_v(\rela))$ has a $v$-decomposition.
\end{prop}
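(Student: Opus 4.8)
The plan is to exhibit the $v$-decomposition explicitly, using as index forest the forest of set vectors of $v$ and as bags the sets $B_s$ that already appear in the construction of $\relt_v(\rela)$.

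Concretely, I would let $\relh$ be the rooted forest whose vertex set consists of all set vectors of $v$ over the universe $A$ of $\rela$, whose roots are the set vectors of length $1$, and in which the parent of a set vector $s=(C_1,\ldots,C_m)$ with $m\ge 2$ is its truncation $(C_1,\ldots,C_{m-1})$. This is a rooted forest: for each length-$1$ prefix $(C_1)$, the set vectors extending it form, under one-step extension, a tree rooted at $(C_1)$. A set vector of length $\ell$ sits at level $\ell$, and since every set vector has length between $1$ and $r$, the height of $\relh$ is $r-1<r$, as a $v$-decomposition requires. Assign to a node $s$ the bag $B_s$. Note that every element of $T_v(\rela)$ has the form $(a,u)$ where $u$ is itself a set vector (being a prefix of whatever set vector witnesses $(a,u)\in T_v(\rela)$) in whose last component $a$ first appears; hence $(a,u)\in B_u$, so every element lies in some bag.

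Next I would check that $(B_s)_{s}$ is an $\relh$-decomposition of $\relt_v(\rela)$. For coverage: by the definition of the relations $R^{\relt_v(\rela)}$, the components of any tuple of a relation of $\relt_v(\rela)$ all lie in a common bag $B_s$ --- equivalently, their second coordinates are pairwise comparable in the prefix order, and one may take $s$ to be the longest of them and check directly that each component then belongs to $B_s$. For connectivity, the crucial identity is
\[
\{\, s \mid (a,u)\in B_s \,\} = \{\, s \mid u \text{ is a prefix of } s \,\},
\]
whose left-to-right inclusion holds because $u=u(a,s)$ is by definition a prefix of $s$, and whose right-to-left inclusion holds because $a$ first occurs in the last component of $u$, so that $u(a,s)=u$ as soon as $u$ is a prefix of $s$; the right-hand side is exactly the set of descendants of the node $u$ in $\relh$, which is connected.

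Finally I would verify the two remaining clauses of a $v$-decomposition. A level-$1$ node is a length-$1$ set vector $(C_1)$, and $B_{(C_1)}=\{(a,(C_1))\mid a\in C_1\}$ has size $|C_1|\le p_1$. If $s=(C_1,\ldots,C_i)$ at level $i$ is the parent of $s'=(C_1,\ldots,C_i,C_{i+1})$ at level $i+1$, then for $a\in C_i$ the first index at which $a$ occurs is $\le i$ in both $s$ and $s'$, so $u(a,s')=u(a,s)$ and thus $B_s\subseteq B_{s'}$; moreover the elements of $B_{s'}\setminus B_s$ are precisely the pairs $(a,s')$ with $a\in C_{i+1}\setminus C_i$, so $|B_{s'}\setminus B_s|=|C_{i+1}\setminus C_i|\le p_{i+1}$. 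I expect the coverage clause to be the only real obstacle: it is precisely what makes the $B_s$ the right choice of bags, and it rests on the fact that the second coordinates of the components of any relation tuple of $\relt_v(\rela)$ lie along a single branch of $\relh$; the other verifications are routine bookkeeping about the operator $u(\cdot,\cdot)$ and the prefix order.
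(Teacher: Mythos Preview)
Your approach is identical to the paper's: the rooted forest on the set $S$ of set vectors under one-step extension, with the bags $B_s$ already introduced in the construction of $\relt_v(\rela)$. The paper's proof is the same one-line assertion, without the verifications you supply; your checks of the height bound, of connectivity via the identity $\{s\mid (a,u)\in B_s\}=\{s\mid u\text{ is a prefix of }s\}$, and of the bag-size clauses are all correct.

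There is, however, a genuine gap in your coverage argument. You write that ``by the definition of the relations $R^{\relt_v(\rela)}$'' the components of any relation tuple lie in a common $B_s$, equivalently that their second coordinates are pairwise prefix-comparable. But the definition as stated in the paper imposes no such constraint: a tuple $((a_1,u_1),\ldots,(a_k,u_k))$ is placed in $R^{\relt_v(\rela)}$ as soon as each $(a_i,u_i)\in T_v(\rela)$ and $(a_1,\ldots,a_k)\in R^{\rela}$, with no relation among the $u_i$. For instance, with $v=(1,1)$ and a ternary $R^{\rela}=\{(a,b,c)\}$, the tuple $\bigl((a,(\{a\})),(b,(\{b\})),(c,(\{c\}))\bigr)$ lies in $R^{\relt_v(\rela)}$, yet its second coordinates are pairwise incomparable; no $B_s$ contains even two of these three elements, since that would force $|C_1(s)|\ge 2>p_1$. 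In fact this example yields a $3$-clique in $\gr(\relt_v(\rela))$, so $\relt_v(\rela)$ admits no $(1,1)$-decomposition at all, and the proposition fails under the literal definition. The intended reading is evidently that $R^{\relt_v(\rela)}$ collects only tuples whose entries lie in a common $B_s$; under that reading your argument (and the paper's) goes through. You should flag this discrepancy rather than invoke ``the definition'' for a property it does not provide.
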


\begin{proof}
We use the notation in the definition of $\relt_v(\rela)$ above. One can naturally define a forest with vertices $S$
(the set of all set vectors of $v$ in $A$) by making two vectors $s, s' \in S$ adjacent if and only if $s'$
extends $s$ by one entry (or vice-versa). Taking as roots 
the length 1 set vectors, this gives a rooted forest $\mathcal S$.
We claim that with respect to this rooted forest, $(B_s)_{s \in S}$ is a $v$-decomposition of $\gr(\relt_v(\rela))$.

Let $((a,u),(a',u'))\in\refl(E^{\gr(\relt_v(\rela))})$. If $(a,u)=(a',u')$, then $\{(a,u),(a',u')\}\in B_{u}$. 
If $((a,u),(a',u'))\in E^{\gr(\relt_v(\rela))}$, then there are 
$s\in S$, $(a_1,\ldots, a_{\ar(R)})\in R^{\rela}$ and $i,j\in[\ar(R)]$ such that $(a,u)=(a_i,u(a_i,s))$ and
$(a',u')=(a_j,u(a_j,s))$. Assume $|u(a_i,s)|\le |u(a_j,s)|$ (the case $|u(a_j,s)|\le |u(a_i,s)|$ is symmetric). Then
$a\in C_{|u|}\subseteq C_{|u'|}\ni a'$, so $\{a,a'\}\in C_{|u'|}$ and hence $(a,u),(a',u')\in B_{u'}$. 
To see connectivity of the decomposition, note that every $(a,u)$ appears precisely in those 
$B_s$ such that $u$ is a prefix of $s$, and the set of these $s$ is connected in $\mathcal S$.
Thus $(B_s)_{s \in S}$ is an $\mathcal S$-decomposition of $\gr(\relt_v(\rela))$. 

Further, every root of $\mathcal S$  is a length 1 set vector $s=(C)$, so $|C|=|B_s|\le p_1$. 
If $s$ ands $s'$ are adjacent of levels $i$ and $i+1$ respectively, then $s$ has length $i$ and $s'$ 
extends $s=(C_1,\ldots,C_i)$ by a set $C$. Then
$|B_{s'}\setminus B_{s}|= |C\setminus C_i|\le p_{i+1}$.  Thus $(B_s)_{s \in S}$ is a $v$-decomposition 
of $\relt_v(\rela)$.
%
%
\end{proof}

\begin{prop}
\label{prop:hom-from-tva-to-a}
For each game vector $v$ and each structure $\rela$, there is a homomorphism from $\relt_v(\rela)$ to $\rela$.
\end{prop}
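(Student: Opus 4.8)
The plan is to exhibit the first-coordinate projection as the required homomorphism, so essentially no work is needed beyond unwinding the definition of $\relt_v(\rela)$. The first step is to observe that every element of the universe $T_v(\rela) = \bigcup_{s \in S} B_s$ is a pair $(a,u)$ with $a \in A$: each bag $B_s$ is by definition a set of pairs $(a, u(a,s))$ where $a$ ranges over a subset of $A$. Consequently the map $\pi \colon T_v(\rela) \to A$ given by $\pi((a,u)) := a$ is well-defined and has codomain $A$.

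The second (and final) step is to verify that $\pi$ preserves every relation symbol. I would fix a symbol $R$ from the vocabulary of $\rela$ together with a tuple $\bar{t} = ((a_1,u_1),\ldots,(a_{\ar(R)},u_{\ar(R)})) \in R^{\relt_v(\rela)}$. By the definition of $R^{\relt_v(\rela)}$, membership of $\bar{t}$ in $R^{\relt_v(\rela)}$ holds exactly when $(a_1,\ldots,a_{\ar(R)}) \in R^{\rela}$; since $(a_1,\ldots,a_{\ar(R)})$ is precisely the image of $\bar{t}$ under the coordinatewise action of $\pi$, it follows that $\pi$ maps $\bar{t}$ into $R^{\rela}$. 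Hence $\pi$ is a homomorphism from $\relt_v(\rela)$ to $\rela$, as required.

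I do not anticipate any obstacle here: the interpretation of relation symbols in $\relt_v(\rela)$ was set up precisely so that discarding the prefix annotations yields a homomorphism back onto $\rela$, in analogy with the standard fact that the unfolding of a structure along a tree decomposition always maps homomorphically onto the structure itself. The only point requiring a modicum of care is the bookkeeping in the first step, namely checking that nothing in $T_v(\rela)$ fails to be of the form $(a,u)$ with $a\in A$, which is immediate from the definition of the bags $B_s$.
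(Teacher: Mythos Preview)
Your proposal is correct and matches the paper's own proof exactly: the paper also takes the projection onto the first coordinate $(a,u)\mapsto a$ as the required homomorphism. You simply give more detail than the paper does, which states the map in a single sentence without verification.
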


\begin{proof}
The homomorphism is the projection onto the first coordinate,
that is, the mapping that sends an element $(a, u)$ of the universe of $\relt_v(\rela)$
to $a$.
\end{proof}

\begin{theorem}
\label{thm:char-strategy}
Let $\rela, \relb$ be similar structures with vocabulary $\sigma$, and let 
$v = (p_1, \ldots, p_r)$ be a game vector.
The following are equivalent.
\begin{enumerate}\itemsep=0pt

\item There exists a Duplicator winning strategy for the $v$-game 
on $(\rela, \relb)$.

\item 
For every $\sigma$-structure $\relt$ such that $\gr(\relt)$ has a $v$-decomposition:
if there is a homomorphism from $\relt$ to $\rela$,
then there is a homomorphism from $\relt$ to $\relb$.

\item There exists a homomorphism from $\relt_v(\rela)$ to $\relb$.

\end{enumerate}
\end{theorem}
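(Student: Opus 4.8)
The plan is to prove the cycle of implications $(1)\Rightarrow(2)\Rightarrow(3)\Rightarrow(1)$, exploiting the two basic facts about the derived structure $\relt_v(\rela)$ (Propositions~\ref{prop:tva-has-v-decomp} and~\ref{prop:hom-from-tva-to-a}) together with the two structural results relating winning strategies and homomorphisms (Propositions~\ref{prop:hom-gives-strategy} and~\ref{prop:transitivity-of-strategy}, and Theorem~\ref{thm:strategy-to-hom}).

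For $(1)\Rightarrow(2)$: assume a Duplicator winning strategy for the $v$-game on $(\rela,\relb)$, and let $\relt$ be a $\sigma$-structure with a $v$-decomposition and with a homomorphism $\relt\to\rela$. By Proposition~\ref{prop:hom-gives-strategy} this homomorphism yields a Duplicator winning strategy for the $v$-game on $(\relt,\rela)$; composing with the assumed strategy via Proposition~\ref{prop:transitivity-of-strategy} gives a winning strategy for the $v$-game on $(\relt,\relb)$. Since $\relt$ has a $v$-decomposition, Theorem~\ref{thm:strategy-to-hom} converts this strategy into a homomorphism $\relt\to\relb$, as required.

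For $(2)\Rightarrow(3)$: instantiate (2) with $\relt:=\relt_v(\rela)$. This structure has a $v$-decomposition by Proposition~\ref{prop:tva-has-v-decomp}, and it admits a homomorphism to $\rela$ by Proposition~\ref{prop:hom-from-tva-to-a}; hence (2) furnishes a homomorphism from $\relt_v(\rela)$ to $\relb$, which is exactly (3).

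For $(3)\Rightarrow(1)$: given a homomorphism $\relt_v(\rela)\to\relb$, I would first observe that there is, in fact, a Duplicator winning strategy for the $v$-game on $(\rela,\relt_v(\rela))$ — this is the one step not completely immediate from the already-proved propositions, and I expect it to be the main point of the argument. The idea is that the round-$i$ set $W_i$ should consist of the maps $a\mapsto (a,u(a,s))$ for $a$ ranging over $C_i$, as $s=(C_1,\ldots,C_m)$ ranges over the set vectors with $m\ge i$; one checks directly from the definition of $R^{\relt_v(\rela)}$ that each such map is a partial homomorphism from $\rela$ to $\relt_v(\rela)$, that every $S\subseteq A$ with $|S|\le p_1$ is the domain of some member of $W_1$ (take $C_1=S$), and that the extension property holds because enlarging $S$ by at most $p_{i+1}$ elements corresponds to extending the set vector by one more entry $C_{i+1}$. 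Once this strategy for $(\rela,\relt_v(\rela))$ is in hand, Proposition~\ref{prop:hom-gives-strategy} turns the homomorphism $\relt_v(\rela)\to\relb$ into a winning strategy for the $v$-game on $(\relt_v(\rela),\relb)$, and one final application of transitivity (Proposition~\ref{prop:transitivity-of-strategy}) yields a winning strategy for the $v$-game on $(\rela,\relb)$, completing the cycle. The only care needed is bookkeeping with the prefixes $u(a,s)$ to confirm compatibility of the maps $a\mapsto(a,u(a,s))$ across different set vectors sharing a common prefix, which is exactly what the "smallest prefix" definition of $u(a,s)$ is designed to make work.
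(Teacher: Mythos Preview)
Your proposal is correct and follows essentially the same approach as the paper: the implications $(1)\Rightarrow(2)$ and $(2)\Rightarrow(3)$ are identical to the paper's, and for $(3)\Rightarrow(1)$ the paper directly builds the strategy on $(\rela,\relb)$ by setting $W_i=\{a\mapsto h(a,u(a,s)):s\text{ a set vector of length }i\}$, which is exactly your strategy on $(\rela,\relt_v(\rela))$ composed with $h$ by hand rather than via Proposition~\ref{prop:transitivity-of-strategy}. Your factoring through the intermediate fact $\rela\vto\relt_v(\rela)$ is slightly more modular but amounts to the same computation; one small cleanup is that you may as well take $m=i$ rather than $m\ge i$ in your description of $W_i$, since for $a\in C_i$ the prefix $u(a,s)$ depends only on $(C_1,\ldots,C_i)$.
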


\begin{proof}
(1 $\Rightarrow$ 2): Suppose that  $\rela \vto \relb$ and $\relt \hto \rela$ and that $\gr(\relt)$ has a $v$-decomposition.
We have to show $\relt\hto\relb$.
By Proposition~\ref{prop:hom-gives-strategy},
we have $\relt \vto \rela$.
By Proposition~\ref{prop:transitivity-of-strategy},
we have $\relt \vto \relb$.
We obtain $\relt \hto \relb$ by Theorem~\ref{thm:strategy-to-hom}.

(2 $\Rightarrow$ 3):
This is immediate from Propositions~\ref{prop:tva-has-v-decomp}
and~\ref{prop:hom-from-tva-to-a}.

(3 $\Rightarrow$ 1): 
Let $h$ be a homomorphism from $\relt_v(\rela)$ to $\relb$.
For each $i \in [r]$, define
$W_i = \{ h^+_s  \mid s \textup{ is a set vector of length $i$}
\}$
where $h^+_s$ is the map defined on $\pi_1(B_s)$
that takes $a \in \pi_1(B_s)$ to $h(a, u(a,s))$.
By the definition of $\relt_v(\rela)$, each $W_i$ contains only
partial homomorphisms.

We verify that the $W_i$ give a Duplicator winning strategy, as
follows.
First, when $C \subseteq A$ and $|C| \leq p_1$,
we have that $W_1$ contains a map defined on $C$, via the set vector
$s = (C)$.
Next, suppose that $i \in [r-1]$, that $g \in W_i$, and that 
$C$ is a superset of $\dom(g)$ with $|C \setminus \dom(g)| \leq
p_{i+1}$.
By the definition of $W_i$, there exists a set vector $s$ of length $i$
such that $g = h^+_s$.
Let $s'$ be equal to the length $(i+1)$ set vector that extends $s$
with $C$.  
The mapping $h^+_{s'}$ has domain $C$,
and extends $h^+_s = g$
because for each $a \in \dom(g)$, it holds that
$u(a,s) = u(a,s')$.
\end{proof}

Let $\rela$ be a structure, and let $v$ be a game vector. Let $\HOM{\rela}$ denote the classical problem
$\{\relb\mid \rela\hto\relb\}$. We say that {\em the $v$-game solves}
$\HOM{\rela}$ if for every structure $\relb$ similar to $\rela$,
the existence of a Duplicator winning strategy for the $v$-game
on $(\rela, \relb)$ implies that there is a homomorphism from $\rela$
to $\relb$.
For a class of structures $\fancya$,
we say that {\em the $v$-game solves}
$\HOMP{\fancya}$ if, for each $\rela \in \fancya$, the $v$-game solves $\HOM{\rela}$.

\begin{theorem}[Generalization of Theorem~\ref{thm:char-v-game-solves}]
Let $\rela$ be a structure, and let $v = (p_1, \ldots, p_r)$ be a game
vector.  The following are equivalent.
\begin{enumerate}\itemsep=0pt

\item The $v$-game solves $\HOM{\rela}$.

\item There exists a homomorphism from $\rela$ to $\relt_v(\rela)$.

\item The graph $\gr(\core(\rela))$ has a $v$-decomposition.

\end{enumerate}
\end{theorem}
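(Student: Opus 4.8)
The plan is to prove the cyclic chain $(1)\Rightarrow(2)\Rightarrow(3)\Rightarrow(1)$, leaning on Theorem~\ref{thm:char-strategy} and on the two basic facts about the derived structure $\relt_v(\rela)$: Proposition~\ref{prop:tva-has-v-decomp} (it has a $v$-decomposition) and Proposition~\ref{prop:hom-from-tva-to-a} (there is a homomorphism $\relt_v(\rela)\hto\rela$). For $(1)\Rightarrow(2)$ I would instantiate Theorem~\ref{thm:char-strategy} with $\relb:=\relt_v(\rela)$: its condition~(3), namely the existence of a homomorphism from $\relt_v(\rela)$ to $\relt_v(\rela)$, holds via the identity map, so its condition~(1) holds, i.e. there is a Duplicator winning strategy for the $v$-game on $(\rela,\relt_v(\rela))$. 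Since the $v$-game solves $\HOM{\rela}$, this yields $\rela\hto\relt_v(\rela)$, which is~(2).

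For $(2)\Rightarrow(3)$: combining the hypothesis $\rela\hto\relt_v(\rela)$ with $\relt_v(\rela)\hto\rela$ from Proposition~\ref{prop:hom-from-tva-to-a} shows that $\rela$ and $\relt_v(\rela)$ are homomorphically equivalent, so their cores are isomorphic, $\core(\rela)\cong\core(\relt_v(\rela))$; it thus suffices to exhibit a $v$-decomposition of $\core(\relt_v(\rela))$. The key step is a small lemma, which I would state and prove separately: if a structure $\reld$ has a $v$-decomposition $(B_h)_{h\in H}$ and $\relc$ is its core, which is an induced substructure on some $C\subseteq D$, then $(B_h\cap C)_{h\in H}$ is a $v$-decomposition of $\relc$. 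Indexed by the same rooted forest, this family inherits the height bound and the level-wise size and containment constraints (passing to $B_h\cap C$ can only shrink bags and their successive differences), connectivity restricts to $C$, and coverage survives because every relation tuple of $\relc$, being induced, is also a relation tuple of $\reld$ and hence has its entries inside some bag. Applying this with $\reld=\relt_v(\rela)$, which has a $v$-decomposition by Proposition~\ref{prop:tva-has-v-decomp}, and transporting the resulting decomposition across the isomorphism $\core(\relt_v(\rela))\cong\core(\rela)$, gives~(3).

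For $(3)\Rightarrow(1)$: put $\relc:=\core(\rela)$, which by hypothesis has a $v$-decomposition, and let $\relb$ be any structure similar to $\rela$ with $\rela\vto\relb$. The inclusion of $\relc$ into $\rela$ is a homomorphism, so $\relc\vto\rela$ by Proposition~\ref{prop:hom-gives-strategy}, hence $\relc\vto\relb$ by transitivity of $\vto$ (Proposition~\ref{prop:transitivity-of-strategy}); since $\relc$ has a $v$-decomposition, Theorem~\ref{thm:strategy-to-hom} yields $\relc\hto\relb$, and composing with the core homomorphism $\rela\hto\relc$ gives $\rela\hto\relb$. As $\relb$ was arbitrary, the $v$-game solves $\HOM{\rela}$, which is~(1). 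The only point that is not a direct citation of an earlier result is the restriction lemma used inside $(2)\Rightarrow(3)$, and I expect this to be the main thing to write out carefully; the subtle part is that coverage is preserved under restriction, which relies on the core being an \emph{induced} substructure, so that the tuples of its relations — in particular the edges of its Gaifman graph — are among those of the ambient structure.
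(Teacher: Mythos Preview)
Your proposal is correct and follows essentially the same route as the paper: the implications $(1)\Rightarrow(2)$ and $(3)\Rightarrow(1)$ are verbatim the paper's arguments, and for $(2)\Rightarrow(3)$ both you and the paper transport the $v$-decomposition of $\relt_v(\rela)$ (Proposition~\ref{prop:tva-has-v-decomp}) to $\core(\rela)$ along a suitable map. The only cosmetic difference is that the paper works directly with the injective homomorphism $h:\core(\rela)\to\relt_v(\rela)$ (injectivity following since $h'\circ h:\core(\rela)\to\core(\rela)$ must be injective) and pulls back via $B'_s=\{a\in\core(A)\mid h(a)\in B_s\}$, whereas you pass through the isomorphism $\core(\rela)\cong\core(\relt_v(\rela))$ and restrict; your restriction lemma is correct, and in fact ``induced'' is stronger than needed---the paper's substructure condition $R^{\relc}\subseteq R^{\reld}$ already suffices for coverage.
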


\begin{proof}
(1 $\Rightarrow$ 2): 
By the equivalence of 1 and 3 in Theorem~\ref{thm:char-strategy},
and the trivial fact that $\relt_v(\rela)\hto \relt_v(\rela)$ we have $\rela\vto\relt_v(\rela)$. Then $\rela\hto\relt_v(\rela)$ by the 
assumption that the $v$-game solves $\HOMP{\rela}$.

(2 $\Rightarrow$ 3): If $\rela\hto \relt_v(\rela)$ then there is a homomorphism $h$ from $\core(\rela)$ to $\relt_v(\rela)$. By Proposition~\ref{prop:hom-from-tva-to-a}
$\relt_v(\rela)\hto\rela$, and clearly $\rela\hto\core(\rela)$, so there is a 
a homomorphism $h'$ from $\relt_v(\rela)$ to $\core(\rela)$. It follows that $h'\circ h$ is a homomorphism from $\core(\rela)$ to itself and thus has to be injective. 
Hence $h$ is injective.
By Proposition~\ref{prop:tva-has-v-decomp}, we know that $\relt_v(\rela)$ has a $v$-decomposition $(B_s)_{s \in   S}$.
It is straightforward to verify that $(B'_s)_{s \in S}$ defined by $B'_s = \{ a \in \core(A) \mid  h(a) \in B_s \}$ is 
a $v$-decomposition of $\gr(\core(\rela))$;
here, $\core(A)$ denotes the universe of $\core(\rela)$.

(3 $\Rightarrow$ 1): Suppose $\gr(\core(\rela))$ has a $v$-decomposition and $\rela \vto \relb$; we want to show
that
$\rela \hto \relb$.
Since $\core(\rela) \hto \rela$, we have $\core(\rela) \vto \rela$
(by Proposition~\ref{prop:hom-gives-strategy}).
By the transitivity of $\vto$
(Proposition~\ref{prop:transitivity-of-strategy}),
 we have
$\core(\rela) \vto \relb$, and by 
Theorem~\ref{thm:strategy-to-hom},
we obtain $\core(\rela) \hto \relb$.
As $\rela \hto \core(\rela)$, we conclude that
$\rela \hto \relb$.
\end{proof}

\section{The homomorphism problems in L} \label{sect:logspace}

For a class of structures $\fancya$ consider the classical problem
$$
\HOM{\fancya}:=\big\{(\relb,\rela)\mid \rela\in\fancya\ \&\ \rela\hto\relb\big\}.
$$
In this section we prove the following
characterization of the $\HOM{\fancya}$ problems
solvable in classical logarithmic space.

\begin{theorem}\label{thm:logchar}
Assume that $\textup{PATH}\neq\textup{para-L}$ and 
let $\fancya$ be a class of structures with bounded arity. 
The following are equivalent. 
\begin{enumerate}\itemsep=0pt
\item $\fancya\in\textup{L}$ and $\gr(\core(\fancya))$ has bounded tree depth.
 \item $\HOM{\fancya}\in\textup{L}$.
\end{enumerate}
\end{theorem}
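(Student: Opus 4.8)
The plan is to treat the two implications separately; the hypothesis $\textup{PATH}\neq\textup{para-L}$ enters only in $(2)\Rightarrow(1)$, whereas $(1)\Rightarrow(2)$ is unconditional given that $\fancya\in\textup{L}$ has bounded arity.

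For $(1)\Rightarrow(2)$: assume $\gr(\core(\fancya))$ has tree depth $\le d$ and put $n:=d+1$ and $v:=(\underbrace{1,\ldots,1}_{n})$. By Theorem~\ref{thm:pebbles-td} the $v$-game solves $\HOMP{\fancya}$, so for every $\rela\in\fancya$ and every similar $\relb$ we have $\rela\hto\relb$ if and only if there is a Duplicator winning strategy for the $v$-game on $(\rela,\relb)$. It therefore suffices to observe two things. First, $\rela\in\fancya$ is decidable in logarithmic space by hypothesis. Second, for each \emph{fixed} game vector $v$, deciding whether there is a Duplicator winning strategy for the $v$-game on a given pair of similar structures is in $\textup{L}$: the $v$-game is a finite game whose game tree has depth equal to the (constant) number of rounds of $v$, and a straightforward recursive evaluation of this tree needs to maintain, along its constant-depth recursion, only the current partial map (whose domain has a bounded number of elements, each stored with $O(\log(|\rela|+|\relb|))$ bits) together with $O(1)$ pointers for iterating over the players' moves; checking that such a bounded-size map is a partial homomorphism is in logarithmic space because $\fancya$ (hence the vocabulary in question) has bounded arity. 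Hence the composite test runs in space $O(\log(|\rela|+|\relb|))$, so $\HOM{\fancya}\in\textup{L}$. (Equivalently, one may invoke Theorem~\ref{thm:char-strategy} and Proposition~\ref{prop:tva-has-v-decomp} and instead test $\relt_v(\rela)\hto\relb$ by recursing over the bounded-height decomposition forest of $\relt_v(\rela)$.)

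For $(2)\Rightarrow(1)$ I argue the contrapositive. Assume $\gr(\core(\fancya))$ does not have bounded tree depth; by Lemma~\ref{lemma:placing-G}(3) this yields $\P\preceq\gr(\core(\fancya))$. Since $\fancya\in\textup{L}$ is in particular computably enumerable, so are $\core(\fancya)$ and $\gr(\core(\fancya))$, so Theorem~\ref{theo:dechom} gives $\HOMP{\P^*}\qfred\HOMP{\gr(\core(\fancya))^*}$; combining with Theorem~\ref{theo:coregraphhom} — which applies because $\fancya$ has bounded arity and is c.e. — and the transitivity of $\qfred$ (Lemma~\ref{lem:trans}) we get $\HOMP{\P^*}\qfred\HOMP{\fancya}$, and hence $\HOMP{\P^*}\le_\pl\HOMP{\fancya}$ by Lemma~\ref{lem:plred} (using that $\HOMP{\fancya}$ has bounded arity). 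Now suppose, for contradiction, that $\HOM{\fancya}\in\textup{L}$. Then $\HOMP{\fancya}\in\textup{para-L}$: running the assumed logarithmic-space procedure for $\HOM{\fancya}$ after reconstructing $\rela$ from $\numb{\rela}$ decides $(\relb,\numb{\rela})\in\HOMP{\fancya}$ in space $O(f(\numb{\rela})+\log|\relb|)$ for a suitable computable $f$. Since para-L is closed under $\le_\pl$-reductions (cf.~\cite{FlumGrohe03-describing}), it follows that $\HOMP{\P^*}\in\textup{para-L}$. But $\HOMP{\P^*}$ is complete for PATH under $\le_\pl$-reductions, and $\textup{para-L}\subseteq\textup{PATH}$ (see~\cite{ChenMueller13-fineclass-arxiv}); together these give $\textup{PATH}=\textup{para-L}$, contradicting our hypothesis. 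Hence $\HOM{\fancya}\notin\textup{L}$.

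The main obstacle is the $(2)\Rightarrow(1)$ direction. Within it, the delicate points are (i) bridging the classical class L and the parameterized class para-L cleanly enough that the PATH-completeness of $\HOMP{\P^*}$ established in~\cite{ChenMueller13-fineclass-arxiv} can be applied, and (ii) checking that the computable-enumerability and bounded-arity hypotheses really do license the reduction chain $\HOMP{\P^*}\qfred\HOMP{\gr(\core(\fancya))^*}\qfeq\HOMP{\fancya}$. For $(1)\Rightarrow(2)$ the only subtlety is ensuring the space bound is genuinely $O(\log n)$ rather than merely para-L; this holds because, once $\fancya$ is fixed, the number of pebbles, the number of rounds, and the arity are all absolute constants.
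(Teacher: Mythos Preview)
Your proof is correct and takes essentially the same approach as the paper. The only cosmetic difference is that for $(1)\Rightarrow(2)$ the paper packages the logarithmic-space upper bound for the fixed $v$-game into a separate lemma (Lemma~\ref{lem:vgame-in-L}, proved by writing out and evaluating a bounded-alternation quantified Boolean expression), whereas you argue directly via a bounded-depth recursive evaluation of the game tree; both arguments are standard and interchangeable, and your reduction chain for $(2)\Rightarrow(1)$ matches the paper's line by line.
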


This
characterization is conditional on the hypothesis $\textup{PATH}\neq\textup{para-L}$, an hypothesis from parameterized complexity theory.
The class PATH has been studied in previous
works~\cite{EST12-spacecomplexity,ChenMueller13-fineclass-arxiv} and is defined as follows.

\begin{definition}
 A parameterized problem $Q\subseteq\{0,1\}^*\times\nats$ is in \textup{PATH} if and only if there are a computable $f:\nats\to\nats$ and 
a nondeterministic Turing machine $\mathbb A$ that accepts $Q$ and for all inputs $(x,k)$ and all runs on it uses space $O(f(k)+\log |x|)$ 
and performs at most $O(f(k)\cdot\log|x|)$  nondeterministic steps.
\end{definition}

One can argue that PATH is a natural and important parameterized complexity class, e.g.\  some fundamental  problems which turn out to be complete for PATH under $\le_\pl$.
The above characterization further underlines its importance. 
We refer to \cite{EST12-spacecomplexity,ChenMueller13-fineclass-arxiv,yijiachenmuellermfcs} for more information. Here, 
let us mention the following result.
\begin{proposition}[\cite{ChenMueller13-fineclass-arxiv}]\label{prop:pathcomplete}
 $\HOMP{\P^*}$ is complete for \textup{PATH} under $\le_\pl$.
\end{proposition}
Let us also mention that, as made precise in~\cite{yijiachenmuellermfcs},
the collapse of PATH to para-L would imply that 
Savitch's theorem can be improved.

\begin{lemma}
\label{lem:vgame-in-L}
For every game vector $v$, 
there exists a logarithmic space algorithm that,
given a pair $(\rela, \relb)$ of similar structures,
decides whether $\rela\vto\relb$.
\end{lemma}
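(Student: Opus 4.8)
The plan is to reformulate ``there exists a Duplicator winning strategy for the $v$-game on $(\rela,\relb)$'' as a statement with only a \emph{bounded} number of quantifier alternations, and then to evaluate that statement by a naive recursive search. Write $v=(p_1,\ldots,p_r)$; since $v$ is fixed, the round count $r$ and the total pebble count $n=p_1+\cdots+p_r$ are constants, independent of the input $(\rela,\relb)$. I would deliberately \emph{not} route through the characterization of Theorem~\ref{thm:char-strategy} in terms of a homomorphism from $\relt_v(\rela)$ to $\relb$: although $\relt_v(\rela)$ has size polynomial in $|A|$, deciding homomorphism existence between polynomial-size structures is not a logarithmic-space task. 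The point to exploit instead is that the extensive-form game underlying the $v$-game has exactly $r$ rounds, i.e.\ a constant number of alternations, and that in each round only a constant number of new pebbles is placed.

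Concretely, I would first establish that a Duplicator winning strategy for the $v$-game on $(\rela,\relb)$ exists if and only if
\[
\forall S_1\,\exists g_1\,\forall S_2\,\exists g_2\,\cdots\,\forall S_r\,\exists g_r\colon\ g_1,\ldots,g_r\ \text{are partial homomorphisms from}\ \rela\ \text{to}\ \relb,
\]
where, setting $S_0=\emptyset$ and $g_0=\emptyset$, the variable $S_i$ ranges over the subsets of $A$ with $S_{i-1}\subseteq S_i$ and $|S_i\setminus S_{i-1}|\le p_i$, and $g_i$ ranges over the maps with domain $S_i$ extending $g_{i-1}$. The forward direction is immediate: from a family $(W_i)_{i\in[r]}$ witnessing a Duplicator winning strategy one reads off, along any play $S_1,S_2,\ldots$, the required maps $g_i\in W_i$ using the two closure conditions on $(W_i)$. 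For the converse, assuming the displayed statement, one lets $W_i$ consist of the partial homomorphisms $g$ with $|\dom(g)|\le p_1+\cdots+p_i$ from which Duplicator can survive the remaining $r-i$ rounds (a ``survivability'' predicate defined by downward induction on $i$, so that surviving $0$ further rounds just means $g$ is a partial homomorphism); a short induction on $i$ then shows $(W_i)_{i\in[r]}$ is a Duplicator winning strategy. This reformulation is the one step requiring a small argument; I regard it as the main obstacle, though a minor one, since everything else is bookkeeping.

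It then remains to observe that the displayed statement is decidable in logarithmic space given $(\rela,\relb)$. Given $S_{i-1}$, the possible values of $S_i$ are enumerated in logarithmic space by cycling through all tuples of at most $p_i$ elements of $A$ and adjoining them to $S_{i-1}$; given $S_i$ and $g_{i-1}$, the possible values of $g_i$ are enumerated similarly by cycling through all tuples of at most $p_i$ elements of $B$ as the images of the new points. Whether a map $g$ with $|\dom(g)|\le n$ is a partial homomorphism from $\rela$ to $\relb$ is checked in logarithmic space as well: for each relation symbol $R$ of the common vocabulary and each tuple of $R^\rela$ all of whose components lie in $\dom(g)$---these tuples are read directly off the input, so unbounded arity causes no difficulty---one tests that its $g$-image lies in $R^\relb$. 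Hence the obvious procedure that recurses through the quantifier prefix $S_1,g_1,S_2,g_2,\ldots,S_r,g_r$ has recursion depth $2r=O(1)$ and stores, at each level, one object $S_i$ or $g_i$ of size $O(\log(|A|+|B|))$ together with $O(1)$ loop counters; it therefore runs in logarithmic space, completing the proof.
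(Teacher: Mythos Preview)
Your proposal is correct and follows essentially the same approach as the paper: both exploit that $v$ is fixed, so the game has a constant number of alternations, and reduce the question to evaluating a bounded-depth alternating $\forall/\exists$ condition over tuples of elements of $A$ and $B$, with the innermost test being ``is this a partial homomorphism.'' The paper phrases this as constructing and evaluating a Boolean formula $\bigwedge_{\bar a_1}\bigvee_{\bar b_1}\cdots\bigwedge_{\bar a_r}\bigvee_{\bar b_r} X(\bar a,\bar b)$ under the obvious assignment, while you phrase it as a direct depth-$2r$ recursive search; these are the same argument, and you are slightly more explicit than the paper in justifying the equivalence between the strategy-existence definition and the alternating formulation.
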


\begin{proof}
 Write $v=(p_1,\ldots,p_r)$ and $\ell:=\sum_{i\in[r]} p_i$. For an instance $(\rela,\relb)$ of $\HOM{\fancya}$
consider the following Boolean formula in the variables $X(\bar a,\bar b)$ for $\bar a\in A^{\ell}, b\in B^{\ell}$:
\begin{eqnarray*}
&& \textstyle
\bigwedge_{\bar a_1\in A^{p_1}}\bigvee_{\bar b_1 \in B^{p_1}}\bigwedge_{\bar a_2\in A^{p_2}}\bigvee_{\bar b_2\in B^{p_2}}\cdots
 \bigwedge_{\bar a_r\in A^{p_r}}\bigvee_{\bar b_r\in B^{p_r}} X(\bar a_1\cdots\bar a_r,\bar b_1\cdots \bar b_r).
\end{eqnarray*}
Further, consider the assignment evaluating $X(\bar a_1\cdots\bar a_r,\bar b_1\cdots \bar b_r)$ by 1 or 0
depending on whether $\{(a_i,b_i)\mid i\in[\ell]\}$ is a partial homomorphism from $\rela$ to $\relb$ or not. This assignment satisfies the formula if and 
only if $\rela\vto\relb$. Both the formula and the assignment are computable from $(\rela,\relb)$ in logarithmic space, and so is the truth value.
\end{proof}

\begin{proof}[Proof of Theorem~\ref{thm:logchar}.]
(1 $\Rightarrow$ 2) 
Let $d\ge 1$ bound the tree depth of $\gr(\core(\fancya))$. By Theorem~\ref{thm:pebbles-td}, the 
 $v$-game solves $\HOM{\fancya}$ where $v:=(\underbrace{1,\ldots, 1}_{d\;\mathrm{ times}})$. It follows that
$\HOM{\fancya}=\{(\relb,\rela)\mid\rela\in\fancya\ \&\ \rela\vto\relb\}$. This  is in L by Lemma~\ref{lem:vgame-in-L} and the assumption $\fancya\in\textup{L}$.

(2 $\Rightarrow$ 1) Clearly, (2) implies $\fancya\in\textup{L}$. For contradiction, assume 
$\gr(\core(\fancya))$ has unbounded tree depth. Then $\P\preceq \gr(\core(\fancya))$ by
Lemma~\ref{lemma:placing-G}. By Theorems~\ref{theo:dechom} and~\ref{theo:coregraphhom} (and Lemma~\ref{lem:plred})
we get $\HOMP{\P^*}\le_\pl\HOMP{\fancya}$. But (2) implies $\HOMP{\P^*}\in\textup{para-L}$, and this contradicts 
the assumption $\textup{PATH}\neq\textup{para-L}$ by Proposition~\ref{prop:pathcomplete}.
%
\end{proof}

%
%
%

\section{Model checking existential sentences} \label{sect:existential-sentences}
In this section we
study the complexity of the parameterized model-checking problems associated with sets of (first-order) sentences $\Phi$, namely
$$
\MC{\Phi}:=\big\{(\rela,\numb{\varphi})\mid \varphi\in\Phi\ \&\ \rela\models\varphi\big\}.
$$
Here, $\numb{\varphi}$ is a natural number coding in some straightforward sense the sentence $\varphi$. 
An {\em existential} sentence is one in which the quantifier $\forall$ does not occur and negation symbols $\neg$ appear only in front of atoms. 
A {\em primitive positive} sentence is one built from atoms by means of $\wedge$ and~$\exists$.
For $q,r\in\nats$ let $\Sigma^q_1[r]$ and $\PP{q}{r}$ denote the sets of existential and, respectively, primitive positive sentences of 
quantifier rank at most $q$ where all appearing relation symbols have arity at most $r$. 

\medskip

The goal of this section is to prove the following.

\begin{theorem}\label{theo:modcheck}
Let $q,r\in\nats, q\ge 1,r\ge 2$. Then $$\MC{\Sigma^q_1[r]}\qfeq\MC{\PP{q}{2}}\qfeq\HOMP{\F^*_{q-1}}.$$
\end{theorem}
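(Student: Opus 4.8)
The plan is to establish the cycle of reductions
\[
\HOMP{\F^*_{q-1}}\ \qfred\ \MC{\PP{q}{2}}\ \qfred\ \MC{\Sigma^q_1[r]}\ \qfred\ \HOMP{\F^*_{q-1}}
\]
and to invoke transitivity (Lemma~\ref{lem:trans}); then all three problems are $\qfeq$. The middle reduction is immediate: since $r\ge 2$, every sentence in $\PP{q}{2}$ already lies in $\Sigma^q_1[r]$, so one only recodes the parameter and applies the dimension-$1$ identity interpretation. For the first reduction I would use the Chandra--Merlin correspondence, but present the canonical conjunctive query of $\relf^*$ with controlled quantifier rank: fix a rooting of $\relf\in\F_{q-1}$ of height $\le q-1$ and build $\varphi_{\relf^*}$ by nesting the existential quantifier of each vertex $v$ strictly inside that of its parent $p(v)$, placing the atoms $Ex_{p(v)}x_v$, $Ex_vx_{p(v)}$ and $C_vx_v$ inside; since conjunction does not raise quantifier rank, $\qr(\varphi_{\relf^*})\le\mathrm{height}(\relf)+1\le q$, and $\varphi_{\relf^*}$ uses only the unary symbols $C_v$ and the binary symbol $E$, so $\varphi_{\relf^*}\in\PP{q}{2}$; moreover $\relf^*$ is, up to isomorphism, the canonical structure of $\varphi_{\relf^*}$, so $\relf^*\hto\relb$ iff $\relb\models\varphi_{\relf^*}$. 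Mapping $(\relb,\numb{\relf^*})$ to $(\relb,\numb{\varphi_{\relf^*}})$ is visibly quantifier-free after a precomputation.

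The substance lies in $\MC{\Sigma^q_1[r]}\qfred\HOMP{\F^*_{q-1}}$, and its heart is a normal-form lemma: \emph{every existential sentence $\varphi$ of quantifier rank $\le q$ is logically equivalent to a primitive positive sentence $\hat\varphi$ of quantifier rank $\le q$ over an enlarged vocabulary, all of whose new relation symbols have arity $\le q$ and are defined from the old symbols by quantifier-free formulas} (we then consider only those structures that interpret the new symbols accordingly). This is where negation and disjunction are eliminated: put $\varphi$ in negation normal form and push each $\vee$ downward --- below conjunctions via $(\bigwedge_i\alpha_i)\vee(\bigwedge_j\beta_j)\equiv\bigwedge_{i,j}(\alpha_i\vee\beta_j)$, and below quantifiers by merging the top quantifiers of two existentially quantified disjuncts (after suitable renaming of bound variables, and using nonemptiness of structures to pull a quantifier across a disjunct that is not itself existentially quantified) --- until the remaining disjunctions act only on literals, yielding quantifier-free subformulas; each such subformula is then absorbed into a fresh relation symbol whose arity equals its number of free variables, at most the number of variables in scope and hence $\le q$, and which is defined by that quantifier-free formula. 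One verifies that every step preserves the bound $\qr\le q$, that the result is primitive positive, and that the whole procedure is computable; the size blow-up of $\hat\varphi$ and of the definitions is irrelevant, being a precomputation on the parameter $\varphi$. I expect this lemma --- in particular the bookkeeping showing that the bound $q$ on quantifier rank is genuinely preserved through distribution, quantifier merging, and absorption --- to be the main obstacle. Granting it, $(\rela,\numb\varphi)\mapsto(\rela',\numb{\hat\varphi})$, where $\rela'$ is the quantifier-free definable expansion of $\rela$ interpreting the new symbols, witnesses $\MC{\Sigma^q_1[r]}\qfred\MC{\PP{q}{r'}}$ with $r'=\max(q,r)$.

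It then remains to reduce $\MC{\PP{q}{r'}}$, for an arbitrary constant $r'$, to $\HOMP{\F^*_{q-1}}$. By Chandra--Merlin again, $(\rela,\numb\chi)\mapsto(\rela,\numb{\relb_\chi})$ reduces it to $\HOMP{\fancyc}$, where $\relb_\chi$ is the canonical structure of the pp-sentence $\chi$ and $\fancyc$ is the decidable, bounded-arity class of all finite structures of arity $\le r'$ whose Gaifman graph has tree depth $\le q-1$; indeed $\relb_\chi\in\fancyc$, because reading off the forest of nested existential quantifiers of $\chi$ exhibits a rooted forest of height $\le q-1$ in whose closure $\gr(\relb_\chi)$ is contained. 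Now apply, in turn: Corollary~\ref{cor:core} to get $\HOMP{\fancyc}\qfred\HOMP{\core(\fancyc)^*}$; Lemma~\ref{lem:classtograph} (bounded arity) to get $\HOMP{\core(\fancyc)^*}\qfred\HOMP{\gr(\core(\fancyc))^*}$; and, since $\gr(\core(\relb))$ is a subgraph of $\gr(\relb)$ so that every graph in $\gr(\core(\fancyc))$ still has tree depth $\le q-1$ --- whence, by Proposition~\ref{prop:depth-gives-decomposition} together with Proposition~\ref{prop:decomposition-vs-deconstruction}, it has an $\F_{q-1}$-deconstruction of bounded width and thus $\gr(\core(\fancyc))\preceq\F_{q-1}$ --- Theorem~\ref{theo:dechom} to get $\HOMP{\gr(\core(\fancyc))^*}\qfred\HOMP{\F^*_{q-1}}$. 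Composing everything with Lemma~\ref{lem:trans} closes the cycle and yields the theorem.

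Finally, one should note that every interpretation used along the way --- complementation of relations, the quantifier-free expansions, the identity, and the partial-homomorphism interpretations internal to Lemma~\ref{lem:classtograph} and Theorem~\ref{theo:dechom} --- is quantifier-free, so their compositions (Lemma~\ref{lem:int}, Lemma~\ref{lem:trans}) are quantifier-free of dimension a constant depending only on $q$ and $r$, and each new parameter is computed from the old one alone; hence each reduction in the cycle is indeed quantifier-free after a precomputation, and the claimed $\qfeq$-equivalences follow.
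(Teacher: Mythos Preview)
Your argument is correct, and the normal-form lemma you identify as the crux does go through: the case analysis on good-form $\alpha'\vee\beta'$ (distribute when one side has a top $\wedge$; merge two top existentials into one; pull a single existential across a quantifier-free disjunct) terminates by a size measure and preserves the bound $\qr\le q$ at every step, so every existential sentence of rank $\le q$ becomes primitive positive of rank $\le q$ over quantifier-free definable relations of arity $\le q$. From there your appeal to the Chandra--Merlin structure and to the paper's machinery (Corollaries~\ref{cor:core}, \ref{cor:graph} or directly Theorem~\ref{theo:coregraphhom}, together with Theorem~\ref{theo:dechom}) is exactly right.

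The route, however, is genuinely different from the paper's. The paper pushes $\vee$ \emph{outward}: after replacing negated atoms by complement relations it rewrites an existential sentence as a \emph{disjunction} of primitive positive sentences (Lemma~\ref{lem:oed}), and then invests real work (Lemma~\ref{lem:mainmc}) to reduce a disjunction $\varphi_1\vee\cdots\vee\varphi_k$ to a single homomorphism instance, via a Skolem-style swap of $\exists i\,\forall j$ to $\forall\bar\jmath\,\exists i$, a merging of the component trees at their roots, and a careful construction of target structures $\relb^{\bar\jmath}_i$ with added loops. You push $\vee$ \emph{inward} and absorb it into fresh relation symbols, arriving at a single pp sentence directly; this sidesteps Lemma~\ref{lem:mainmc} entirely, at the modest price of passing to arity $r'=\max(q,r)$ before invoking the bounded-arity machinery. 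Your approach is arguably cleaner; the paper's has the minor advantage that it never enlarges the arity beyond $r$ and keeps the new relations to complements of the old ones.
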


We devide the proof into several lemmas. 

\begin{lemma}\label{lem:pptof}
Let $q\in\nats, q\ge 1$. Then  $\HOMP{\F^*_{q-1}}\qfred \MC{\PP{q}{2}}.$
\end{lemma}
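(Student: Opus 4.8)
The plan is to realise the reduction by the canonical primitive positive sentence of $\relf^*$, nesting the quantifiers along the forest so that the quantifier rank is controlled by the height rather than by the number of vertices. Concretely, as a precomputation on the parameter $\numb{\relf^*}$ I would first fix a rooting of the underlying graph $\relf$ witnessing that $\relf\in\F_{q-1}$, i.e.\ one in which every connected component becomes a rooted tree of height $\le q-1$; such a rooting exists by definition of $\F_{q-1}$ and can be found by examining the finitely many rootings. Along the induced rooted forest I would define, by recursion from the leaves upward, a formula $\psi_a(x)$ with the single free variable $x$ for every vertex $a$: for a leaf put $\psi_a(x):=C_a x$, and if $a$ has children $a_1,\dots,a_m$ put
\[
\psi_a(x)\ :=\ C_a x\ \wedge\ \bigwedge_{i=1}^{m}\exists y_i\,\bigl(E x y_i\wedge E y_i x\wedge\psi_{a_i}(y_i)\bigr),
\]
where both orientations of the edge are listed since $\relf^*$ has a symmetric $E$-relation while the similar input structure need not. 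The sentence associated with $\relf^*$ is then $\varphi_{\relf}:=\bigwedge_{j}\exists x\,\psi_{r_j}(x)$, with the conjunction ranging over the roots $r_j$ of the components of $\relf$. A routine induction gives that $\qr(\psi_a)$ equals the height of the subtree rooted at $a$, hence is at most $q-1$, so $\qr(\varphi_{\relf})\le q$; and $\varphi_{\relf}$ is visibly primitive positive over the vocabulary $\{E\}\cup\{C_a\mid a\in F\}$, whose arity is at most $2$, so $\varphi_{\relf}\in\PP{q}{2}$.

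The heart of the correctness argument is the claim, proved by induction on the subtree rooted at $a$, that for every structure $\relb$ similar to $\relf^*$ and every element $b$ of $\relb$ one has $\relb\models\psi_a(b)$ if and only if the substructure of $\relf^*$ induced on the set of descendants of $a$ (including $a$) admits a homomorphism to $\relb$ mapping $a$ to $b$. The leaf case is immediate from the definition of $C_a$. In the inductive step, the ``only if'' direction glues, over the children $a_i$, the homomorphisms supplied by the induction hypothesis together with the value $b$ at $a$; this glueing is well defined because the children's subtrees are pairwise disjoint, and it preserves all relations because in a tree every edge joins a vertex to its parent. The ``if'' direction conversely restricts a given homomorphism to each child's subtree and reads off the needed witnesses $y_i$. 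Specialising $a$ to the component roots $r_j$, and using that no tuple of any relation of $\relf^*$ joins two distinct components, then yields $\relf^*\hto\relb$ if and only if $\relb\models\varphi_{\relf}$.

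It remains to verify that $(\relb,\numb{\relf^*})\mapsto(\relb,\numb{\varphi_{\relf}})$ is quantifier-free after a precomputation in the sense of Definition~\ref{df:qfred}: the new parameter $\numb{\varphi_{\relf}}$ is produced from $\numb{\relf^*}$ alone by the computable construction above, while the structure component is essentially the identity. Formally I would take the auxiliary structure to be a trivial one-element structure and let the interpretation $d(\numb{\relf^*})$ be the dimension-one quantifier-free interpretation that selects, from $\langle a(\numb{\relf^*}),\relb\rangle$, the copy of $\relb$ marked by $P_2$ and copies over $E$ together with every $C_a$. I expect the only delicate point to be the simultaneous bookkeeping of quantifier rank and of edge orientations in the recursive definition and its correctness proof; the remaining verifications, including that the reduction fits the qfap format, are routine.
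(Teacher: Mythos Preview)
Your proposal is correct and follows essentially the same approach as the paper: define the canonical primitive positive formula recursively along a rooting of the forest so that the quantifier rank equals the height, then map $(\relb,\numb{\relf^*})$ to $(\relb,\numb{\varphi_{\relf}})$. Your version is in fact slightly more careful than the paper's in including both orientations $Exy_i\wedge Ey_ix$ of each edge (the paper writes only $Exy$), which is warranted since the target structure $\relb$ need not interpret $E$ symmetrically.
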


\begin{proof} The proof proceeds by standard means defining a ``canonical query''~\cite{ChandraMerlin77-optimal}. Details follow. Given a tree $\relt\in\T$ and $r\in T$ 
we define a primitive positive formula $\varphi_{\relt,r}(x)$ such that for every structure $\rela$ similar to $\relt^*$ and every $a\in A$
$$
\rela\models\varphi_{\relt,r}(a)\quad\Longleftrightarrow\quad
\textup{there exists a homomorphism } h\textup{ from }\relt^*\textup{ to }\relb\textup{ with }h(r)=a;
$$
moreover, $\qr(\varphi_{\relt,r})=h$ for $h$ the height of $\relt$ when rooted at $r$. We give the definition by induction on $h$. For $h=0$ the tree
$\relt$ consists of one node $r$, and we set $\varphi_{\relt,r}(x)=C_rx$. Otherwise, let $t_1\ldots, t_\ell$ list the neighbors of $r$ in $\relt$. 
For $i\in[\ell]$ let $\relt_{i}$ denote the connected component of $\langle T\setminus\{r\}\rangle^{\relt}$ \mo{$\{t\}\to \{r\}$}
containing~$t_i$.
Then $\relt_{i}$ rooted at $t_i$ has height at most $h-1$, so $\varphi_{\relt_i,t_i}(y)$ is defined and we can set
$$\textstyle
\varphi_{\relt,r}(x):= C_rx\wedge\bigwedge_{i\in[\ell]} \exists y(Exy\wedge \varphi_{\relt_i,t_i}(y)). 
$$

Given an instance $(\rela,\numb{\relf^*})$ of $\HOMP{\F^*_{q-1}}$, the forest $\relf$ is the disjoint union of, say, $c$ many trees
$\relt_1,\ldots,\relt_c\in\T_{q-1}$. For $i\in[c]$ choose a root $r_i\in T_i$ witnessing that $\relt_i$ has height at most $q-1$ and set
$\psi:=\bigwedge_{i\in[c]}\exists x\varphi_{\relt_i,r_i}(x)$. Then $(\rela,\numb{\relf^*})\mapsto(\rela,\numb{\psi})$ is a reduction
as desired.
\end{proof}

Let $\DPP{q}{r}$ denote the set of disjunctions of sentences from $\PP{q}{r}$.

\begin{lemma}\label{lem:oed}
Let $q,r\in\nats$. Then $\MC{\Sigma^q_1[r]}\qfred\MC{\DPP{q}{r}}.$
\end{lemma}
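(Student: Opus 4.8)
The plan is to combine two standard ingredients: a disjunctive-normal-form rewriting on the sentence side, which trades $\vee$ for an outer disjunction while leaving everything else existential, and an expansion of the structure by ``complement relations'', which lets us delete the remaining negated atoms.

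First I would, reading $\varphi\in\Sigma^q_1[r]$ off the parameter $\numb\varphi$, compute an equivalent sentence of the form $\bigvee_{i\in[m]}\psi_i$ in which each $\psi_i$ is built from literals (atoms and negated atoms) using only $\wedge$ and $\exists$. This is done by repeatedly applying, from left to right, the logical equivalences $\exists x(\alpha\vee\beta)\equiv\exists x\alpha\vee\exists x\beta$ and $\gamma\wedge(\alpha\vee\beta)\equiv(\gamma\wedge\alpha)\vee(\gamma\wedge\beta)$ (and its mirror image), which push all disjunctions to the root; as $\varphi$ is existential, no universal quantifier and no non-atomic negation is ever introduced. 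The point to check is that these rewritings preserve quantifier rank: using $\qr(\exists x\chi)=1+\qr(\chi)$ and $\qr(\chi\vee\chi')=\qr(\chi\wedge\chi')=\max\{\qr(\chi),\qr(\chi')\}$ one verifies $\qr(\exists x\alpha\vee\exists x\beta)=\qr(\exists x(\alpha\vee\beta))$ and $\qr((\gamma\wedge\alpha)\vee(\gamma\wedge\beta))=\qr(\gamma\wedge(\alpha\vee\beta))$. Since the relation symbols occurring are also unchanged, every $\psi_i$ satisfies $\qr(\psi_i)\le q$ and uses only symbols of arity $\le r$.

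Next I would remove negated atoms by expanding the structure. Let $\sigma$ be the vocabulary of $\varphi$, and let $\sigma'$ extend $\sigma$ by a fresh symbol $\bar R$ of arity $\ar(R)$ for each $R\in\sigma$ and by a fresh binary symbol $N$. For a $\sigma$-structure $\rela$ let $\rela'$ be the $\sigma'$-structure on the same universe that agrees with $\rela$ on $\sigma$, interprets each $\bar R$ by $A^{\ar(R)}\setminus R^\rela$, and interprets $N$ by $\{(a,b)\in A^2\mid a\neq b\}$; this map is realized by the quantifier-free interpretation of dimension $1$ whose new symbols are given by $\varphi_{\bar R}(\bar x):=\neg R\bar x$ and $\varphi_N(xy):=\neg(x=y)$ (the old symbols and the universe being defined in the obvious way from $\langle a(\numb\varphi),\rela\rangle$). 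It is convenient here to work with $r\ge 2$, so that $N$ has arity $\le r$ — this is the only case needed for Theorem~\ref{theo:modcheck}, and the cases $r\le 1$ call for only routine adjustments. I would then define $\psi_i'$ from $\psi_i$ by replacing every negated atom $\neg Rx_1\cdots x_k$ by $\bar Rx_1\cdots x_k$ and every $\neg(x=y)$ with $x,y$ distinct variables by $Nxy$; if some leaf of $\psi_i$ is of the form $\neg(x=x)$, then $\psi_i$ is unsatisfiable and I would instead set $\psi_i':=\exists z\,Nzz$, which is unsatisfiable and primitive positive. Put $\psi':=\bigvee_{i\in[m]}\psi_i'$, with $\psi':=\exists z\,Nzz$ if all disjuncts were of the unsatisfiable kind. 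Each $\psi_i'$ is now primitive positive with $\qr(\psi_i')\le q$ and uses only symbols of arity $\le\max\{r,2\}=r$, so $\psi'\in\DPP{q}{r}$.

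It then remains to check correctness and that the map is quantifier-free after a pre-computation. Because $\bar R^{\rela'}=A^{\ar(R)}\setminus R^\rela$ and $N^{\rela'}=\{(a,b)\mid a\neq b\}$, the atoms $\bar Rx_1\cdots x_k$ and $Nxy$ are interpreted in $\rela'$ exactly as $\neg Rx_1\cdots x_k$ and $\neg(x=y)$ are in $\rela$, so an easy induction on formula structure gives $\rela\models\psi_i\iff\rela'\models\psi_i'$ for each $i$, and hence $\rela\models\varphi\iff\rela'\models\psi'$. Finally, $(\rela,\numb\varphi)\mapsto(\rela',\numb{\psi'})$ is quantifier-free after a pre-computation: the new parameter $\numb{\psi'}$ is computed from $\numb\varphi$ alone, the pre-computation $a$ may be taken trivial (a fixed one-element structure), and $d(\numb\varphi)$ is the quantifier-free interpretation above, which is computable from $\numb\varphi$ and of fixed dimension. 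I do not expect any genuinely hard step; the only things needing care are the exact preservation of quantifier rank under the normal-form rewriting, keeping the arity at most $r$ when deleting negated equalities (which is why a binary disequality relation, hence $r\ge 2$, is the convenient setting), and the small amount of bookkeeping for unsatisfiable disjuncts.
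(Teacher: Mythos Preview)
Your approach is essentially the same as the paper's: expand the structure by complement relations and rewrite the sentence into a disjunction of primitive positive pieces, checking that quantifier rank is preserved under the rewriting. You are in fact slightly more careful than the paper in explicitly handling negated equalities via a binary disequality relation $N$ (the paper's proof only treats literals $\neg R\bar x$ for $R$ in the vocabulary); your caveat that this needs $r\ge 2$ is a fair observation that the paper's proof also leaves implicit.
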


\begin{proof}
Given an instance $(\rela,\numb{\varphi})$ of $\MC{\Sigma^q_1[r]}$, let $\tau$ denote the vocabulary of~$\varphi$. For each $R\in\tau$ choose a new relation symbol $\overline{R}$ 
of the same arity. The reduction maps $(\rela,\numb{\varphi})$ to $(\rela',\numb{\psi})$ where $\rela'$ 
expands $\rela$ interpreting every new symbol $\overline{R}$ by $A^{\ar(R)}\setminus R^{\rela}$. The sentence $\psi$ is obtained from $\varphi$
by replacing subformulas of the form $\neg R\bar x$ by~$\overline{R}\bar x$, and then moving disjunctions out using the equivalence
$\exists x(\chi\vee\psi)\equiv (\exists x\chi\vee\exists x\psi)$ and de Morgan rules. This preserves the quantifier rank. It is easy to see
that $(\rela,\numb{\varphi})\mapsto (\rela',\numb{\psi})$ is a reduction as desired.
\end{proof}

The following lemma comprises the key step in the proof of Theorem~\ref{theo:modcheck}. It heavily relies on the results 
from Sections~\ref{sec:hierarchy} and~\ref{sec:classification}.

\begin{lemma}\label{lem:canquery}
Let $q,r\in\nats, q\ge 1$. Then $$\MC{\PP{q}{r}}\qfred\HOMP{\F^*_{q-1}}.$$
\end{lemma}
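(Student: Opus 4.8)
The plan is to route the problem through the homomorphism problem of the canonical structures of the sentences in $\PP{q}{r}$, and then to invoke the graph machinery of Sections~\ref{sec:hierarchy} and~\ref{sec:classification}. First I would attach to each $\varphi\in\PP{q}{r}$ a structure $\relb_\varphi$ over the vocabulary of $\varphi$, as follows. Using the equivalences $\exists y(y=x\wedge\psi)\equiv\psi[x/y]$ (and that $\varphi$ is a sentence, so a variable bound by such an inner $\exists y$ is free nowhere else) one may first assume $\varphi$ contains no equality atoms; then rename all bound variables pairwise apart; then let $\relb_\varphi$ have as universe the set of variables of $\varphi$ and interpret each relation symbol $R$ by $\{\bar x\mid R\bar x \text{ is an atom of }\varphi\}$ (and by $\emptyset$ if $R$ does not occur in $\varphi$). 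Removing equalities, conjuncts and quantifiers does not raise the quantifier rank, so $\relb_\varphi$ is still the canonical structure of a sentence in $\PP{q}{r}$, and the Chandra--Merlin correspondence~\cite{ChandraMerlin77-optimal} gives $\rela\models\varphi\Leftrightarrow\relb_\varphi\hto\rela$ for every $\rela$ similar to $\relb_\varphi$. Setting $\fancya:=\{\relb_\varphi\mid\varphi\in\PP{q}{r}\}$, a computably enumerable class of bounded arity~$r$, the map $(\rela,\numb{\varphi})\mapsto(\rela,\numb{\relb_\varphi})$ is thus a reduction from $\MC{\PP{q}{r}}$ to $\HOMP{\fancya}$; it is quantifier-free after a pre-computation, with $p(\numb{\varphi}):=\numb{\relb_\varphi}$, with a trivial $a$, and with $d(\numb{\varphi})$ the dimension-$1$ quantifier-free interpretation that simply reads off the $P_2$-part of $\langle a(\numb{\varphi}),\rela\rangle$, i.e.\ the copy of $\rela$. (The case that $\varphi$ has no variables is trivial, $\relb_\varphi$ then being a single point.)

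The key step is the bound $\gr(\fancya)\preceq\F_{q-1}$. For this I would use the \emph{quantifier forest} $\relf_\varphi$ of $\varphi$: its nodes are the (now pairwise distinct) bound variables --- so its vertex set is exactly the universe of $\relb_\varphi$ --- and $y$ is declared a child of $x$ iff $\exists y$ occurs directly in the matrix of $\exists x$, not inside a further quantifier. Since $\qr(\varphi)\le q$, this rooted forest has height $\le q-1$, so $\relf_\varphi\in\F_{q-1}$. The crucial observation is that if $x$ and $y$ occur together in an atom of $\varphi$, then both $\exists x$ and $\exists y$ have that atom within their scope, hence in the parse tree both are ancestors of that occurrence, hence $x$ and $y$ are comparable in $\relf_\varphi$. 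Therefore the family $(B_h)_{h}$ defined by $B_h:=\{h'\mid h' \text{ is an ancestor of } h \text{ in } \relf_\varphi\}$ is an $\relf_\varphi$-deconstruction of $\gr(\relb_\varphi)$: for an edge $\{g,g'\}$ of $\gr(\relb_\varphi)$, with say $g'$ an ancestor of $g$, one has $\{g,g'\}\subseteq B_g$, which gives coverage (the diagonal pairs being trivial); and $\{h\mid g\in B_h\}$ is the subtree of $\relf_\varphi$ rooted at $g$, which gives connectivity. Its width is $\le q$, since $|B_h|\le q$ and $B_h\subseteq B_{h'}$ whenever $h$ is the parent of $h'$. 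Hence each $\gr(\relb_\varphi)$ has an $\F_{q-1}$-deconstruction of width $\le q$, i.e.\ $\gr(\fancya)\preceq\F_{q-1}$.

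Finally I would assemble the reductions by transitivity (Lemma~\ref{lem:trans}). By Theorem~\ref{theo:coregraphhom} we have $\HOMP{\fancya}\qfeq\HOMP{\gr(\core(\fancya))^*}$. Since taking the core yields a substructure, each $\gr(\core(\relb_\varphi))$ is (isomorphic to) a subgraph, hence a minor, of $\gr(\relb_\varphi)$; thus $\gr(\core(\fancya))\subseteq\minors(\gr(\fancya))$, and by Proposition~\ref{prop:equivalence-with-minors} together with the previous paragraph, $\minors(\gr(\fancya))\equiv\gr(\fancya)\preceq\F_{q-1}$, so $\gr(\core(\fancya))\preceq\F_{q-1}$. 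Theorem~\ref{theo:dechom} now gives $\HOMP{\gr(\core(\fancya))^*}\qfred\HOMP{\F^*_{q-1}}$. Chaining $\MC{\PP{q}{r}}\qfred\HOMP{\fancya}\qfeq\HOMP{\gr(\core(\fancya))^*}\qfred\HOMP{\F^*_{q-1}}$ completes the proof.

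The main obstacle is the middle paragraph: setting up the quantifier forest and the normalization of $\varphi$ (elimination of equality atoms, renaming apart) carefully enough that the scoping argument really forces Gaifman-adjacent variables to be comparable, so that $(B_h)_h$ is a legitimate deconstruction of width bounded by $q$. Everything else is either the standard canonical-query construction or a routine composition of reductions already established in the paper.
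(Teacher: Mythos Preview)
Your proposal is correct and follows essentially the same approach as the paper: build the Chandra--Merlin canonical structure, observe that the quantifier forest $\relf_\varphi$ has height $\le q-1$ and that Gaifman-adjacent variables are comparable in it, and conclude $\gr(\fancya)\preceq\F_{q-1}$ via the ancestor-bag decomposition. The only cosmetic difference is the final chaining: the paper routes directly through $\HOMP{\fancya}\qfred\HOMP{\fancya^*}\qfred\HOMP{\gr(\fancya)^*}$ using Lemma~\ref{lem:classtograph} (so it never needs to pass to cores), whereas you invoke Theorem~\ref{theo:coregraphhom} and then argue $\gr(\core(\fancya))\preceq\F_{q-1}$ via minors---both routes work.
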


\begin{proof}
Given an instance $(\relb,\numb{\varphi})$ of $\MC{\PP{q}{r}}$ we can assume 
that the existential quantifiers in $\varphi$ quantify pairwise distinct variables.
Following a known construction of \cite{ChandraMerlin77-optimal} define a structure $\rela_\varphi$ interpreting the vocabulary $\tau$ of $\varphi$ as follows. Its universe
is the set of variables of $\varphi$ and a relation $R\in\tau$ is interpreted by those tuples~$\bar x$ such that the atom $R\bar x$ appears in $\varphi$.
Then
$$
\relb\models \varphi\quad\Longleftrightarrow\quad\rela_\varphi\hto\relb.
$$
This shows $\MC{\PP{q}{r}}\qfred \HOMP{\fancya}$ where $\fancya:=\big\{\rela_\varphi\mid \varphi\in \PP{q}{r} \big\}$. Note that $\fancya$ is decidable and of bounded arity. We claim that
$\HOMP{\fancya}\qfred\HOMP{\F^*_{q-1}}$. 

We have $\HOMP{\fancya}\qfred\HOMP{\fancya^*}$ trivially, and  $\HOMP{\fancya^*}\qfred\HOMP{\gr(\fancya)^*}$ by Lemma~\ref{lem:classtograph}, so 
it suffices to show $\HOMP{\gr(\fancya)^*}\qfred\HOMP{\F^*_{q-1}}$. Applying Theorem~\ref{theo:dechom} it suffices to show
$\gr(\fancya)\preceq\F_{q-1}$.

We argue similarly as in~\cite[Theorem~3.12]{ChenMueller13-fineclass-arxiv}.
Define a graph on the universe  of $\rela_\varphi$, i.e.\ the variables of $\varphi$, by putting an edge between $x$ and $y$ if an only if $\exists x$ and 
$\exists y$ are consecutive quantifiers
in the formula tree of $\varphi$. This graph is a forest $\relf_\varphi$ of height at most $q-1$ when we root each component by the first-quantified variable in it, i.e. the 
variable which is closest
to the root in the formula tree of $\varphi$. 

The closure of $\relf_\varphi$ contains $\gr(\rela_\varphi)$: if $(x,y)\in E^{\gr(\rela_\varphi)}$ then there 
exists $R\in\tau$ and a tuple $\bar x$ of variables containing $x$ and $y$ such that $R\bar x$ appears in $\varphi$; since $\varphi$ is a sentence 
every variable in $\bar x$ and especially $x$ and $y$ are quantified in $\varphi$, so $\exists x$ and $\exists y$ appear on some branch in the formula tree of $\varphi$; 
this means there is a path from $x$ to $y$ in $\relf_\varphi$, so $(x,y)$ is in the closure of $\relf_\varphi$.

It follows from  Propositions~\ref{prop:depth-gives-decomposition}
and~\ref{prop:decomposition-vs-deconstruction} that
$\gr(\rela_\varphi)$ has an $\relf_\varphi$-deconstruction of width at most $q-1$. 
\end{proof}

\begin{remark}\label{rem:nonempty}
Recall that in general we allow our reductions $r$ to output $(\emptyset,k)$ for some~$k$. Without loss of 
generality  this does not happen when reducing to a problem of the form
$\HOMP{\fancya^*}$ for some class of structures $\fancya$. 
Namely, assume $r$ on $(\relb,k)$ outputs $(\relb',\numb{\rela^*})$ and $\relb'$ is 
possibly $\emptyset$; change $r$ to output instead of~$\relb'$ the disjoint union of $\relb'$ and one point that 
does not have any of the colours~$C_a,a\in A, $ of $\rela^*$.
\end{remark}

\begin{lemma}\label{lem:mainmc}
Let $q\ge 1,r\ge 2$. Then $\MC{\DPP{q}{r}}\qfred\HOMP{\F^*_{q-1}}.$
\end{lemma}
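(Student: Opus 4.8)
The plan is to reduce $\MC{\DPP{q}{r}}$ to $\MC{\PP{q}{r'}}$ for the constant $r':=\max\{q,r\}$ and then compose with Lemma~\ref{lem:canquery} via transitivity (Lemma~\ref{lem:trans}). The core is a syntactic transformation $\mathrm{Flat}$ which turns a disjunction $D=\bigvee_i\varphi_i$ of sentences from $\PP{q}{r}$ into a single \emph{primitive positive} sentence $\psi=\mathrm{Flat}(D)$ of quantifier rank at most $q$, over the common vocabulary $\tau$ of the $\varphi_i$ enlarged by finitely many fresh relation symbols, each of arity at most $q$ and each equipped with a quantifier-free positive Boolean $\tau$-formula $\chi_T$, such that for every $\tau$-structure $\relb$ one has $\relb\models D$ iff $\relb^{+}\models\psi$, where $\relb^{+}$ expands $\relb$ by interpreting each fresh $T$ as $\{\bar b\mid\relb\models\chi_T(\bar b)\}$. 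Given $\mathrm{Flat}$, the reduction sends $(\relb,\numb{D})$ to $(\relb^{+},\numb{\psi})$; since $\psi$ and the $\chi_T$ are computable from the parameter $\numb{D}$, the passage $\relb\mapsto\relb^{+}$ is a dimension-$1$ quantifier-free interpretation computable from the parameter, so the reduction is quantifier-free after a pre-computation.

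I would define $\mathrm{Flat}$ by recursion on total formula size. If some disjunct of $D$ is a conjunction $\psi_a\wedge\psi_b$, distribute it out, $D\equiv(D'\vee\psi_a)\wedge(D'\vee\psi_b)$ with $D'$ the disjunction of the remaining disjuncts, and return $\mathrm{Flat}(D'\vee\psi_a)\wedge\mathrm{Flat}(D'\vee\psi_b)$. Otherwise every disjunct is an atom or of the form $\exists x_i\,\psi_i$; if at least one has the latter form, rename all such outermost variables $x_i$ to one common fresh variable $x$ (writing $\psi_i'$ for the resulting body), use $\exists x\,\alpha\vee\exists x\,\beta\equiv\exists x(\alpha\vee\beta)$ together with $x$ being absent from the atomic disjuncts to see that $D\equiv\exists x\bigl(\bigvee_i\psi_i'\vee\bigvee_j\varphi_j\bigr)$ (the second disjunction over the atomic disjuncts), and return $\exists x\,\mathrm{Flat}\bigl(\bigvee_i\psi_i'\vee\bigvee_j\varphi_j\bigr)$. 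In the remaining case $D$ is a disjunction of atoms: return the atom itself if there is only one, and otherwise, since quantifier rank at most $q$ is maintained along the recursion so that at most $q$ variables are ever in scope, return $T(\bar u)$ for a fresh symbol $T$ of arity $|\bar u|\le q$, where $\bar u$ lists the free variables of $D$ and $\chi_T$ is $D$. Every recursive call strictly decreases total size, so $\mathrm{Flat}$ is well-defined and computable; equivalence with $D$ over the expanded structures, primitive positivity of $\psi$, and the arity bound $\max\{q,r\}$ are then routine to check.

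The delicate point — and the reason one aligns the outermost quantifiers of the disjuncts to a common variable rather than extracting them one disjunct at a time — is the bound $\qr(\mathrm{Flat}(D))\le\max_i\qr(\varphi_i)=\qr(D)$: naively rewriting $\exists x\,\theta\vee\psi$ as $\exists x(\theta\vee\psi)$ can raise the quantifier rank when $\psi$ outranks $\exists x\,\theta$, whereas merging all present outermost quantifiers uses $1+\max_i\qr(\psi_i)=\max_i(1+\qr(\psi_i))$ and does not. I would prove the bound by induction on size: the conjunction case takes a maximum of the two sub-results, each at most $\max\{\qr(D'),\qr(\psi_a\wedge\psi_b)\}\le\qr(D)$; the existential case has a body of rank $\qr(D)-1$ (some disjunct contributed a quantifier and atoms have rank $0$), so $\mathrm{Flat}(D)$ has rank $1+\qr(\mathrm{Flat}(\text{body}))\le\qr(D)$, using $\qr(D)\ge 1$; and the atom-disjunction base case has rank $0$. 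Finally, since $\psi\in\PP{q}{\max\{q,r\}}$ and $q\ge 1$, Lemma~\ref{lem:canquery} gives $\MC{\PP{q}{\max\{q,r\}}}\qfred\HOMP{\F^*_{q-1}}$, and composing the two reductions with Lemma~\ref{lem:trans} yields $\MC{\DPP{q}{r}}\qfred\HOMP{\F^*_{q-1}}$.
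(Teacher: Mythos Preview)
Your argument is correct and takes a genuinely different route from the paper. The paper never reduces $\DPP{q}{r}$ back to a primitive-positive model-checking problem; instead it passes each disjunct $\varphi_i$ through Lemma~\ref{lem:canquery} to obtain a forest $\relf_i$ and a structure $\relb_i$, then uses a Skolemisation step to swap the quantifier order in $\exists i\,\forall j:\relt^*_{ij}\hto\relb_i$ to $\forall\bar\jmath\,\exists i:\relt^*_{ij_i}\hto\relb_i$, and finally, for each choice function $\bar\jmath$, merges the selected trees $\relt_{1j_1},\ldots,\relt_{kj_k}$ at a common root while padding each $\relb_i$ with loops so that the other subtrees can be trivially mapped. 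The output forest is the disjoint union of these merged trees over all $\bar\jmath$, and $\relb'$ is the disjoint union of the correspondingly padded structures; all of this is then shown to be quantifier-free after a pre-computation by an explicit interpretation of dimension $w+2$.

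Your approach is more syntactic and more modular: you absorb disjunctions of atoms into fresh relation symbols of arity at most $q$ (since at most $q$ variables are ever in scope), interpret those symbols in $\relb$ by quantifier-free positive Boolean combinations, and thereby land in $\MC{\PP{q}{\max\{q,r\}}}$, after which Lemma~\ref{lem:canquery} finishes the job. The crucial point you identified --- merging \emph{all} outermost existentials simultaneously rather than pulling them out one at a time --- is exactly what keeps the quantifier rank at $q$, and your invariant $\qr+\text{(number of free variables)}\le q$ is the right way to bound the arities of the fresh symbols. Your reduction is arguably cleaner and factors through an existing lemma, at the cost of passing through an intermediate vocabulary of arity $\max\{q,r\}$; the paper's construction stays at arity~$2$ throughout but requires the somewhat delicate root-merging and loop-padding argument together with the Skolemisation of $\exists\forall$ into $\forall\exists$.
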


\begin{proof}
 Let an instance $(\relb,\numb{\varphi_1\vee\cdots\vee\varphi_k})$ of $\MC{\DPP{q}{r}}$ be given, where $k\ge 1$ and $\varphi_i\in\PP{q}{r}$ for all $i\in[k]$.
For $i\in[k]$ let $(\relb_i,\numb{\relf^*_i})$ be the  output of the reduction of the previous lemma. Then
 $\relf_i\in\F_{q-1}$, so $\relf_i$ is a disjoint union of, say, $c_i$ many trees $\relt_{i1},\ldots,\relt_{ic_i}\in\T_{q-1}$. 
By the Remark~\ref{rem:nonempty} we can assume that all $\relb_i$ are structures, i.e.\ different from $\emptyset$.
Then
\begin{eqnarray}\nonumber
\textstyle \relb\models \varphi_1\vee\cdots\vee\varphi_k
&\Longleftrightarrow&\exists i\in[k]:\relb\models\varphi_i\\\nonumber
&\Longleftrightarrow&\exists i\in[k]: \relf^*_i\hto\relb_i\\\nonumber
&\Longleftrightarrow&\exists i\in[k]\ \forall j\in[c_i]: \relt^*_{ij}\hto\relb_i\\\label{eq:skolemtuple}
&\Longleftrightarrow&\textstyle \forall\bar\jmath\in\prod_{i\in[k]}[c_i]\ \exists i\in[k]:\relt^*_{ij_i}\hto\relb_i,
\end{eqnarray}
where we write $\bar\jmath=j_1\cdots j_{k}$. We can assume that the $\relt_{ij}$'s are pairwise disjoint and for each of them a root is chosen witnessing 
that it has height at most $q-1$. Fix some  $\bar\jmath\in\prod_{i\in[k]}[c_i]$. Define the tree $\relt_{\bar\jmath}$ as follows: take the (disjoint) union of the trees
$\relt_{1j_1},\ldots,\relt_{kj_k}$ and then merge their roots to a new node $r$. Then $r$ witnesses that $\relt_{\bar\jmath}\in\T_{q-1}$ and
all $\relt_{ij_i},i\in[k],$ are subtrees of $\relt_{\bar\jmath}$ pairwise intersecting precisely in $r$. We define a structure $\relb^{\bar\jmath}$ such that
\begin{eqnarray}\label{eq:bigt}
\textstyle  \exists i\in[k]:\relt^*_{ij_i}\hto\relb_i&\Longleftrightarrow&\relt^*_{\bar\jmath}\hto\relb^{\bar\jmath}.
\end{eqnarray}
The structure $\relb^{\bar\jmath}$ is the disjoint union of the structures 
$\relb^{\bar\jmath}_{i}, i\in[k],$ defined next. To ensure~(\ref{eq:bigt}) it suffices that  $\relb^{\bar\jmath}_{i}$ satisfies
\begin{eqnarray}\label{eq:bibarj}
 \relt^*_{\bar\jmath}\hto\relb^{\bar\jmath}_{i}\quad\Longleftrightarrow\quad\relt^*_{ij_i}\hto\relb_i.
\end{eqnarray}
Indeed, assuming (\ref{eq:bibarj}) for every $i\in[k]$ we derive (\ref{eq:bigt}) as follows. If $\relt^*_{ij_i}\hto\relb_i$ for some~$i$, then
$ \relt^*_{\bar\jmath}\hto\relb^{\bar\jmath}_{i}$ for such an $i$ by~(\ref{eq:bibarj}) and, clearly, this implies $\relt^*_{\bar\jmath}\hto\relb^{\bar\jmath}$.
 Conversely, assume  $\relt^*_{\bar\jmath}\hto\relb^{\bar\jmath}$. Since  $\relt^*_{\bar\jmath}$ is connected,
 any homomorphism from $\relt^*_{\bar\jmath}$ to 
$\relb^{\bar\jmath}$ has image in some~$\relb^{\bar\jmath}_{i}$, i.e.\ $\relt^*_{\bar\jmath}\hto\relb^{\bar\jmath}_{i}$ for some $i$; for such an $i$
then (\ref{eq:bibarj}) implies $\relt^*_{ij_i}\hto\relb_i$. 

Given $\relb_i$ it is not hard to define $\relb^{\bar\jmath}_{i}$ such that (\ref{eq:bibarj}) is satisfied. 
First forget all interpretations of symbols outside the vocabulary of $\relt^*_{\bar\jmath}$: these are all colours $C_t$ for $t\in F_i\setminus T_{ij_i}$ as 
well as $C_{r_i}$ where  $r_i$ is the root chosen for $\relt_{ij_i}$. To get a structure in the vocabulary of $\relt^*_{\bar\jmath}$, we add interpretations
of $C_t$ for $t\in T_{\bar\jmath}\setminus T_{ij_i}$: all these $C_t$ are interpreted by $C_{r_i}^{\relb_i}$. Finally, we add loops
to the elements in $C_{r_i}^{\relb_i}$, i.e. set
$$
E^{\relb^{\bar\jmath}_{i}}:=E^{\relb_i}\cup\big\{(b,b)\mid b\in C_{r_i}^{\relb_i}\big\}.
$$
Then (\ref{eq:bibarj}) is straightforwardly verified, and thus we know~(\ref{eq:bigt}). 

Finally, let $\relf$ be the disjoint union of the trees $\relt_{\bar\jmath}, \bar\jmath\in\prod_{i\in[k]}[c_i]$, that is, 
make disjoint copies of the
$\relt_{\bar\jmath}$'s and then take the union. For every $\bar\jmath$ every node $t\in T_{\bar\jmath}$ has a copy $t(\bar\jmath)$ in $F$.
We adapt $\relb^{\bar\jmath}$ by renaming $C_t$ by $C_{t(\bar\jmath)}$, more precisely, let $\tilde\relb^{\bar\jmath}$ be the 
structure with universe $B^{\bar\jmath}$ interpreting $E$ by $E^{\relb^{\bar\jmath}}$ and $C_{t(\bar\jmath)}$ by $C^{\relb^{\bar\jmath}}_t$
for $t\in T_{\bar\jmath}$. This structure is the disjoint union of $\tilde\relb^{\bar\jmath}_i, i\in[k],$ where $\tilde\relb^{\bar\jmath}_i$
is obtained by analoguous renaming applied to~$\relb^{\bar\jmath}_i$.

Then the disjoint union $\relb'$ of the
$\tilde\relb^{\bar\jmath}, \bar\jmath\in\prod_{i\in[k]}[c_i],$ is similar to $\relf^*$, and
\begin{eqnarray*}
\textstyle \relb\models \varphi_1\vee\cdots\vee\varphi_k&\Longleftrightarrow&\textstyle\forall\bar\jmath\in\prod_{i\in[k]}[c_i]: \relt^*_{\bar\jmath}\hto\relb^{\bar\jmath}\\
&\Longleftrightarrow& \relf^*\hto\relb'. 
\end{eqnarray*}
The first equivalence follows from (\ref{eq:bigt}) and (\ref{eq:skolemtuple}), and the second  is trivial. 

It is clear that $\numb{\relf^*}$ can be computed from $\numb{ \varphi_1\vee\cdots\vee\varphi_k}$. We show how to get
 a structure isomorphic to $\relb'$ 
by a suitable interpretation from $\langle \rela,\relb\rangle$
where $\rela$ is a suitable auxiliary structure. 
Choose $(p,a,d)$ for the reduction used above, i.e.\ the one producing $(\relb_i,\numb{\relf^*_i})$ from $(\relb,\numb{\varphi_i})$ 
for every $i\in[k]$. 
Choose $w\in\nats$ such that $d$ maps numbers to interpretations of dimension $w$. For $i\in[k]$ write 
$I_i:=d(\numb{\varphi_i})$ and note $I_i(\langle a(\numb{\varphi_i}),\relb\rangle)=\relb_i$.
We define our auxiliary structure to be 
$$
\textstyle \rela:=\big\langle \ a(\numb{\varphi_1}),\ldots, a(\numb{\varphi_k}),\ ([k]\times\prod_{i\in[k]}[c_i])^*\ \big\rangle.
$$
Note $([k]\times\prod_{i\in[k]}[c_i])$ is the structure interpreting the empty language on the universe $[k]\times\prod_{i\in[k]}[c_i]$.
In the following we let $i$ range over $[k]$ and $\bar\jmath$ over $\prod_{i\in[k]}[c_i]$. 

There is a (quantifier-free) interpretation of dimension 1 which produces an isomorphic copy of
$\langle a(\numb{\varphi_i}),\relb \rangle$ from $\langle\rela,\relb\rangle$.
Composing (see Claim 2 in the proof of Lemma~\ref{lem:trans}) with $I_i$ gives an interpretation $I'_i$ of dimension $w$ such 
that $I'_i(\langle\rela, \relb\rangle)\cong\relb_i$.
Further, there is an interpretation of dimension 1 producing $\relb^{\bar\jmath}_{i}$ from $\relb_i$. 
Composing with $I'_i$  gives an
interpretation $I^{\bar\jmath}_{i}$ of dimension $w$
such that $I^{\bar\jmath}_{i}(\langle\rela, \relb\rangle)\cong\relb^{\bar\jmath}_{i}$. As $\relb^{\bar\jmath}_{i}$ 
is obtained by renaming colours
we get $\tilde I^{\bar\jmath}_{i}$ with $\tilde I^{\bar\jmath}_{i}(\langle\rela, \relb\rangle)\cong\tilde\relb^{\bar\jmath}_{i}$.
 The structure we want to produce is the disjoint union of $(\tilde I^{\bar\jmath}_{i}(\langle\rela, \relb\rangle))_{i\bar\jmath}$. 
 We use the following general claim.

\medskip

\noindent{\em Claim:} Let $J_1,\ldots, J_\ell$ be quantifier-free interpretations of dimension $w$. Then
there is a quantifier-free interpretation $J$ of dimension $w+1$ such that $J(\langle([\ell])^*,\rela\rangle)$ is 
defined whenever all $J_i(\rela),i\in[\ell],$ are 
defined and $\neq\emptyset$, and then $J(\langle([\ell])^*,\rela\rangle)$ is isomorphic to the disjoint union of $J_i(\rela),i\in[\ell]$.

\medskip

\noindent{\em Proof of Claim:} Note $J_i\circ\textit{Pr}_2$ is quantifier-free, has dimension $w$ and produces $J_i(\rela)$ 
from $\langle([\ell])^*,\rela\rangle$ (see Claim 1 in the proof of  Lemma~\ref{lem:trans}). Write $(\varphi^j_{R})_R$ for
this interpretation and let $\bar x_i$ range over $w$-tuples of variables. 

Define $J:=(\psi_R)_R$ as follows. 
\begin{eqnarray*}
\psi_U(y_1\bar x_1)&:=&\textstyle \bigvee_{j\in[\ell]}(C_jy_1\wedge P_1(y_1)\wedge \varphi^j_U(\bar x_1)), \\ 
\psi_=(y_1\bar x_1,y_2\bar x_2)&:=&\textstyle \bigvee_{j\in[\ell]}(C_j(y_1)\wedge C_j(y_2)\wedge\varphi^j_=(\bar x_1,\bar x_2)),\\
\psi_R(y_1\bar x_1,\ldots, y_{\ar(R)}\bar x_{\ar(R)})&:=&\textstyle 
\bigvee_{j\in[\ell]}(\varphi^j_R(\bar x_1,\ldots,\bar x_{\ar(R)})\wedge\bigwedge_{i\in[\ar(R)]}C_j(y_i));
\end{eqnarray*}
where we understand that $\varphi^j_R$ is some inconsistent formula if $R$ is not in the output vocabulary of $J_j$.
\hfill$\dashv$\medskip

To finish the proof, the disjoint union of $(\tilde I^{\bar\jmath}_{i}(\langle\rela , \relb\rangle))_{i\bar\jmath}$ is 
produced by first producing $\langle ([k]\times\prod_{i\in[k]}[c_i])^* , \langle\rela,\relb\rangle\rangle$
from $\langle\rela,\relb\rangle$ and composing with an interpretation $J$ chosen 
for the $\tilde I^{\bar\jmath}_{i}$'s according to the claim.
\end{proof}

\begin{proof}[Proof of Theorem~\ref{theo:modcheck}.] Applying Lemmas~\ref{lem:oed}, \ref{lem:mainmc} and \ref{lem:pptof} in row, we get
\begin{eqnarray*}
&& \MC{\Sigma^q_1[r]}\qfred\MC{\DPP{q}{r}} \qfred\HOMP{\F^*_{q-1}}\qfred \MC{\PP{q}{2}}.
\end{eqnarray*}
Noting $\MC{\PP{q}{2}}\subseteq \MC{\Sigma^q_1[r]}$, Theorem~\ref{theo:modcheck} follows.
\end{proof}


\end{document}